\documentclass[11pt]{article}
\usepackage[margin=.8in,left=.8in]{geometry}
\usepackage{amsthm}
\usepackage{amssymb}
\usepackage{amsmath}
\usepackage{mathtools}
\usepackage{adjustbox}

\usepackage[table]{xcolor}

\usepackage{stmaryrd}    
\usepackage{xfrac}
\usepackage{enumitem}\setlist{nosep}

\usepackage[safe]{tipa} 

\usepackage{hhline}

\usepackage{tensor}

\usepackage{tikz}
\usetikzlibrary{
  backgrounds, 
  decorations.pathmorphing, 
  decorations.markings,
  decorations.text
}
\usetikzlibrary{cd}

\usepackage{mathrsfs} 
\DeclareMathAlphabet{\mathpzc}{OT1}{pzc}{m}{it} 

\usepackage{hyperref}
\hypersetup{
  colorlinks = true,
  linkcolor= darkblue, citecolor=darkblue            
}

\hypersetup{
        colorlinks=true,
        linkcolor=darkblue,
        filecolor=darkblue,      
        urlcolor=black,
        citecolor=darkblue,
        linktoc=page
    }

\usepackage{cleveref}

\DeclareRobustCommand{\rchi}{{\mathpalette\irchi\relax}}
\newcommand{\irchi}[2]{\raisebox{\depth}{$#1\chi$}} 

\definecolor{darkblue}{rgb}{0.05,0.25,0.65}
\definecolor{darkgreen}{RGB}{20,140,10}
\definecolor{lightgray}{rgb}{0.9,0.9,0.9}
\definecolor{darkorange}{RGB}{200,100,5}
\definecolor{darkyellow}{rgb}{.91,.91,0}
\definecolor{orangeii}{RGB}{200,100,5}
\definecolor{lightblue}{RGB}{243, 250, 255}

\newtheorem{theorem}{Theorem}[section]
\newtheorem{claim}{Claim}[section]
\newtheorem{lemma}[theorem]{Lemma}

\newtheorem{proposition}[theorem]{Proposition}
\newtheorem{corollary}[theorem]{Corollary}
\theoremstyle{definition}

\newtheorem{remark}[theorem]{Remark}

\usepackage{accents}
\newlength{\dhatheight}

\let\PLAINthebibliography\thebibliography
\renewcommand\thebibliography[1]{
  \PLAINthebibliography{#1}
  \setlength{\parskip}{0.5pt}
  \setlength{\itemsep}{0.5pt plus .3ex}
}

\newcommand{\ten}{1\!0}

\newcommand{\shape}{
  \raisebox{1pt}{\rm\normalfont\textesh}
}

\newcommand{\differential}{\mathrm{d}}

\newcommand\bos[1]{\mathstrut\mkern2.5mu#1\mkern-14mu\raise1.7ex%
  \hbox{$\scriptstyle\rightsquigarrow$}}

\newcommand\bosonic[1]{\mathstrut\mkern2.5mu#1\mkern-14mu\raise1.7ex%
  \hbox{$\scriptstyle\rightsquigarrow$}}

\usepackage[cmtip,all]{xy}
\newcommand{\longsquiggly}{\xymatrix{{}\ar@{~>}[r]&{}}}

\usepackage{bbm} 

\usepackage[new]{old-arrows}   

\newcommand{\FR}{\mathbb{R}}

\newcommand{\dd}{\mathrm{d}}

\newcommand{\even}{\mathrm{even}}

\newcommand{\odd}{\mathrm{odd}}

\newcommand{\om}{\omega}

\newcommand{\CX}{\mathcal{X}}

\begin{document}

\setlength{\abovedisplayskip}{3pt}
\setlength{\belowdisplayskip}{3pt}
\setlength{\abovedisplayshortskip}{-3pt}
\setlength{\belowdisplayshortskip}{3pt}

\title{
Flux Quantization on 10D Type IIA Superspace \\
via Cyclification from 11D }

\author{
  Grigorios Giotopoulos${}^{\ast}$,
  \;\;
  Hisham Sati${}^{\ast \dagger}$,
  \;\;
}

\maketitle

\thispagestyle{empty}

\begin{abstract}  We produce the dimensional reduction to 10D IIA supergravity (SuGra), via cyclification, of 
  the remarkable result that full 11D SuGra is put on shell just by imposing the duality-symmetric Bianchi identities on C-field super-flux densities over supertorsion-free superspace.

\smallskip 
  Generally, we highlight that when duality-symmetric superspace Bianchi identities are characterized by Whitehead bracket $L_\infty$-algebras $\mathfrak{l}\mathcal{A}$ of a classifying space $\mathcal{A}$,  
  their dimensional reduction is characterized by the cyclic loop space $\mathrm{Cyc}(\mathcal{A})$. We promote this to a general mechanism of dimensional reduction on super-spacetime, compatible with the global (infrared) completion of supergravity theories by flux quantization in non-abelian cohomology with coefficients in $\mathcal{A}$ and $\mathrm{Cyc}(\mathcal{A})$, respectively.

\smallskip 
  In the case of 11D SuGra, the characteristic $L_\infty$-algebra is $\mathfrak{l}S^4$ and hence we obtain that full on-shell 10D IIA SuGra is equivalent to $\mathfrak{l}\mathrm{Cyc}(S^4)$-Bianchi imposed identities on NS/RR super-flux densities over supertorsion-free 10D super-spacetime. This implies that any space which is $\mathbb{R}$-rationally equivalent to $\mathrm{Cyc}(S^4)$ classifies an admissible flux quantization law, which provides a global completion of 10D IIA SuGra that admits oxidation to 11D.
\end{abstract}

\vspace{1cm}

\begin{center}
\begin{minipage}{15cm}
\tableofcontents
\end{minipage}
\end{center}

\medskip

\vfill

\hrule
\vspace{5pt}

{
\footnotesize
\noindent
\def\arraystretch{1}
\tabcolsep=0pt
\begin{tabular}{ll}
${}^*$\,
&
Mathematics, Division of Science; and
\\
&
Center for Quantum and Topological Systems,
\\
&
NYUAD Research Institute,
\\
&
New York University Abu Dhabi, UAE.  
\end{tabular}
\hfill
\href{https://ncatlab.org/nlab/show/Center+for+Quantum+and+Topological+Systems}{
\adjustbox{raise=-15pt}{
\includegraphics[width=3cm]{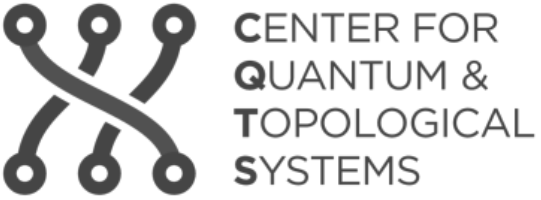}
}
}

\vspace{1mm} 
\noindent ${}^\dagger$\ Courant Institute for Mathematical Sciences, NYU, NY

\vspace{.2cm}

\noindent
The authors acknowledge the support by {\it Tamkeen} under the 
{\it NYU Abu Dhabi Research Institute grant} {\tt CG008}.
}

\newpage

\section{Introduction \& Overview}\label{IntroductionSection}

The existence of supersymmetry in eleven dimensions \cite{Nahm78} makes possible the supersymmetric (Lagrangian) formulation 
of 11-dimensional supergravity \cite{CJS78}(reviewed in \cite{MiemiecSchnakenburg06}). 
The classical local (coordinate)  dimensional reduction on a circle leads to 10-dimensional type IIA supergravity  \cite{CW84}\cite{GP84} \cite{HN85}. 
Globally, this reduction has also been accomplished at the level of non-trivial bundles and in a coordinate-independent manner in the second-order (metric) formulation \cite{FOS03}\cite{MaS04} (see also 
\cite{Jose01}). Nevertheless, the dimensional reduction of the global nature of the higher gauge fields of 11D supergravity has not been fully taken into account. As we will see, 
the proper way to discuss the latter, in a manner compatible with the full dynamical supergravity field equations, is via the formalism of ``superspace supergravity''.

\smallskip 
Superspace formulations of 11D supergravity proceed by ``solving the Bianchi identities'' \cite{CF80}\cite{BrinkHowe80} (see also \cite{Howe97}\cite{CGNT05}\cite{EEC12}) on 11-dimensional supermanifolds, under the demand of certain further conditions on the super-coframe expansions of the torsion and the superized flux fields. Such 
superspace formulations allow, in particular, \cite[\S III.8.5]{CDF91} for an \textit{on-shell duality-symmetric form} of the theory 
(cf. \cite{DF82}\cite{BBS98}). A rigorous account along these lines, phrased in a modern geometrical language suitable for the purpose of flux quantization, along with
systematic derivations together with solid proofs, is given in \cite{GSS24-SuGra}.
Dimensional reduction of superspace formulations of the theory to 10D has been pursued at the local coordinate-level in  \cite{HS05}\cite{DFGT} using standard Kaluza--Klein techniques in superspace, but has yet to be cleanly performed for arbitrary $S^1$-principal bundle topologies, nor via the full duality-symmetric form of 11D supergravity, which is necessary for our non-trivial applications of flux quantization. The latter is one of our achievements here.

\smallskip  
We note that there also exist different accounts of IIA superspace formulations via (the now ``K-theoretic'') Bianchi identities by imposing constraints on the (full set of) type IIA NS/RR fluxes such as \cite{CGNSW97}, following \cite{CGO87}, and via a different set of independent fields and their Bianchi identities \cite{NO11} (but these are not necessarily derived from eleven dimensions). 
We stress that some of the superspace formulations happen to be \textit{duality-symmetric}, a property crucial for our purposes both for 11D 
 and type 10D IIA supergravities, and furthermore always manifestly \textit{on-shell}. We will later provide a detailed comparison of our explicit formulation with all existing ones (p. 19). An approach towards off-shell duality-symmetric Lagrangian formulations of these theories is given in  \cite{BBS98}\cite{BNS04}, but this will not be relevant for our goal here.

\smallskip 

In this paper, thus, we start from the duality-symmetric superspace formulation of (on-shell) 11D supergravity via the sole (and equivalent) demand that the superized $(G_4^s,G_7^s)$-fluxes constitute an $\mathfrak{l}S^4$-cocycle \cite{GSS24-SuGra}. This formulation is crucial, as it dictates the allowable flux quantizations of 11D supergravity fluxes to be in a \textit{non-abelian cohomology} $\mathcal{A}$ of rational homotopy type that of the 4-sphere $\mathfrak{l}\mathcal{A} = \mathfrak{l} S^4$ (\cite{Sati13}\cite{FSS20-H}\cite{FSSHopf}\cite{FSS22Twistorial}). We show that dimensional reduction along the fiber of arbitrary $S^1$-principal bundle super-spacetimes implies a superspace formulation of (on-shell) 10D IIA supergravity via the sole (and equivalent)\footnote{Just as with the case of 11D supergravity (cf. \cite{GSS24-SuGra}), in all existing IIA superspace formulations recorded in published material, only one directional implication is presented -- rather than the bi-directional equivalence \eqref{10DTabledStatement} that we prove.}  demand that the reduced duality-symmetric NS/RR $(F_2^s, F_4^s, F_6^s, H_3^s, H_7^s)$-fluxes are valued in the cyclification of the rational 4-sphere, $\mathrm{cyc}(\mathfrak{l}S^4)$. The latter rational classifying space, together with its natural compatibility with the corresponding classifying space from 11D, and the full dynamical IIA field equations, strongly prompts (and allows) us to conjecture that IIA supergravity fluxes are to be quantized in the corresponding admissible cyclified (and necessarily) \textit{non-abelian} cohomology theory $\mathrm{Cyc}(\mathcal{A})$. This is to be contrasted with the famous hypothesis of flux quantization of IIA supergravity in \textit{abelian} twisted K-theory 
which becomes possible only when one (simply) ignores the dual NS $H_7$-flux and its \textit{non-linear} Bianchi identity (see \cite{BMSS19}). 

\medskip 
\noindent {\textbf{Mathematical background and conventions.} We will mainly follow the conventions and notation of \cite{GSS24-SuGra}\cite{GSS24-TDuality}. The necessary mathematical background and context is extensively reviewed therein, where further original literature is cited, so we will refrain from doing this again in this short text.}
\medskip 

\noindent \textbf{The natural Cartan geometric superspace formulation of 11D supergravity.}
Using modernized mathematical language, the result of \cite{GSS24-SuGra} may be succinctly summarized as the following \textit{geometric fact}. The first step is to note that on the (flat) super-Minkowski spacetime 
$
\FR^{1,10\vert \mathbf{32}}  
$
equipped with its super-Lie algebra structure, encoded in its dual Chevalley--Eilenberg algebra 
\begin{equation}\label{SuperMinkowskiCE}
\mathrm{CE}(\FR^{1,10\vert \mathbf{32}})\,  = \, \FR_d \big[ (e^a)_{a=0}^{10}, (\psi^\beta)_{\beta=1}^{32} \big]\, \big/\, ( {\scriptstyle \dd e^a = (\overline{\psi} \, \Gamma^a \, \psi), \, \dd \psi^\beta =0} ) \, ,
\end{equation} 
there exists the following special pair of $4$- and $7$-cochains $(G_4^0, G_7^0)$ given by
\begin{align}\label{FlatG4G7}
G_4^0 \, &= \, { \color{darkorange}
      \tfrac{1}{2}
      \big(\,
        \overline{\psi}
        \,\Gamma_{a_1 a_2}\, \psi
      \big)
      e^{a_1} e^{a_2}
      } \\
G_7^0  \, &= \,   {
      \color{darkorange}
      \tfrac{1}{5!}
      \big(\,
        \overline{\psi}
        \,\Gamma_{a_1 \cdots a_5}\, \psi 
      \big)
      e^{a_1} \cdots e^{a_5}
      }\nonumber \, .  
\end{align}
Crucially, these form an $\mathfrak{l}S^4$-cocycle ($\equiv$  map of super-$L_\infty$-algebras)
$$
\FR^{1,10\vert \mathbf{32}}  \xrightarrow{\quad \quad } \mathfrak{l} S^4, 
$$
i.e., dually a map of super-commutative differential graded algebras (sCDGA), $\mathrm{CE}(\mathfrak{l}S^4)\longrightarrow \mathrm{CE}(\FR^{1,10\vert \mathbf{32}})$, in that they satisfy the familiar ``duality-symmetric'' Bianchi identities
\begin{align*}
\dd G_4^0 \, &= \, 0 \\
\dd G_7^0 \, &= \, \tfrac{1}{2}G_4^0 \, G_4^0 \, . 
\end{align*}

The next step is to view the relation \eqref{FlatG4G7} as a \textit{tangent space-wise} identity on any \textit{super-spacetime}  \cite[Def 2.74]{GSS24-SuGra}  supermanifold modelled on $\FR^{1,10\vert \mathbf{32}}$
$$
\big(Y^{11}, (e,\psi,\omega)\big)\, ,
$$
with $(e,\psi,\omega)$ forming a \textit{Cartan connection}\footnote{In particular, this structure guarantees the consistency of the former ``tangent-space-wise'' interpretation, since the super-coframe yields an isomorphism of super-vector spaces $(e,\psi)|_p : T_p Y^{11} \xrightarrow{\sim} \FR^{1,10\vert \mathbf{32}}$ for each $p\in \bosonic{Y}^{11}\hookrightarrow Y$.} for the super-group\footnote{See \cite{GSS24-MGroup} for a rigorous and modernized mathematical treatment of such super-groups.} 
inclusion $\mathrm{Spin}(1,10)\hookrightarrow \mathrm{ISO}(\FR^{1,10\vert \mathbf{32}}).$ 
Then, the statement of \cite[Thm. 3.1]{GSS24-SuGra} is that \textit{full on-shell 11D supergravity configurations} are \textit{precisely} the {\color{darkblue} curved global extensions} of these {\color{darkorange}flat  tangent space} $\mathfrak{l}S^4$-cocycles. That is, exactly those $(11\vert \mathbf{32})$-dimensional super-spacetimes that carry global super-fluxes $(G_4^s, G_7^s)$ of the form\footnote{While in \eqref{FlatG4G7} the symbols $e^a$ and $\psi^\beta$ denote generators of CE$(\FR^{1,10\vert \bf{32}})$, here they are components of the super-coframe $(e,\psi)$, hence generating $\Omega^\bullet_\mathrm{dR}(Y^{11})$. The two notions coincide upon restricting the latter to each super-tangent space.}   
\begin{align}\label{CurvedG4G7}
G_4^s \, &= \, {
        \color{darkblue}
        \tfrac{1}{4!}
        (G_4)_{a_1 \cdots a_4}
        \,
        e^{a_1} \cdots e^{a_4}
      } + { \color{darkorange}
      \tfrac{1}{2}
      \big(\,
        \overline{\psi}
        \,\Gamma_{a_1 a_2}\, \psi
      \big)
      e^{a_1} e^{a_2}
      } \\
G_7^s  \, &= \,  {
      \color{darkblue}
      \tfrac{1}{7!}
      (G_7)_{a_1 \cdots a_7}
      \,
      e^{a_1} \cdots e^{a_7}
      } +  {
      \color{darkorange}
      \tfrac{1}{5!}
      \big(\,
        \overline{\psi}
        \,\Gamma_{a_1 \cdots a_5}\, \psi
      \big)
      e^{a_1} \cdots e^{a_5}
      }\nonumber  
\end{align}
which still form a (rational) $4$-sphere cocycle, i.e., now as a map of super-$L_\infty$-\textit{algebroids}
$$
T Y^{11}  \xrightarrow{\quad \quad } \mathfrak{l} S^4, 
$$
in that they still satisfy the corresponding duality-symmetric Bianchi (now differential) identities 
\begin{align*}
\dd G_4^s \, &= \, 0 \\
\dd G_7^s \, &= \, \tfrac{1}{2}G_4^s \wedge G_4^s \,  
\end{align*}
in $\Omega^\bullet_\mathrm{dR}(Y^{11}) \cong \mathrm{CE}(TY^{11})$. Yet equivalently, this means they are modulated by a map into the classifying space of closed $\mathfrak{l}S^4$-valued forms
$$
(G_4^s, G_7^s) \, : \, Y^{11} \xrightarrow{\quad \quad} \Omega^{1}_{\mathrm{dR}}(-;\, \mathfrak{l}S^4)_{\mathrm{clsd}} \, .
$$

Here the terms in {\color{darkorange}orange} are to be thought of as the \textit{canonical} supersymmetric contributions arising from the {\color{darkorange}flat but super-}tangent space structure, while the terms in {\color{darkblue}blue} are the flux densities arising from the underlying {\color{darkblue}bosonic but curved} manifolds. The combination of the two on super-spacetime supermanifolds $Y$, under the demand that they still satisfy the $\mathfrak{l}S^4$-Bianchi identities, enforces exactly the equations of motion of 11D supergravity (and importantly, no further conditions!).   

\medskip 
\noindent \textbf{11D supergravity flux quantization in non-abelian cohomology via the 4-sphere.}
Other than yielding an alternative, concise and natural geometric formulation of 11D supergravity, the simultaneous appearance of the rational $4$-sphere as a classifying space \textit{and} its compatibility with the \textit{full set of field equations}, including the Einstein coframe (``metric'') dynamics, Hodge duality constraint \footnote{For higher gauge field theories on \textit{fixed metric} backgrounds, the Hodge-duality constraint may be safely ignored by passing to the Cauchy data 
\cite{SS23FQ}. However, this does not cover super-gravity theories where the metric is dynamical.} and fermionic contributions \footnote{In particular, it is worth noting that on the underlying bosonic spacetime $\bosonic{Y}^{11}\hookrightarrow Y^{11}$, the fermionic contributions actually spoil the form of the purely bosonic duality-symmetric Bianchi identity (see, e.g., \cite[(2.2.7), (2.2.14)]{DNP86}).},
strongly support the suggestion that the global ``infrared'' completion of the (higher gauge) $C$-field should be in a certain \textit{non-abelian (differential) cohomology} theory  $\mathcal{A}$ \cite{FSS23Char}. It is worth stressing here that such a cohomology is topologically much richer than ordinary \textit{abelian} differential cohomology, which would be the indication from considering the sole (non-duality symmetric) Bianchi identity $\dd G_4 = 0$.  

\smallskip 
With no further physical input and/or expectations (e.g., from the consistent coupling of M-branes to such backgrounds), the only  mathematical requirement of such a non-abelian cohomology theory is that its classifying space has as rational homotopy type that of the $4$-sphere 
$$
\mathfrak{l} \mathcal{A} \,  \cong \,  \mathfrak{l} S^4 \, ,
$$ encoded precisely in the \textit{duality-symmetric} Bianchi identity of the $C$-field. It follows, then, that a natural and ``minimal'' hypothesis for this choice is the \textit{actual} topological $4$-sphere $\mathcal{A}= S^4$, i.e., Cohomotopy cohomology theory, which has come under the name ``Hypothesis H'' (originally proposed in \cite[\S 2.5]{Sati13}, formalized and checked in \cite{FSS20-H}\cite{FSSHopf}\cite{FSS22Twistorial} to reproduce various expectations from the literature, and reviewed in \cite[\S 12]{FSS23Char}\cite[\S 4.3]{SS24Flux}).  For more details on this aspect of flux quantization on $11D$ \textit{super-spacetime}, see \cite{GSS24-SuGra}, and for that of the corresponding $M5$-brane coupled to such backgrounds see \cite{GSS-M5Brane}.

\medskip
\noindent \textbf{The natural dimensionally-reduced Cartan geometric superspace formulation of IIA 10D SuGra.}
It was observed in \cite{FSS15-WZW}\cite{FSS15-nAlg}\cite{FSS17}, reviewed and further modernized in \cite[\S 2.2]{BMSS19}\cite[\S 2.2]{GSS24-TDuality}), that the algorithm of dimensional reduction of flux-forms at the (super-)tangent space level has a very natural interpretation in terms of an adjunction ($\equiv$ ``duality'') in the realm of super-$L_\infty$-algebras. 

\smallskip 
In particular, for the case of flat $(11\vert \mathbf{32})$-dimensional super-Minkowski spacetime being a central extension of the $(10\vert \bf{16}\oplus \overline{\bf{16}})$-dimensional IIA super-Minkowski spacetime (see \cite[\S3.1]{GSS24-TDuality} for details)
$$
\FR^{1,10\vert \mathbf{32}} \xrightarrow{\quad \quad } \FR^{1,9\vert \mathbf{16} \oplus \overline{\mathbf{16}}} {\color{gray} \xrightarrow{\;\; \scriptscriptstyle{c_1^M= (\overline{\psi}\Gamma^{\ten}   \psi)}\;\; } b \FR}
$$
given dually by the sCDGA inclusion\footnote{In this formulation, the top dimensional $\Gamma$-matrix is identified simply as the linear operator $\Gamma^{10} = P- \overline {P}$, for $P$ and $\overline{P}$ being the two projectors on each of the (Spin$(1,9)$ irrep.) components of $\mathbf{16}\oplus \overline{\mathbf{16}}$, respectively.}
\begin{align*}
\mathrm{CE}\big(\FR^{1,9\vert \mathbf{16} \oplus \overline{\mathbf{16}}}\big) &\xhookrightarrow{\qquad} \mathrm{CE}\big({\FR^{1,10\vert \mathbf{32}}}\big) 
\, \cong \,  \mathrm{CE}\big(\FR^{1,9\vert \mathbf{16} \oplus \overline{\mathbf{16}}}\big)[e^{\ten}] \, \big/ \, \scriptstyle{ \dd e^{\ten} = (\overline{\psi} \Gamma^{\ten} \psi) }\,,
\end{align*} 
there exists a (natural) bijection of morphisms of super-$L_\infty$-algebras\footnote{
This Hom-bijection yields an actual adjunction between the central extension (``oxidation'') functor $\mathrm{oxi}: sL_\infty\mbox{-}\mathrm{Alg}_{/ b\FR} \rightarrow sL_\infty\mbox{-}\mathrm{Alg} $ and the cyclification functor $\mathrm{cyc}: sL_\infty\mbox{-}\mathrm{Alg}\rightarrow sL_\infty\mbox{-}\mathrm{Alg}_{/ b\FR}$. For expositional purposes, here we consider only the cases of interest at hand for source and target.}
\begin{align}\label{FlatExtCycBijection}
\mathrm{Hom}_{sL_\infty\mbox{-}\mathrm{Alg}}\big(\FR^{1,10\vert \mathbf{32}}, \, \mathfrak{l}S^4\big) \;\; \cong \;\; \mathrm{Hom}_{sL_\infty\mbox{-}\mathrm{Alg}_{/ b\FR}}\big(\FR^{1,9\vert \mathbf{16}\oplus\overline{\bf{16}}}, \, \mathrm{cyc}(\mathfrak{l}S^4)\big) \, . 
\end{align}
Applied on the {\color{darkorange}flat tangent-space} $\mathfrak{l}S^4$-cocycle from \eqref{FlatG4G7}
$$
  \begin{tikzcd}[
  ]
    \mathbb{R}^{1,10\,\vert\,\mathbf{32}}
    \ar[
      rr,
      "{
        (G_4^0,\, G_7^0)
      }"
    ]
    &&
    \mathfrak{l}S^4
  \end{tikzcd}
  \hspace{1cm}
    \longleftrightarrow
  \hspace{1cm}
  \begin{tikzcd}[
      row sep=-1pt, column sep=huge
  ]
    \mathbb{R}^{
      1,9\,\vert\, 
      \mathbf{16} 
        \oplus 
      \overline{\mathbf{16}}
    }
    \ar[
      rr,
      "{
        \mathrm{rdc}_{
          \scalebox{.7}{$c_1^M$}
        }
          (G^0_4,G^0_7)
      }"
    ]
    \ar[
      dr,
      "{
        (\hspace{1pt}
          \overline{\psi}
          \,\Gamma_{\!\ten}\,
          \psi) 
      }"{sloped, swap}
    ]
    &&
    \mathrm{cyc}(\mathfrak{l}S^4) \,,
    \ar[
      dl,
      "{ \omega_2 }"
    ]
    \\
    &
    b\mathbb{R}
  \end{tikzcd}
$$
this reduction by ``\textit{cyclification}'' acts explicitly as
\vspace{-2mm} 
\begin{equation}
  \label{FlatNS/RRCocycles}
  \hspace{-5mm}
  \begin{tikzcd}[sep=-1pt]
    \mathbb{R}^{1,10\,\vert\,\mathbf{32}}
    \ar[
      rr,
      "{
        (G^0_4,\, G^0_7)
      }"
    ]
    &&
    \mathfrak{l}S^4
    \\
    {\color{darkorange} \tfrac{1}{2}
    \big(\hspace{1pt}
      \overline{\psi}
      \,\Gamma_{a_1 a_2}\,
      \psi
    \big)
    e^{a_1} e^{a_2} }
    &\longmapsfrom&
    g_4
    \\
    {\color{darkorange} \tfrac{1}{5!}
    \big(\hspace{1pt}
      \overline{\psi}
      \,\Gamma_{a_1 \cdots a_5}\,
      \psi
    \big)
    e^{a_1} \cdots e^{a_5} }
    &\longmapsfrom&
    g_7
  \end{tikzcd}  
  \hspace{.25cm}
  \leftrightsquigarrow
  \hspace{-4mm}
  \begin{tikzcd}[
    row sep=-2pt
    ,column sep=25pt
    ,/tikz/column 1/.append style={anchor=base east}
  ]
    \mathbb{R}^{
      1,9\,\vert\,
      \mathbf{16} 
        \oplus 
      \overline{\mathbf{16}}
    }
    \ar[
      rr,
      "{\small 
          \begin{array}{c}
            \mathrm{rdc}_{
              \scalebox{.7}{$c_1^M$}
            }
              (G^0_4, G_7^0) 
            \;=
            \\
            (F_2^0, F_4^0, F_6^0, H_3^0, H_7^0)
          \end{array}
      }"
    ]
    &&
    \mathrm{cyc}\big(
      \mathfrak{l}S^4
    \big)
    \\
    {\color{darkorange}  \big(\hspace{1pt}
      \overline{\psi}
      \,\Gamma_{\!\ten}
      \psi
    \big) }
    \;=:\;
    \mathrm{F}_2^0
    &\longmapsfrom&
    \omega_2
    \\
     {\color{darkorange}  \tfrac{1}{2}\big(\hspace{1pt}
      \overline{\psi}
      \,\Gamma_{a_1 a_2}\,
      \psi
    \big)
    e^{a_1} e^{a_2} }
    \;=:\;
    \mathrm{F}_4^0
    &\longmapsfrom&
    g_4
    \\
    {\color{darkorange} -  \big(\hspace{1pt}
      \overline{\psi}
      \,
      \Gamma_{\! \ten}
      \Gamma_{a}
      \,
      \psi
    \big)
    e^a ) }
    \;=:\;
    H_3^0
    &\longmapsfrom&
    \mathrm{s}g_4
    \\
    {\color{darkorange}  \tfrac{1}{5!}
    \big(\hspace{1pt}
      \overline{\psi}
      \,\Gamma_{a_1 \cdots a_5}\,
      \psi
    \big)
    e^{a_1} \cdots e^{a_5} }
    \;=:\;
    H^0_7
    &\longmapsfrom&
    g_7
    \\
   {\color{darkorange}  -
    \tfrac{1}{4!}
    \big(\hspace{1pt}
      \overline{\psi}
      \,\Gamma_{\!\ten}
      \Gamma_{a_1 \cdots a_4}
      \,
      \psi
    \big)
    e^{a_1} \cdots e^{a_4} }
    \;=:\;
    F_6^0
    &\longmapsfrom&
    \mathrm{s}g_7
  \end{tikzcd}
\end{equation}
that is, with the reduced cochains satisfying the corresponding $\mathrm{cyc}(\mathfrak{l}S^4)$-Bianchi identities:\footnote{The sign in the third equation can be absorbed in the definition of $F_6^0$, hence bringing it closer to the familiar (rational) twisted K-theory formulas. However for us, both from our dimensional reduction perspective and the symmetry of the formulas in \eqref{FlatNS/RRCocycles}, it is mathematically natural to retain it.}
\vspace{2mm} 
\begin{equation}
  \label{DimReducedBianchiIdentities}
  \left.
  \def\arraystretch{1.3}
  \def\arraycolsep{4pt}
  \begin{array}{lcl}
    \mathrm{d}\, G_4^0 &=& 0
    \\
    \mathrm{d}\, G_7^0 &=&
    \tfrac{1}{2}\,
    G_4^0 \, G_4^0
  \end{array}
  \!\!\right\}
  \hspace{.5cm}
  \leftrightsquigarrow
  \hspace{.5cm}
  \left\{\!\!
  \def\arraystretch{1.1}
  \def\arraycolsep{4pt}
  \begin{array}{lcl}
    \mathrm{d}\, F_2^0
    &=&
    0
    \\
    \mathrm{d}\, F_4^0
    &=&
    H_3^0 \, F_2^0
    \\
    \mathrm{d}\, F_6^0
    &=&
    -\,  H_3^0 \, F_4^0
    \\
    \mathrm{d}\, H_3^0
    &=&
    0
    \\
    \mathrm{d}\, H_7^0
    &=&
    \tfrac{1}{2}
    \,
    F_4^0 \, F_4^0
    \,+\,
    F_2^0 \, F_6^0\;.
  \end{array}
  \right.
\end{equation}

Now, since the curved global 11D super-spacetime \eqref{CurvedG4G7} extensions of the tangent-space $\mathfrak{l}S^4$-cocycles \eqref{FlatG4G7} are nothing but the full 11D supergravity configurations (\cite[Thm. 3.1]{GSS24-SuGra}), the above tangent space-wise reduction \eqref{FlatExtCycBijection} suggests that -- should it have an appropriate global extension in the non-trivial supermanifold setting --
it would automatically produce a 10D IIA super-spacetime supergravity formulation, since the latter IIA supergravity is known to be the dimensional reduction of 11D supergravity. Indeed, our Prop. \ref{NonTrivialCircleBundleCyclification/Oxidation} realizes the global supermanifold extension of \eqref{FlatExtCycBijection} precisely in the physical scenario at hand: The dimensional reduction of collections of ($S^1$-symmetric) flux-form fields on a  principal $S^1$-bundle $Y$ over a base supermanifold $X$, via the use of $U(1)$-connection $e'$.

\smallskip 
In the case of SuGra super-spacetimes, the necessary extra datum of the U(1)-connection $e'$ arises from a special partial gauge fixing \cite[Cor. 3.16]{Gi26} of the underlying $S^1$-symmetric gravitational field content, so that in particular $e^{10}_Y = \Phi\, e'$. This gauge fixing produces the expected (superized) dilaton field $\Phi$, but also its partner dilatino field $\lambda$ via $\psi_Y = \psi_X + \lambda \, e'$, each encoding the ``winding'' of the bosonic $e_Y$ and fermionic coframe $\psi_Y$ along the bosonic ``M-theoretic'' $S^1$-fiber, respectively. A careful application of Prop. \ref{NonTrivialCircleBundleCyclification/Oxidation} on the space of the $S^1$-symmetric $11D$ super-spacetimes $\big(Y^{11}, (e_Y,\psi_Y,\omega_Y)\big)$ supplied with $S^1$-symmetric $\mathfrak{l}S^4$-super-fluxes $(G_4^s, G_7^s)$ of the form \eqref{CurvedG4G7}, and expanded in the mentioned partial gauge fixing, yields the following {\color{darkblue}curved global extensions} of the {\color{darkorange} flat tangent space} $\mathrm{cyc}(\mathfrak{l}S^4)$-cocycles \eqref{FlatNS/RRCocycles}, now on the corresponding base $(10\vert \mathbf{16}\oplus \overline{\mathbf{16}})$-dimensional super-spacetime $X^{10}$   
\begin{equation}
\label{CurvedNS/RRCocycles}
  \def\arraystretch{1.5}
  \begin{array}{l}
    F^s_2
    \;\;:=\;\;
    {\color{darkblue} \tfrac{1}{2}(F_2)_{a_1 a_2} \, e^{a_1} e^{a_2} }\;\, + \, \, {\color{olive} \frac{2}{\Phi^2} (\overline{\psi}  \Gamma_{a} \lambda)\, e^a }\,\, +\;\, {\color{olive} \frac{1}{\Phi}} \, {\color{darkorange}  (\overline{\psi} \Gamma_{\ten} \psi) }
    \\
        H^s_3
    \;\;:=\;\;  {\color{darkblue} \tfrac{1}{3!}(H_{3})_{a_1 a_2 a_3} \, e^{a_1}e^{a_2}e^{a_3} }\,\, + \, \, 
   {\color{olive} (
    \overline{\psi}
    \Gamma_{a_1 a_2}
    \lambda )\,
    e^{a_1}  e^{a_2} } \, \,  \;\,- \;\, {\color{olive} \Phi} \, {\color{darkorange}  (\overline{\psi} \Gamma_{\ten a} \psi)\, e^a }
    \\
        F^s_4
    \;\;:=\;\;
    {\color{darkblue}  \frac{1}{4!}(F_4)_{a_1 \cdots a_4} \, e^{a_1}\cdots e^{a_4} }\;\, 
    \hspace{3.7cm} 
    +\;\, {\color{darkorange} \tfrac{1}{2}(\overline{\psi} \Gamma_{a_1 a_2} \psi)\,  e^{a_1} e^{a_2} }
    \\
        F^s_6
    \;\;:=\;\; 
    {\color{darkblue}  \frac{1}{6!}(F_6)_{a_1 \cdots a_6}\,  e^{a_1}\cdots e^{a_6} }\;\,+\;\, {\color{olive}
    \tfrac{2}{5!}
    (
    \overline{\psi}
    \Gamma_{a_1 \cdots a_5}
    \lambda
    ) \, 
    e^{a_1} \cdots e^{a_5} } \;\,-\;\, {\color{olive} \tfrac{\Phi}{4!} } \,  {\color{darkorange}(\overline{\psi} \Gamma_{\ten \, a_1 \cdots a_4} \psi) \,  e^{a_1}\cdots e^{a_4}  }
    \\
    H^s_7
    \;\;:=\;\; 
    {\color{darkblue}  \tfrac{1}{7!}
    (H_7)_{a_1 \cdots a_7}\, 
    e^{a_1} \cdots e^{a_7} }
    \;\,   
    \hspace{4.55cm} 
    + \;\, {\color{darkorange} 
    \tfrac{1}{5!}
    (
    \overline{\psi}
    \Gamma_{a_1 \cdots a_5}
    \psi
    ) \, 
    e^{a_1} \cdots e^{a_5}  }\, .
  \end{array}
\end{equation}
Here, we also have $\dd e' = \pi^* F_2^s$ and  $\dd \Phi \, = \, \partial_a\Phi\, e^a \, + \, 2 \big(\,\overline{\psi} 
    \,\Gamma^{\ten}\,
    \lambda
    \big)$,
with the additional terms in {\color{olive} olive} arising due to the {\color{olive}winding} of the 11D super-coframe along the ``shrinking'' {\color{olive}M-theoretic circle}.

\smallskip 
Our main Theorem \ref{10dIIASugraEoMFromSuperFluxBianchiIdentity} then guarantees that these global extensions correspond \textit{precisely} to the \textit{full on-shell 10D IIA supergravity configurations} (and nothing more!), by virtue of these forming a (rational) cyclified 4-sphere cocycle
$$
T X^{10} \xrightarrow{\quad \quad } \mathrm{cyc}(\mathfrak{l}S^4)\, ,
$$
in that they still satisfy the duality-symmetric NS/RR Bianchi identities \eqref{DimReducedBianchiIdentities}, now in $\Omega^\bullet_\mathrm{dR}(X) \cong \mathrm{CE}(TX)$
\begin{align*}
\mathrm{d}\, F_2^s
    &=
    0
    \\
    \mathrm{d}\, F_4^s
    &=
    H_3^s \, F_2^s
    \\
\mathrm{d}\, F_6^s
    &=
    -\,  H_3^s \, F_4^s
    \\    \mathrm{d}\, H_3^s
    &=
    0
    \\
\mathrm{d}\, H_7^s
    &=
    \tfrac{1}{2}
    \,
    F_4^s \, F_4^s
    \,+\,
    F_2^s \, F_6^s\;.
\end{align*}
Equivalently, this says that the superized NS/RR-fluxes are modulated by a map into the classifying space of closed $\mathrm{cyc}(\mathfrak{l}S^4)$-valued forms
$$
(F_2^s, H_3^s, F_4^s, F_6^s, H_7^s) \, : \, X^{10} \xrightarrow{\quad \quad} \Omega^{1}_{\mathrm{dR}}\big(-;\,\mathrm{cyc}( \mathfrak{l}S^4)\big)_{\mathrm{clsd}} \, .
$$

\smallskip 
Summarizing, by construction, the superized duality-symmetric IIA NS/RR-field fluxes \eqref{CurvedNS/RRCocycles} satisfy the same principle as that observed in 11D for the superized duality-symmetric C-field flux $(G_4^s, G_7^s)$, namely they unify the structures of:  
\begin{itemize} 
\item[\bf{(i)}] NS/RR flux densities on {\color{darkblue} bosonic but non-trivial} underlying manifolds $\bosonic{X}\hookrightarrow X$, which satisfy the corresponding duality-symmetric NS/RR cyc$(\mathfrak{l}S^4)$-Bianchi identities;

\item[\bf(ii)] the canonical supersymmetric contributions arising on {\color{darkorange} flat but super-}tangent spaces $\FR^{1,9\vert \bf{16}\oplus\overline{\bf{16}}}$ satisfying the analogous identities; 

\item[\bf{(iii)}] 
now along with certain additional subtle ``{\color{olive} mixed components extensions}'', reflecting the fact that the total global NS/RR extensions  are really reductions of 11D fluxes, whose super-coframe expansions {\color{olive}wind non-trivially} along the (bosonic) $S^1$-fiber.
\end{itemize} 

\medskip 
\noindent \textbf{10D IIA supergravity flux quantization in cyclified non-abelian cohomology.}
Similar to the case of 11D supergravity, the above provides a concise and natural geometric formulation of 10D IIA SuGra. Additionally, however, the simultaneous {\bf (i)} appearance of the rational classifying $L_\infty$-algebra $\mathrm{cyc}(\mathfrak{l}S^4)$, {\bf (ii)} its compatibility with the \textit{full IIA field equations}, and {\bf (iii)} its compatibility with the 11D rational classifying space $\mathfrak{l}S^4$ via oxidation/cyclification, now strongly suggests that the global completion of the IIA flux fields, in the guise of duality-symmetric NS/RR collection $(F_2,H_3, F_4, F_6, H_7)$, should also be in a certain \textit{non-abelian} cohomology $\mathcal{B}^{\rm IIA}$.

\smallskip 
However, since rational cyclification is nothing but the rationalization of the topological cyclification,  $\mathrm{Cyc}(\mathcal{A}) := \mathrm{Map}(S^1, \mathcal{A})/\!/ S^1$, of any space \cite[Thm. A]{VPB85} (amplified in our current context in \cite[Prop. 3.2]{FSS17}\cite{BMSS19}\cite{SV3}) 
$$
\mathrm{cyc}(\mathfrak{l}\mathcal{A}) \, \cong \, \mathfrak{l}\mathrm{Cyc}(\mathcal{A}) \, , $$ 
this implies that the non-abelian IIA flux quantization may be chosen to be \textit{compatible} with that of 11D supergravity, upon dimensional reduction via cyclification/oxidation! Of course, this is precisely what one would expect physically, that is, if the full \textit{globally completed} IIA supergravity is really to be a limit of the full \textit{globally completed} 11D supergravity (and not only in local coordinate patches). In other words, if $\mathcal{A}$ is an admissible flux quantization law for 11D supergravity, so that $\mathfrak{l}\mathcal{A} \cong \mathfrak{l}S^4$, then the induced admissible flux quantization IIA supergravity is given by
$$
\mathcal{B}^{\rm IIA} \, = \, \mathrm{Cyc}(\mathcal{A}) \, .
$$
In the particular case of Hypothesis $H$ where $\mathcal{A}=S^4$, then this already implies the corresponding hypothesis for IIA supergravity, namely that $\mathrm{Cyc}(\mathcal{A})= \mathrm{Cyc}(S^4).$ Crucially, this is obviously \textit{inequivalent} to the traditional ``Hypothesis K'' of the  IIA flux quantization in \textit{(abelian)} twisted K-theory $KU^0 \, /\!/  BU(1)$, as is apparent already at the rational level \eqref{DimReducedBianchiIdentities}, whereby the dual NS flux $H_7$ appears explicitly to contribute with its \textit{non-linear} Bianchi identity.\footnote{Let us mention, however, that there exists a different (but closely related), recently discovered admissible choice of flux quantization for 11D supergravity, whose cyclification receives a (non-rational) comparison map from a version of ``unstable K-theory'' \cite{BSS26}.} Of course, one should also note here that, in contrast to $\mathrm{Cyc}(\mathcal{A})$, the lift of abelian twisted $K$-theory to 11D is, at best, problematic. The above argument, or ``new hypothesis'', may be summarized succinctly in the following claim (cf. \cite[Claim 1.1]{GSS24-SuGra}).

\begin{claim}[{\bf Flux-quantized super-fields of 10D IIA SuGra}]
\label{FluxQuantizedSuperFieldsOf10dSugra}
For 
\begin{itemize}[leftmargin=.5cm]
\item[\bf --]
$\big(X, (e, \psi, \omega)\big)$ an $(10\vert \mathbf{16}\oplus \overline{\bf{16}})$-dimensional super-spacetime
\item[\bf --]
equipped with a superized dilaton/dilatino pair $(\Phi,\lambda)$ \eqref{DilatonDilatinoPair} such that
$$
\dd \Phi \, = \, \partial_a\Phi\, e^a \, + \, 2 \big(\,\overline{\psi} 
    \,\Gamma^{\ten}\,
    \lambda
    \big) \, ,
$$

\item[\bf --] and $\mathcal{A}$ a choice of flux quantization law as in \cite[\S 3.2]{SS24Flux} for its parent 11D supergravity embodied by a classifying 
space with rational Whitehead $L_\infty$-algebra that of the 4-sphere,
\end{itemize}
the full flux-quantized super-NS/RR-field configurations on $X$ are diagrams in
super-homotopy theory \cite[Def. 2.57]{GSS24-SuGra} of the following form:

\vspace{-6mm}
\begin{equation}
  \label{FluxQuantizedIIASugraFields}
  \hspace{-1mm} 
  \begin{tikzcd}[
    row sep=65pt, 
    column sep=50pt
  ]
    &[-20pt]
    &&
    &[+5pt]
    \overset{
      \mathclap{
      \scalebox{.65}{
        \def\arraystretch{.9}
        \begin{tabular}{c}
          \color{darkblue}
          \bf
          classifying space
          \\
          \color{darkblue}
          \bf
          of quantized
          \\
          \color{darkblue}
          \bf
          NS/D-brane charges
        \end{tabular}
      }
      \mathrlap{
        \scalebox{.7}{
          \color{gray}
          \begin{tabular}{c}
            {\rm so that}
            \\
            $\mathfrak{l}\mathrm{Cyc}(\mathcal{A}) \,\cong\,$
            \\ $ \mathrm{cyc}(\mathfrak{l}S^4)$
          \end{tabular}
        }
      }
      }    
    }{
      \mathrm{Cyc}(\mathcal{A})
    }
    \ar[
      d,
      "{
        \mathbf{ch}^{\mathrm{Cyc}(\mathcal{A})}
      }"{swap},
      "{
        \scalebox{.65}{
          \def\arraystretch{.9}
          \begin{tabular}{c}
            \color{darkgreen}
            \bf
            differential
            \\
            \color{darkgreen}
            \bf
            character
            \\
            \rm
            \cite[Def. 9.2]{FSS23Char}
          \end{tabular}
        }
      }"{xshift=-6pt}
    ]
    \\
    \underset{
      \mathclap{
        \raisebox{-2pt}{
      \scalebox{.65}{
        \def\tabcolsep{-2pt}
        \def\arraystretch{.9}
        \begin{tabular}{c}
          \color{gray}
          \bf
          ordinary
          \\
          \color{gray}
          \bf
          spacetime
        \end{tabular}
      }            
        }
      }
    }{
      \mathcolor{gray}
      {\bos{X}}
    }
    \ar[
      r,
      hook,
      gray,
      "{
        \eta^{\rightsquigarrow}_X
      }"
    ]
    &
    \;\;
    \underset{
      \mathclap{
        \raisebox{-5pt}{
      \scalebox{.65}{
        \def\tabcolsep{-2pt}
        \def\arraystretch{.9}
        \begin{tabular}{c}
          \color{darkblue}
          \bf
          super-
          \\
          \color{darkblue}
          \bf
          spacetime
        \end{tabular}
      }            
        }
      }
    }{
      X
    }
    \;\;\;
    \ar[
      rr,
      "{
        (F_2^s, H_3^s F_4^s,F_6^s, H_7^s)
      }"{name=t},
      "{
        \scalebox{.65}{
          \begin{tabular}{c}
            \color{darkgreen}
            \bf
            super-NS/RR-field flux           \eqref{CurvedNS/RRCocycles}
          \end{tabular}
        }
      }"{swap}
    ]
    \ar[
      urrr,
      bend left=17,
      "{
        \scalebox{.65}{
          \begin{tabular}{c}
            \color{darkgreen}
            \bf
            local NS/RR-field charge
          \end{tabular}
        }
      }"{sloped},
      "{
        \rchi^{IIA}
      }"{sloped, pos=.48,swap, name=s}
    ]
    \ar[
      from=s,
      to=t,
      Rightarrow,
      "{
        (\widehat{A}{}_1^s, \widehat{B}{}_2^s,
          \widehat{A}{}^s_3, \widehat{A}{}_5^s,
          \widehat{B}{}^s_6
        )
        \mathrlap{
          \scalebox{.6}{
            \color{darkorange}
            \bf
            \begin{tabular}{c}
                      global NS/RR-field        \\
                        gauge potentials
            \end{tabular}
          }
        }
      }"{description}
    ]
    &&
    \underset{
      \mathclap{
        \raisebox{-6pt}{
      \scalebox{.65}{
        \def\tabcolsep{-2pt}
        \def\arraystretch{.9}
        \begin{tabular}{c}
          \color{darkblue}
          \bf
          Smooth super-set of
          \\
          \color{darkblue}
          \bf
          duality-symmetric
          \\
          \color{darkblue}
          \bf
          NS/RR-field flux densities
        \end{tabular}
      }            
        }
      }
    }{
    \Omega^1_{\mathrm{dR}}(
      -;
      \mathrm{cyc}(\mathfrak{l}S^4)
    )_{\mathrm{clsd}}
    }
    \ar[
      r,
      "{
        \eta^{\,\scalebox{.5}{$\shape$}}
      }",
      "{
        \scalebox{.65}{
          \def\arraystretch{.9}
          \begin{tabular}{c}
            \color{darkgreen}
            \bf
            up to 
            \\
            \color{darkgreen}
            \bf
            deformations
          \end{tabular}
        }
      }"{swap}
    ]
    &
    \underset{
      \mathclap{
        \raisebox{-6pt}{
      \scalebox{.65}{
        \def\tabcolsep{-2pt}
        \def\arraystretch{.9}
        \begin{tabular}{c}
          \color{darkblue}
          \bf
          their deformation
          \\
          \color{darkblue}
          \bf
          $\infty$-groupoid
        \end{tabular}
      }            
        }
      }
    }{
    \shape
    \,
    \Omega^1_{\mathrm{dR}}(
      -;
      \mathrm{cyc}(\mathfrak{l}S^4)
    )_{\mathrm{clsd}}\,.
    }
  \end{tikzcd}
\end{equation}
The bottom part exists, by Thm. \ref{10dIIASugraEoMFromSuperFluxBianchiIdentity}, if and only if $\big(X,(e,\psi,\omega),(\Phi,\lambda)\big)$ solves the equations of motion of 10D IIA SuGra for the given flux densities $(F_2,H_3,F_4)$ and their duals .
\end{claim}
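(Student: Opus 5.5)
The Claim is partly a definition and partly a hypothesis, so the plan is to justify only its two checkable parts: the last sentence, and the internal consistency of the diagram \eqref{FluxQuantizedIIASugraFields}.

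\textbf{Step 1: the bottom row exists iff the equations of motion hold.} A morphism $X \to \Omega^1_{\mathrm{dR}}(-;\mathrm{cyc}(\mathfrak{l}S^4))_{\mathrm{clsd}}$ is, by definition of this classifying smooth super-set, the same as a tuple of super-forms on $X$ satisfying the $\mathrm{cyc}(\mathfrak{l}S^4)$-Bianchi identities \eqref{DimReducedBianchiIdentities}, read in $\Omega^\bullet_{\mathrm{dR}}(X)\cong\mathrm{CE}(TX)$. The specific tuple labeling the horizontal arrow is \eqref{CurvedNS/RRCocycles}. So existence of the bottom row means exactly that the super-fluxes \eqref{CurvedNS/RRCocycles}, built from $(e,\psi,\omega,\Phi,\lambda)$, satisfy those identities. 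By Thm.~\ref{10dIIASugraEoMFromSuperFluxBianchiIdentity}, this holds precisely when the supergravity fields solve the 10D IIA equations of motion. This is just unwinding that theorem, which I take as given.

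\textbf{Step 2: the diagram is well typed.} The remaining entries must live in the right places. The differential character $\mathbf{ch}^{\mathrm{Cyc}(\mathcal{A})}$ of \cite[Def.~9.2]{FSS23Char} requires an identification $\mathfrak{l}\mathrm{Cyc}(\mathcal{A})\cong\mathrm{cyc}(\mathfrak{l}S^4)$. I would obtain it in two moves:
\begin{itemize}
\item apply \cite[Thm.~A]{VPB85} (equivalently \cite[Prop.~3.2]{FSS17}) to get $\mathfrak{l}\mathrm{Cyc}(\mathcal{A})\cong\mathrm{cyc}(\mathfrak{l}\mathcal{A})$;
\item use the hypothesis $\mathfrak{l}\mathcal{A}\cong\mathfrak{l}S^4$ together with functoriality of $\mathrm{cyc}$.
\end{itemize}
One caveat: \cite{VPB85} needs $\mathcal{A}$ to be of finite rational type and nilpotent, or simply connected. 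I would impose this, as the admissibility conditions of \cite[\S 3.2]{SS24Flux} already do. Given this, the homotopy filling the triangle is by definition a global gauge potential, and its existence is a topological obstruction condition independent of the equations of motion.

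\textbf{Compatibility with 11D and the main obstacle.} I would also record the compatibility with the 11D Claim, via Prop.~\ref{NonTrivialCircleBundleCyclification/Oxidation}: a flux-quantized $\mathcal{A}$-cocycle on an $S^1$-symmetric $Y^{11}$ reduces, through the topological adjunction between $\mathrm{Cyc}$ and oxidation over $BS^1$, to a $\mathrm{Cyc}(\mathcal{A})$-cocycle on $X$, compatibly with $\mathbf{ch}$. The obstacle I expect here is that this compatibility must hold not only on flux forms but at the level of homotopies, that is, of gauge potentials. Matching the smooth-super-set adjunction of Prop.~\ref{NonTrivialCircleBundleCyclification/Oxidation} with the topological $\mathrm{Map}(S^1,-)/\!/S^1$ adjunction under $\shape$ is the genuinely nontrivial step, and I would argue it via the rational models of \cite{VPB85}.
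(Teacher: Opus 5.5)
Your proposal is correct and follows essentially the paper's own justification: the last sentence of the Claim is an unwinding of Thm.~\ref{10dIIASugraEoMFromSuperFluxBianchiIdentity}, since a map into $\Omega^1_{\mathrm{dR}}(-;\mathrm{cyc}(\mathfrak{l}S^4))_{\mathrm{clsd}}$ is a closed $\mathrm{cyc}(\mathfrak{l}S^4)$-valued form. The typing of $\mathbf{ch}^{\mathrm{Cyc}(\mathcal{A})}$ comes from \cite[Thm.~A]{VPB85} together with $\mathfrak{l}\mathcal{A}\cong\mathfrak{l}S^4$, and the 11D compatibility is what the paper records separately as Cor.~\ref{Compatibilitywith11DFluxQuantization}. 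Your caveats are sharper than the paper, which invokes \cite{VPB85} for ``any space'' and asserts the homotopy-level 11D compatibility without argument. One further small point: Thm.~\ref{10dIIASugraEoMFromSuperFluxBianchiIdentity} is stated for integral $[F_2^s]$, which the bottom row alone does not supply. It comes instead from the top part via $\mathrm{Cyc}(\mathcal{A})\to \mathrm{B}S^1$, or can be bypassed because the Bianchi identities can be checked patchwise.
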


\smallskip
This proposal for the full globally completed field content of IIA supergravity dictates the global behavior (i.e., patching) of the ``(higher) gauge potentials'' for the NS/RR fluxes on all non-trivial spacetime topologies, in a manner compatible with lifting to full globally completed 11D supergravity (Cor. \ref{Compatibilitywith11DFluxQuantization}). Although 
we will not delve into more general implications of this in the current text, we do note that it immediately implies a ``re-derivation'' of the traditional (locally defined) gauge potentials (cf. \cite[Prop 1.1]{GSS24-SuGra}).
\begin{proposition}[\bf Recovering traditional super-NS/RR-field gauge potentials]
\label{RecoveringTraditionalSuperNS/RRFieldGaugePotentials}
 If the total NS/RR-field charge in diagram \eqref{FluxQuantizedIIASugraFields}
 vanishes, $[\rchi^{IIA}] = 0$ (as happens over any coordinate chart), such that the local charge equivalently factors through the point 
 \vspace{-1mm} 
 \begin{equation}
  \label{DiagramForTrivialCharge}
  \begin{tikzcd}[
    row sep=40pt, column sep=60pt
  ]
    &&[+10pt]
    \ast
    \ar[r, hook]
    &[-20pt]
    \mathrm{Cyc}(\mathcal{A})
    \ar[
      d,
      "{
        \mathbf{ch}  ^{\mathrm{Cyc}(\mathcal{A})}
      }"
    ]
    \\
    X
    \ar[
      rr,
      "{\color{darkgreen}
        (F_2^s, H^s_3, F_4^s, F^s_6, H_7^s)
      }"{name=t}
    ]
    \ar[
      urr,
      bend left=19,
      "{
        \scalebox{.7}{
          \color{darkgreen}
          \bf \small 
          \def\arraystretch{.9}
          \begin{tabular}{c}
          trivial 
          \\
          charge
          \end{tabular}
        }
      }"{sloped, pos=0.5},
      "{
        0
      }"{swap, name=s}
    ]
    \ar[
      from=s,
      to=t,
      Rightarrow,
      "{\color{orangeii} \bf 
        (A_1^s, B_2^s, A_3^s, A^s_5, B_6^s)
      }"{pos=.2}
    ]
    &&
    \Omega^1_{\mathrm{dR}}\big(-;\, \mathrm{Cyc}(\mathfrak{l}S^4)\big)_{\mathrm{clsd}}
    \ar[
      r,
      "{
        \eta^{\,\scalebox{.7}{$\shape$}}
      }"
    ]
    &
    \shape
    \,
    \Omega^1_{\mathrm{dR}}\big(-;\, \mathrm{cyc}(\mathfrak{l}S^4)\big)_{\mathrm{clsd}}
    \mathrlap{\,,}
  \end{tikzcd}
 \end{equation}
 then the NS/RR-field gauge potentials according to Claim \ref{FluxQuantizedSuperFieldsOf10dSugra}
 correspond to super-differential forms
 \begin{equation}
   \label{GlobalGaugePotentials}
   \hspace{2cm}
   \mathllap{
   \scalebox{.7}{
    \color{darkblue}
    \bf
    \begin{tabular}{c}
      ordinary 
      gauge potentials
      \\
      for the NS/RR-fields
    \end{tabular}
    }
  }
  \left.
  \def\arraystretch{1.3}
  \begin{array}{l}
    A_1^s  \,\in\,
    \Omega^1_{\mathrm{dR}}(X)
    \\
    B_2^s \,\in\,
    \Omega^2_{\mathrm{dR}}(X)\\
    A_3^s \,\in\,
    \Omega^3_{\mathrm{dR}}(X)\\
    A_5^s \,\in\,
    \Omega^5_{\mathrm{dR}}(X)\\
    B_6^s \,\in\,
    \Omega^6_{\mathrm{dR}}(X)
  \end{array}
\!\!  \right\}
  \hspace{.4cm}
  \mbox{\rm \small such that}
  \hspace{.4cm}
  \left\{\!\!
  \def\arraystretch{1.2}
  \begin{array}{l}
    \mathrm{d}
    \, 
    A_1^s 
      \;=\; 
    F_2^s
    \,,
    \\
    \mathrm{d}\, 
    B_2^s 
      \;=\;
   H_3^s
    \,,
    \\
    \mathrm{d}\, 
    A_3^s 
      \;=\;
    F_4^s 
    -
    F_2^s\, B_2^s
    \,,
     \\
    \mathrm{d}\, 
    A_5^s 
      \;=\;
    F_6^s 
    +\tfrac{1}{2}A_3^s \, H_3^s + \tfrac{1}{2} B_2^s \,F_4^s
    \,,
     \\
    \mathrm{d}\, 
     B_6^s 
      \;=\;
    H_7^s 
    -
    \tfrac{1}{2} A_3^s \, F_4^s - F_2^s \, A_5^s
    \,,
  \end{array}
  \right.
 \end{equation}
 as traditionally considered in the IIA literature (see, e.g., \cite[(A.19)]{BNS04}).\footnote{Up to terms due to ``integration by parts'', which are dropped when studied in the standard IIA Lagrangian.}
\end{proposition}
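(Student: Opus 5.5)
The plan is to unwind what a homotopy filling diagram \eqref{DiagramForTrivialCharge} means concretely, following the pattern of \cite[Prop 1.1]{GSS24-SuGra}. Since the charge factors through the point, the Chern character of the point sends it to the zero flux configuration. A homotopy in the deformation $\infty$-groupoid $\shape\,\Omega^1_{\mathrm{dR}}\big(-;\mathrm{cyc}(\mathfrak{l}S^4)\big)_{\mathrm{clsd}}$ between $0$ and $(F_2^s,H_3^s,F_4^s,F_6^s,H_7^s)$ is then a concordance. That is, it is a collection of closed $\mathrm{cyc}(\mathfrak{l}S^4)$-valued super-forms $(\widehat F_2,\widehat H_3,\widehat F_4,\widehat F_6,\widehat H_7)$ on $X\times[0,1]$ which satisfy the Bianchi identities \eqref{DimReducedBianchiIdentities} there, vanish at $t=0$, and restrict at $t=1$ to the given super-fluxes. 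The proof then splits into two directions, each treated only up to the natural gauge equivalences, i.e.\ higher concordances.

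For the direction from potentials to a concordance, given forms as in \eqref{GlobalGaugePotentials} I would write down the explicit ``pure gauge'' family
\[
\widehat F_2 = \mathrm{d}(tA_1^s),\quad \widehat H_3 = \mathrm{d}(tB_2^s),\quad \widehat F_4 = \mathrm{d}(tA_3^s) + \widehat F_2\, tB_2^s,
\]
with $\widehat F_6$ and $\widehat H_7$ defined analogously by solving the last two equations of \eqref{GlobalGaugePotentials} for $F_6^s,H_7^s$ after replacing every potential by $t$ times itself and every flux by its hatted version. Here $\mathrm{d}$ is the total differential on $X\times[0,1]$. By construction this family vanishes at $t=0$ and restricts to the given fluxes at $t=1$. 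It remains to check the five Bianchi identities on $X\times[0,1]$ by applying $\mathrm{d}$ and using the lower-degree equations recursively. The first two identities are immediate. For the $F_4$ identity one gets $\mathrm{d}\widehat F_4 = \widehat F_2\,\widehat H_3$, which matches $\widehat H_3\widehat F_4$-type ordering up to the graded sign.

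For the converse direction, I would decompose a given concordance as $\widehat F = F(t) + \mathrm{d}t\wedge \alpha(t)$. The $\mathrm{d}t$-component of the Bianchi identities gives transgression equations of the form $\partial_t F(t) = \mathrm{d}_X\alpha(t) + (\text{bilinear terms in } F(t),\alpha(t))$. Fiber integration over $[0,1]$ then yields forms $A_1^s=\int\alpha_1$, $B_2^s=\int\beta_2$, $A_3^s$, $A_5^s$ and $B_6^s$. For these to satisfy exactly \eqref{GlobalGaugePotentials}, I would first use the higher concordances to gauge-fix the homotopy into the pure-gauge form above, in which all $t$-dependence is linear in the potentials. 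Fiber integration then reproduces the stated equations, with the nonlinear terms coming from $\int_0^1 t\,\mathrm{d}t = \tfrac12$ and similar integrals. In the ungauge-fixed case the equations hold up to exact terms, which are precisely the ``integration by parts'' ambiguities mentioned in the footnote.

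The main obstacle I expect is the nonlinear part, i.e.\ the checks for $F_6$ and $H_7$. The coefficients $\tfrac12$ in $\mathrm{d}A_5^s$ and $\mathrm{d}B_6^s$ arise as averages of the two possible orderings $A_3H_3$ versus $B_2F_4$, and they have to combine with the sign in $\mathrm{d}F_6=-H_3F_4$ to give closure. I would first verify $\mathrm{d}^2A_5^s=0$ and $\mathrm{d}^2B_6^s=0$ directly on $X$ using \eqref{DimReducedBianchiIdentities}. This consistency check shows the right-hand sides of \eqref{GlobalGaugePotentials} are closed and pins down the signs. I would then lift the same computation to $X\times[0,1]$. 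If the $\tfrac12$-symmetrized choice fails on the nose, I would correct $\widehat F_6$ and $\widehat H_7$ by exact terms such as $\mathrm{d}(t^2A_3^sB_2^s)$. This is harmless, because such terms vanish at $t=0$ and only shift the potentials by exact forms at $t=1$.
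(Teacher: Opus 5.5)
\textbf{Gap in the converse direction.} Your forward half is fine, but the converse half has a genuine gap. The step ``use the higher concordances to gauge-fix the homotopy into the pure-gauge form'' is asserted without argument, and as it stands it presupposes its conclusion. To write down a higher concordance from a given null-concordance to the pure-gauge family of some tuple, you must already have that tuple, which is exactly what this direction has to produce. Your fallback, that naive fiber integration of the $\mathrm{d}t$-components gives the equations ``up to exact terms'', does not close the gap unless you exhibit the primitives. Also, those exact terms are not the ``integration by parts'' terms of the footnote: the footnote only concerns the comparison with \cite[(A.19)]{BNS04}, and the relations \eqref{GlobalGaugePotentials} are meant on the nose.

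The missing idea is the homotopy operator on $X\times[0,1]$, $K\big(a(t)+\mathrm{d}t\wedge b(t)\big):=\int_0^t b(s)\,\mathrm{d}s$. It satisfies $\omega=\mathrm{d}K\omega$ whenever $\omega$ is closed and $\omega|_{t=0}=0$. Apply it recursively: $\widehat A_1:=K\widehat F_2$, $\widehat B_2:=K\widehat H_3$, $\widehat A_3:=K(\widehat F_4-\widehat F_2\widehat B_2)$, $\widehat A_5:=K(\widehat F_6+\tfrac12\widehat A_3\widehat H_3+\tfrac12\widehat B_2\widehat F_4)$, $\widehat B_6:=K(\widehat H_7-\tfrac12\widehat A_3\widehat F_4-\widehat F_2\widehat A_5)$. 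Each argument of $K$ is closed by exactly the algebraic identities of your $\mathrm{d}^2=0$ check, and each vanishes at $t=0$. So the restriction to $t=1$ is a tuple satisfying \eqref{GlobalGaugePotentials} exactly. In your notation these are your fiber integrals plus explicit transgression corrections, e.g.\ $A_3^s=\int_0^1\alpha_3\,\mathrm{d}t-\int_0^1\alpha_1(t)\big(\int_0^t\beta_2\big)\mathrm{d}t$. Moreover, $K$ sends your pure-gauge family of a tuple back to that tuple, which gives the surjection. With $\widehat A$ in hand, the curvature of $(1-s)\widehat A+s\,tA$ on $X\times[0,1]_t\times[0,1]_s$ does realize your gauge-fixing, but you no longer need it. Your proposal also does not address the matching of equivalences, Prop.~\ref{CycS4CoboundariesAreRRpotentials}(ii); that would need the analogous argument one level up.

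\textbf{The rest, and comparison with the paper.} Your forward direction needs no repair. With $\widehat F_6$ and $\widehat H_7$ as you define them, $\mathrm{d}\widehat F_6=-\widehat H_3\widehat F_4$ and $\mathrm{d}\widehat H_7=\tfrac12\widehat F_4\widehat F_4+\widehat F_2\widehat F_6$ hold on the nose; the cross terms in $B_2$ and $A_3$ cancel because $\widehat F_2,\widehat F_4$ have even degree. The reason is that the ``curvatures'' of any five forms satisfy the $\mathrm{cyc}(\mathfrak{l}S^4)$ Bianchi identities identically, so no $t^2$-corrections arise. (In the $F_4$ check you mean $\widehat H_3\widehat F_2$.) Once repaired, your direct 10D computation is the ``from scratch'' alternative that the paper only mentions, and it does not need $[F_2]$ to be integral. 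The paper instead cites \cite[Ex.~2.55]{GSS24-SuGra}, as in your first step. It then transports the already-established 11D result \cite[Prop.~2.48]{GSS24-SuGra} through the cyclification isomorphism of concordance $\infty$-groupoids (Prop.~\ref{NonTrivialCircleBundleCyclification/Oxidation}(ii)), using $A_1=e'$, $G_4=F_4-e'H_3$, $G_7=H_7-e'F_6$, $C_3=A_3+e'B_2$, $C_6=B_6+e'A_5$. In that route both coefficients $\tfrac12$ descend from the single $\tfrac12 C_3G_4$. The gauge transformations are likewise obtained by decomposing the 11D ones, with the $N_0$-part coming from changing the connection (Rem.~\ref{EffectOfConnectionChoice}).
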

\begin{proof}
  By \cite[Ex. 2.55]{GSS24-SuGra}, homotopies in \eqref{DiagramForTrivialCharge} correspond to null-coboundaries for $\big(F^s_2, H_3^s, F_4^s, F_6^s, H_7^s \big)$ $\in\, \Omega^1_{\mathrm{dR}}\big(X;\, \mathrm{cyc}(\mathfrak{l}S^4)\big)_\mathrm{clsd}$, and by Prop. \ref{CycS4CoboundariesAreRRpotentials} below  these correspond to exactly the gauge potentials \eqref{GlobalGaugePotentials}.
\end{proof}




\section{
\texorpdfstring{KK reduction of flux fields via cyclification of classifying $L_\infty$-algebras}
{KK reduction of flux fields via cyclification of classifying L-infinity-algebras}
}

We proceed immediately to globalize the tangent-space wise $\mathbb{R}$-algebraic (double) dimensional reduction via the cyclification/oxidation adjunction of $L_\infty$-algebras (reviewed in \cite[\S 2.2]{BMSS19}\cite[\S 2.2]{GSS24-TDuality}), viewed as fibered over points, to the case of (non-trivial) tangent $L_\infty$-algebroids over supermanifolds.

\begin{lemma}[\bf Cyclification/Oxidation on supermanifolds]\label{CyclificationOxidationOnSuperManifolds} Let $X$ be a supermanifold and $\mathfrak{h}$ a super-$L_\infty$-algebra. Fix a choice of normalized and left-invariant fiber-wise coframe along the $S^1$-fiber on the product $X\times S^1$ (equiv. connection), i.e.,
$$
e' \, = \, \dd \phi + A(x,\phi)  \;\; \in \; \Omega^1_\mathrm{dR}(X\times S^1) \, ,
$$
where $\phi$ is the angular coordinate on $S^1$. Then there is a canonically associated bijection between:
\vspace{1mm} 
\begin{itemize} 
\item[\bf (a)]  $S^1$-invariant maps of super-$L_\infty$-algebroids\footnote{ The diagrams in braces could be extended to indicate which manifolds these algebroids are fibered, i.e., over $X\times S^1$ and the point $*$, respectively. We suppress these to keep the presentation as light as possible.} out of the tangent Lie algebroid $T(X\times S^1)$ into $\mathfrak{h}$ (as an $L_\infty$-algebroid over the point $*$),  
  
\item[\bf (b)] maps of super-$L_\infty$-algebroids out of $TX$ into $\mathrm{cyc}(\mathfrak{h})$ that preserve the curvature 2-form $F_2:= \dd e'$,
  \begin{equation}
    \label{LinftyAlgebroidCyclificationHomIsomorphism}
    \Big\{\!\!
    \begin{tikzcd}
      T(X\times S^1)
      \ar[
        rr,
        "{ F }"
      ]
      &&
      \mathfrak{h}
    \end{tikzcd}
   \! \!\Big\}
    \begin{tikzcd}[
      column sep=85pt
    ]
      \ar[
        r,
        shift left=4pt,
        "{  \scalebox{.7}{\color{darkgreen}
            \bf
            reduction}\;\;
          \mathrm{rdc}_{F_2}
        }",
        "{ \sim }"{swap, yshift=-2pt}
      ]
      \ar[
        r,
        <-,
        shift right=4pt,
        "{ \scalebox{.7}{
            \color{darkgreen}
            \bf
            oxidation
          }\;\;
          \mathrm{oxd}_{F_2}
        }"{swap},
      ]
      &
      {}
    \end{tikzcd}
    \bigg\{\!\!
    \begin{tikzcd}[row sep=-3pt, column sep=large]
      TX
      \ar[
        rr,
        "{ \widetilde F }"
      ]
      \ar[
        dr,
        "{ F_2 }"{swap}
      ]
      &&
      \mathrm{cyc}(\mathfrak{h})
      \ar[
        dl,
        "{ \omega_2 }"
      ]
      \\
      &
      b \mathbb{R}
    \end{tikzcd}
    \!\!\!\bigg\}
  \end{equation}
  given by
\begin{equation}
  \label{LinftyAlgebroidCyclificationHomBijection}
  \hspace{-2.7cm} 
  \begin{tikzcd}[sep=0pt]
    T(X\times S^1)
    \ar[
      rr,
      "{ F }"
    ]
    &&
    \mathfrak{h}
    \\
   F^i \, = \,  F^i_{\mathrm{nw}}
    +
    e' \, \int_{S^1}  F^i
    &\longmapsfrom&
    e^i
  \end{tikzcd}
  \hspace{1.3cm}
  \leftrightsquigarrow
  \hspace{1.7cm}
  \begin{tikzcd}[
    row sep=-4pt, 
    column sep=0pt]
    TX
    \ar[
      rr,
      "{ 
        \widetilde{F} 
      }"
    ]
    &&
    \mathrm{cyc}(\mathfrak{h})
    \\
    F^i_{\mathrm{nw}}
    &\longmapsfrom&
    e^i
    \\
    -
    \int_{S^1} F^i
    &\longmapsfrom&
    \mathrm{s}e^i
    \\
    F_2
    &\longmapsfrom&
    \omega_2
    \mathrlap{\,,}
  \end{tikzcd}
\end{equation}
where on the left is the unique decomposition of any $F^i$ such that $\iota_{\xi}F^i_{\mathrm{nw}}=\int_{S^1} F^i_{\mathrm{nw}}=0$. 

\end{itemize} 
\end{lemma}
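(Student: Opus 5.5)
The plan is to reduce the statement to the standard algebraic cyclification/oxidation bijection, applied fibrewise, and to check that the resulting formulas glue globally because $e'$ is a global form. Recall that $\mathrm{CE}(\mathrm{cyc}(\mathfrak{h}))$ is generated by $e^i$, $\mathrm{s}e^i$ and $\omega_2$, with
$$
\dd e^i = \dd_{\mathfrak{h}} e^i + \omega_2 \, \mathrm{s}e^i,
\qquad
\dd\, \mathrm{s}e^i = -\mathrm{s}(\dd_{\mathfrak{h}} e^i),
\qquad
\dd \omega_2 = 0,
$$
where $\mathrm{s}$ is the degree $-1$ derivation extended by $\mathrm{s}(\omega_2)=0$. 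A map $TX \to \mathrm{cyc}(\mathfrak{h})$ over $b\mathbb{R}$ is a dg-algebra map $\mathrm{CE}(\mathrm{cyc}(\mathfrak{h})) \to \Omega^\bullet_{\mathrm{dR}}(X)$ sending $\omega_2 \mapsto F_2$. An $S^1$-invariant map $T(X\times S^1)\to\mathfrak{h}$ is a collection $F^i$ of $S^1$-invariant forms satisfying the $\mathfrak{h}$-Bianchi identities $\dd F^i = P^i(F)$, where $P^i$ is the polynomial given by $\dd_{\mathfrak{h}} e^i$. Here I read "$S^1$-invariant" as including invariance of $e'$, i.e.\ $A=\pi^*A(x)$, so that $F_2:=\dd e'$ is basic.

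\textbf{Step 1 (decomposition).} Let $\xi=\partial_\phi$. With $\iota_\xi e'=1$, every invariant form decomposes uniquely as
$$
F^i = F^i_{\mathrm{nw}} + e'\,\iota_\xi F^i ,
\qquad
F^i_{\mathrm{nw}} := F^i - e'\,\iota_\xi F^i .
$$
Both $F^i_{\mathrm{nw}}$ and $\iota_\xi F^i$ are basic, since they are invariant and annihilated by $\iota_\xi$. This uses $\iota_\xi\iota_\xi=0$, and it uses Cartan's formula $L_\xi = \iota_\xi\dd + \dd\iota_\xi$ to see that $\iota_\xi \dd(\text{basic})=0$. They therefore descend to $X$. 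For basic forms, $\int_{S^1}$ of $e'\alpha$ equals $\alpha$ (fibre normalization), so $\iota_\xi F^i = \int_{S^1}F^i$. This gives the displayed decomposition and defines $\widetilde F$ on generators. The sign of the $\mathrm{s}e^i$ component is fixed by the ordering convention $e'\int$ versus $\int e'$; I would check it against (\ref{FlatNS/RRCocycles}) rather than rederive it.

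\textbf{Step 2 (Bianchi identities correspond).} Expand $\dd F^i = P^i(F)$ using
$$
\dd F^i = \dd F^i_{\mathrm{nw}} + F_2\,\iota_\xi F^i - e'\,\dd\iota_\xi F^i .
$$
Since $(e')^2=0$, polynomial expansion gives
$$
P^i(F_{\mathrm{nw}} + e'\iota_\xi F) = P^i(F_{\mathrm{nw}}) + e'\,\big(\iota_\xi\text{-derivative of } P^i\big),
$$
and the latter term is precisely $e'$ times the image of $-\mathrm{s}(P^i)$ under the substitution $e\mapsto F_{\mathrm{nw}}$, $\mathrm{s}e \mapsto -\iota_\xi F$. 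This holds because $\mathrm{s}$ and $\iota_\xi$ are both degree $-1$ derivations and $e'$ is odd, so the signs match. Comparing the components with and without $e'$, using uniqueness of the decomposition, the single identity on $X\times S^1$ is equivalent to the pair
$$
\dd F^i_{\mathrm{nw}} = P^i(F_{\mathrm{nw}}) + F_2\cdot\big(-\iota_\xi F^i\big),
\qquad
\dd\big(-\iota_\xi F^i\big) = -\mathrm{s}P^i(\ldots) .
$$
These are exactly the images of the $\mathrm{cyc}(\mathfrak{h})$ differentials, with $\omega_2\mapsto F_2$ closed. The condition $\dd\omega_2 = 0$ holds automatically.

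\textbf{Step 3 (inverse and naturality).} Oxidation sends $\widetilde F$ to $F^i := \pi^*\widetilde F(e^i) - e'\,\pi^*\widetilde F(\mathrm{s}e^i)$. Step 2, read backwards, shows this is an $\mathfrak{h}$-cocycle. Its invariance is clear, and the uniqueness in Step 1 shows the two constructions are mutually inverse. Since the cochain level of a super-$L_\infty$-algebroid map is determined by the images of generators, this completes the bijection. Everything here is global because the construction is built only from $e'$ and $\xi$.

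The main obstacle is the sign bookkeeping in Step 2 in the super-graded setting: one has to verify that $\iota_\xi$, acting on polynomials in odd and even super-forms, reproduces exactly the derivation $\mathrm{s}$ of the cyclification, including the minus sign convention on $\mathrm{s}e^i$. I would handle this by proving a one-line lemma: for any polynomial $P$,
$$
P(a+e'b) = P(a) + e'\,(\mathrm{s}P)(a,b),
$$
where $b$ occupies the shifted slot, proved by induction on monomials. Everything else then follows formally.
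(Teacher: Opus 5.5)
Your proposal is correct and follows essentially the same route as the paper. You decompose each invariant $F^i$ via $e'$ into non-winding and winding parts, use $L_\xi=\dd\iota_\xi+\iota_\xi\dd$ to see that both parts are basic and so descend uniquely to $X$, and then match the $\mathfrak{h}$-Bianchi identities with the $\mathrm{cyc}(\mathfrak{h})$ ones. The paper delegates that last step to the formal tangent-space argument of its cited reference, whereas your identity $P(a+e'b)=P(a)+e'\,(\mathrm{s}P)(a,b)$ carries it out explicitly; this also fixes the sign $\mathrm{s}e^i\mapsto-\int_{S^1}F^i$, so you don't need to check it against the flat formulas.
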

\begin{proof}
Morphisms of $L_\infty$-algebroids out of $T(X\times S^1)$ into $\mathfrak{h}$ are equivalently morphisms of sCDGAs $\mathrm{CE} (\mathfrak{h}) \rightarrow \Omega^\bullet_{\mathrm{dR}}(X\times S^1)$, hence in particular completely determined by the image of generators of $\mathrm{CE}(\mathfrak{h})$
$$
e^i \longmapsto  F^i \, = \,  F^i_{\mathrm{nw}}
    +
    e' \, \int_{S^1}  F^i ,
$$
where we decomposed the forms $F^i$ into winding and non-winding modes via the choice of coframe $e'$ along the $S^1$-factor. The condition of $S^1$-invariance on such morphisms corresponds to the vanishing
of the Lie derivatives
$$
L_{\xi} F^i \, \equiv \, \dd \iota_{\xi} F^i+ \iota_{\xi} \dd F^i\,= \,  0\, , 
$$
along the (vertical) vector field $\xi = \frac{\partial}{\partial \phi}$ generating the $S^1$-action. By the assumed $S^1$-invariance of $e'$ and the fact that 
$$
\iota_{\xi} F^i_{\mathrm{nw}}\, =0 \, =\,  \iota_{\xi} \int_{S^1} F^i ,
$$
the $S^1$-invariance $L_\xi F^i=0$ is equivalent to 
$$
\iota_{\xi} \dd F^i_{\mathrm{nw}} \, =0\, = \, \iota_{\xi} \dd \int_{S^1} F^i .  
$$
This means that the winding and non-winding forms in the decomposition of an $S^1$-invariant form $F^i$ are \textit{basic} with respect to the projection $X\times S^1\rightarrow X$, hence arise as pullbacks of some unique forms on the base $X$. This yields the bijection of morphisms at the level of super-commutative graded algebras.

At this stage, the fact that $F^*$ preserves the differentials on the left if and only if $\tilde{F}^*$ preserves the corresponding differentials on the right, or equivalently that the fluxes $\{F^{i}\}_{i\in I}$ satisfy the  $\mathfrak{h}$-Bianchi identities if and only if $\big\{F^i_{\mathrm{nw}},\, \int_{S^1} F^i\big\}_{i\in I}$ satisfy the $\mathrm{cyc}(\mathfrak{h})$-Bianchi identities, follows the exact same formal steps as the $L_\infty$-algebraic proof from \cite[Prop. 2.25]{GSS24-TDuality} (i.e., at the tangent-space level).
\end{proof}

The content of the Lemma can be restated as the association to any choice of connection on $X\times S^1$ of a bijection of $S^1$-invariant, closed $\mathfrak{h}$-valued forms on $X\times S^1$ and closed $\mathrm{cyc}(\mathfrak{h})$-valued forms on $X$ with fixed $2$-form component $F_2 := \dd e' $
$$
\Omega^1_{\mathrm{dR}}\big(X\times S^1; \,\mathfrak{h}\big)_{\mathrm{ \mathrm{clsd}, \, inv.}} \; \cong_{e'} \; \Omega^1_{\mathrm{dR}}\big(X; \,\mathrm{cyc}(\mathfrak{h})\big)_{ \mathrm{clsd}, \, F_2} \, .
$$
 
For exposition, we have stated the result for the trivial circle bundle $X\times S^1$, hence equivalently with trivial 1st Chern class $[F_2]=0\in H_{\mathrm{dR}}^2(X)$. However, the proof holds verbatim for any non-trivial (super) principal $S^1$-bundle 
classified by a non-trivial Chern class
$$
[F_2] \;\; \in \;\; H_{\mathrm{dR}}^2(X;\mathbb{Z})\, ,
$$
i.e., with closed (and integral) curvature $\dd F_2 =0$ but not (globally) exact on X. In fact, this further extends to an isomorphism between the corresponding  integrating $\infty$-groupoids of (higher) concordances ($\equiv$ gauge potentials, cf. Cor. \ref{CycS4CoboundariesAreRRpotentials}).

\begin{proposition}[\bf Supermanifold cyclification/oxidation for non-trivial circle bundles] \label{NonTrivialCircleBundleCyclification/Oxidation}
Let $X$ be a supermanifold supplied with an even, integral closed but potentially non-exact 2-form $ F_2\in \Omega^2_{\mathrm{dR}}(X;\mathbb{Z})$ and $\mathfrak{h}$ a super-$L_\infty$-algebra. For any representative $S^1$-principal bundle 
$$
S^1\longhookrightarrow Y \xlongrightarrow{\pi} X\, ,
$$ 
classified by the Chern class $[F_2]\in H^2_{\mathrm{dR}}(X, \mathbb{Z})$, fix a choice of connection\footnote{We abuse the nomenclature as usual in the literature. The actual $U(1)$-connections and curvatures are, of course, valued in the Lie algebra $\mathrm{Lie}(U(1))\cong i \FR$, hence related to our forms as $i e' \in \Omega^1_\mathrm{dR}(Y; i \FR)$ and $i F_2 \in \Omega^2_\mathrm{dR}(X;i\FR) $.}
$e' \in \Omega^1_{\mathrm{dR}}(Y)$
trivializing the (pullback) curvature
$$
\dd e' \, = \, \pi^* F_2\, ,
$$
hence in particular so that 
\vspace{1mm} 
$$
 \iota_\xi e'\equiv \int_{S^1}e'  \, = \, 1\,,
$$
for $\xi$ the fundamental (vertical) vector field generating the $S^1$-action on $Y$. Then:
\vspace{1mm} 
\begin{itemize} 
\item[{\bf (i)}] There is a canonically associated bijection between:
\vspace{1mm} 
\begin{itemize} 
\item[\bf (a)] $S^1$-invariant maps of super-$L_\infty$-algebroids out of the tangent Lie algebroid $TY$ into $\mathfrak{h}$,  
  
\item[\bf (b)] maps of super-$L_\infty$-algebroids out of $TX\cong T(Y/S^1)$ into $\mathrm{cyc}(\mathfrak{h})$ that preserve the curvature 2-form $F_2$,
  \begin{equation}
    \label{NontrivialLinftyAlgebroidCyclificationHomIsomorphism}
    \Big\{\!\!
    \begin{tikzcd}
      TY
      \ar[
        rr,
        "{ F }"
      ]
      &&
      \mathfrak{h}
    \end{tikzcd}
   \! \!\Big\}
    \begin{tikzcd}[
      column sep=85pt
    ]
      \ar[
        r,
        shift left=5pt,
        "{  \scalebox{.7}{\color{darkgreen}
            \bf
            reduction}\;\;
          \mathrm{rdc}_{F_2}
        }",
        "{ \sim }"{swap, yshift=-2pt}
      ]
      \ar[
        r,
        <-,
        shift right=5pt,
        "{ \scalebox{.7}{
            \color{darkgreen}
            \bf
            oxidation
          }\;\;
          \mathrm{oxd}_{F_2}
        }"{swap},
      ]
      &
      {}
    \end{tikzcd}
    \bigg\{\!\!\!
    \begin{tikzcd}[row sep=-3pt, column sep=large]
      TX
      \ar[
        rr,
        "{ \widetilde F }"
      ]
      \ar[
        dr,
        "{ F_2 }"{swap}
      ]
      &&
      \mathrm{cyc}(\mathfrak{h})
      \ar[
        dl,
        "{ \omega_2 }"
      ]
      \\
      &
      b \mathbb{R}
    \end{tikzcd}
    \!\!\!\bigg\}
  \end{equation}
  given by
\begin{equation}
  \label{NontrivialLinftyAlgebroidCyclificationHomBijection}
  \hspace{-2.7cm} 
  \begin{tikzcd}[sep=0pt]
    TY
    \ar[
      rr,
      "{ F }"
    ]
    &&
    \mathfrak{h}
    \\
   F^i \, = \,  F^i_{\mathrm{nw}}
    +
    e' \, \int_{S^1}  F^i
    &\longmapsfrom&
    e^i
  \end{tikzcd}
  \hspace{1cm}
  \leftrightsquigarrow
  \hspace{1cm}
  \begin{tikzcd}[
    row sep=-4pt, 
    column sep=0pt]
    TX
    \ar[
      rr,
      "{ 
        \widetilde{F} 
      }"
    ]
    &&
    \mathrm{cyc}(\mathfrak{h})
    \\
    F^i_{\mathrm{nw}}
    &\longmapsfrom&
    e^i
    \\
    -
    \int_{S^1} F^i
    &\longmapsfrom&
    \mathrm{s}e^i
    \\
    F_2
    &\longmapsfrom&
    \omega_2
    \mathrlap{\,.}
  \end{tikzcd}
\end{equation}
That is, associated to any choice of $U(1)$-connection $e'$ on $Y$, there is a bijection of $S^1$-invariant, closed $\mathfrak{h}$-valued forms on $Y$ and closed $\mathrm{cyc}(\mathfrak{h})$-valued forms on $X\cong Y/S^1$ with fixed 2-form component $F_2 := \dd e'$
$$
\Omega^1_{\mathrm{dR}}\big(Y; \,\mathfrak{h}\big)_{\mathrm{clsd},\, \mathrm{inv.}} \;\; \cong_{e'} \;\; \Omega^1_{\mathrm{dR}}\big(X; \,\mathrm{cyc}(\mathfrak{h})\big)_{ \mathrm{clsd}, \, F_2}\, .
$$
\end{itemize} 

\item[{\bf (ii)}]
This extends to a (strict) isomorphism of the corresponding $\infty$-groupoids of fluxes with their higher (globally defined)\footnote{In particular, depending on the topology of $X$ these might not exist. Hence, strictly speaking, this result applies non-trivially patch-wise on the $
\infty$-groupoids of the restricted fluxes on charts $U\hookrightarrow X$ (cf. Prop. \ref{CycS4CoboundariesAreRRpotentials}). } concordances 
$$
\int  \Omega^1_{\mathrm{dR}}\big(Y; \,\mathfrak{h}\big)_{\mathrm{ \mathrm{clsd}, inv.}} \, \cong_{e'}  
\int  \Omega^1_{\mathrm{dR}}\big(X; \,\mathrm{cyc}(\mathfrak{h})\big)_{ \mathrm{clsd}, F_2} \;\;
\xhookrightarrow{\quad} \; \int  \Omega^1_{\mathrm{dR}}\big(X; \,\mathrm{cyc}(\mathfrak{h})\big)_{ \mathrm{clsd}}  \quad \in \quad \mathrm{SimpSet}_{\mathrm{Kan}} \, .
$$
\end{itemize}
\end{proposition}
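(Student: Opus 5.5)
For part (i) I will follow the proof of Lemma \ref{CyclificationOxidationOnSuperManifolds} and replace the product $X\times S^1$ by the principal bundle $Y\xrightarrow{\pi}X$. The only global input is the connection $e'$. Since $\iota_\xi e'=1$, every form $F\in\Omega^\bullet_{\mathrm{dR}}(Y)$ decomposes uniquely as
$$
F \;=\; F_{\mathrm{nw}} + e'\,\textstyle\int_{S^1}F\,,
\qquad
\textstyle\int_{S^1}F := \iota_\xi F\,,
\qquad
F_{\mathrm{nw}} := F - e'\,\iota_\xi F\,.
$$
Here $\iota_\xi F_{\mathrm{nw}}=0$, using $\iota_\xi\iota_\xi=0$ and the Koszul sign for moving $\iota_\xi$ past $e'$. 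The connection is $S^1$-invariant and $\dd e'=\pi^*F_2$ is horizontal. Hence for invariant $F$, both $\iota_\xi F$ and $F_{\mathrm{nw}}$ are horizontal and invariant, i.e.\ basic. By the standard fact that basic forms on a principal $S^1$-bundle are exactly the pullbacks $\pi^*\Omega^\bullet_{\mathrm{dR}}(X)$ (checked locally on trivializing charts, where it reduces to the Lemma), each is the pullback of a unique form on $X$. This gives the bijection at the level of graded-commutative algebra maps $\mathrm{CE}(\mathfrak{h})\to\Omega^\bullet_{\mathrm{dR}}(Y)^{S^1}$ versus assignments of generators $e^i,\mathrm{s}e^i,\omega_2$ into $\Omega^\bullet_{\mathrm{dR}}(X)$, with $\omega_2\mapsto F_2$.

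Next I will show that the differentials match. I expand $\dd F^i = \dd F^i_{\mathrm{nw}} + \pi^*F_2\,\int F^i - e'\,\dd\!\int F^i$ and compare it with $\mu^i(F)$, the image under $F$ of the $\mathrm{CE}(\mathfrak{h})$-differential of $e^i$. This is a polynomial in the $F^j$, so it splits into a non-winding part (terms with no $e'$) and an $e'$-linear part, using $e'e'=0$. Matching the two parts gives exactly the defining differentials of $\mathrm{cyc}(\mathfrak{h})$: $\dd e^i=\mu^i_{\mathrm{nw}}+\omega_2\,\mathrm{s}e^i$, and $\dd\,\mathrm{s}e^i=-\mathrm{s}\mu^i$, with $\mathrm{s}$ acting as the derivation $\int_{S^1}$ up to the sign convention in \eqref{NontrivialLinftyAlgebroidCyclificationHomBijection}. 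This is formally identical to \cite[Prop. 2.25]{GSS24-TDuality}, since only $\iota_\xi$, $e'e'=0$ and $\dd e'=\pi^*F_2$ enter, and all of these hold globally on $Y$. The non-exactness of $F_2$ is never used. So $F$ is closed $\mathfrak{h}$-valued if and only if $\widetilde F$ is closed $\mathrm{cyc}(\mathfrak{h})$-valued with $\omega_2\mapsto F_2$.

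For part (ii) I will use that the $\infty$-groupoids of concordances are given by the simplicial sets $[k]\mapsto\Omega^1_{\mathrm{dR}}(-\times\Delta^k;\,\cdot)_{\mathrm{clsd}}$. I will apply (i) levelwise to the bundle $Y\times\Delta^k\to X\times\Delta^k$, with pulled-back connection $\mathrm{pr}^*e'$, requiring the curvature component to be $\mathrm{pr}^*F_2$. The main obstacle is naturality: the bijection has to commute with the simplicial face and degeneracy maps. This holds because these maps are pullbacks along $\Delta^k\to\Delta^l$, which commute with $\iota_\xi$ (the field $\xi$ is vertical along $S^1$, tangent to $Y$) and fix $e'$. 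The levelwise bijections therefore assemble into an isomorphism of simplicial sets. The inclusion into $\int\Omega^1_{\mathrm{dR}}(X;\mathrm{cyc}(\mathfrak{h}))_{\mathrm{clsd}}$ is the simplicial subset where the $\omega_2$-component equals the constant $\mathrm{pr}^*F_2$. Fixing that component means the corresponding concordance component is trivial. Kan-ness is inherited because the ambient simplicial set is Kan by the standard argument, and fillers can be chosen with constant $\omega_2$-component, or alternatively because the subset is isomorphic to the Kan complex on the $Y$-side.
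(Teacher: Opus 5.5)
Your proposal is correct and follows essentially the paper's route. Part (i) is the argument of Lemma~\ref{CyclificationOxidationOnSuperManifolds} globalized through the $e'$-decomposition: basic forms descend to $X$, and the matching of differentials is the same formal computation as in the tangent-space case. Part (ii) applies (i) to $Y\times\Delta^k\to X\times\Delta^k$ with the pulled-back connection; the paper uses cubes $[0,1]^n$ instead of simplices.

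Your naturality check under face and degeneracy maps, and your identification of the $F_2$-fixed subcomplex as the one with constant $\omega_2$-component $\mathrm{pr}^*F_2$, make explicit what the paper leaves implicit. This constant-$\omega_2$ reading is the one consistent with the pulled-back connection. By contrast, the paper's displayed $n=1$ formula $\widehat{F}_2 = t\,F_2 + \dd t\, e'$ is really a local null-concordance of $F_2$, with $e'$ standing for a local potential, which is how it is later used in Prop.~\ref{CycS4CoboundariesAreRRpotentials}.
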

\begin{proof}
The proof of Lemma \ref{CyclificationOxidationOnSuperManifolds} continues to hold, with the only caveat that the coordinate form of the connection $e'=d\phi + A(x,\phi)$ holds only locally over $X$, i.e., on local trivializations $Y|_{U} \cong U\times S^1$. 

The extension to the $\infty$-groupoids of higher concordances follows by lifting the bijection \eqref{NontrivialLinftyAlgebroidCyclificationHomIsomorphism} to have instead as domains the $L_\infty$-algebroids $T(Y\times [0,1]^{\times n})$ and $T(X\times [0,1]^{\times n})$ that support (higher) concordances, using the induced pulled back U(1) connection (denoted by the same symbol $e'$). Explicitly, for instance for the case of $n=1$, we have\footnote{For $n\geq 2$, it might seem that this formulation would land us in the world of \textit{cubical sets} rather than a simplicial sets. However, since by definition higher concordances are constant along the original form data (cf. \cite[Def. 2.46]{GSS24-SuGra}) this resulting cubical set is, roughly speaking, identified with a corresponding simplicial set.} 
\vspace{-2mm} 
\begin{equation*}
  \hspace{-1cm} 
  \begin{tikzcd}[sep=0pt]
    T\big( Y \times [0,1]_t\big )
    \ar[
      rr,
      "{ \widehat{F}= F+ \dd t\, A \;}"
    ]
    &&
    \mathfrak{h}
    \\
   \widehat{F}^i \, = \,  \widehat{F}^i_{\mathrm{nw}}
    +
    e' \, \int_{S^1}  \widehat{F}^i
    &\longmapsfrom&
    e^i
  \end{tikzcd}
  \hspace{1cm}
  \leftrightsquigarrow
  \hspace{1cm}
  \begin{tikzcd}[
    row sep=1pt, 
    column sep=0pt]
    T\big(X\times [0,1]_t\big) 
    \ar[
      rr,
      "{ 
        \widetilde{\widehat{F}} 
      }"
    ]
    &&
    \mathrm{cyc}(\mathfrak{h})
    \\
    \widehat{F}^i_{\mathrm{nw}}= F^i_\mathrm{nw} + \dd t\, A_{\mathrm{nw}}^{i} 
    &\longmapsfrom&
    e^i
    \\
    -
    \int_{S^1} \widehat{F}^i = -
    \int_{S^1} F^i - \dd t \, \int_{S^1} A^i 
    &\longmapsfrom&
    \mathrm{s}e^i
    \\
    \widehat{F}_2= t\,  F_2 + \dd t \, e'
    &\longmapsfrom&
    \omega_2
    \mathrlap{\,.}
  \end{tikzcd}
\end{equation*}

 \vspace{-5mm} 
\end{proof}

\begin{remark}[\bf Effect of choice of connection]\label{EffectOfConnectionChoice}
$\,$
\begin{itemize} 
\item[\bf (i)] A gauge equivalent choice of $U(1)$-connection 
$$
\overline{e} = e' - i \,u \,  \dd u^{-1} = e' + \dd N_0 
$$ 
on $Y$, where $u = e^{i N_0}\, :\,  Y \longrightarrow U(1)$ is an equivariant map ($\equiv$ vertical automorphism of $Y$), for some $N_0 \in \Omega^0(Y)$, yields the same curvature $F_2$ but a different decomposition of invariant forms as
\begin{align*}
 F^i \, &= \,  \overline{F}^i_{\mathrm{nw}}
    +
   \overline{e} \, \int_{S^1}  F^i \, =\, \overline{F}^i_{\mathrm{nw}} + (e' + \dd N_0 ) \, \int_{S^1}  F^i \, \\
&=\, \bigg( \overline{F}^i_{\mathrm{nw}}  +\dd N_0  \wedge \, \int_{S^1}  F^i \bigg)  + e'\, \int_{S^1}  F^i , 
\end{align*} 
so that 
\vspace{2mm} 
$$
\overline{F}^i_{\mathrm{nw}} = F^i_{\mathrm{nw}} -\dd N_0 \, \wedge \int_{S^1}  F^i  . 
$$
Hence, the formula for the corresponding bijection of $L_\infty$-algebroid morphisms from \eqref{NontrivialLinftyAlgebroidCyclificationHomBijection} differs accordingly.

\item[\bf (ii)] A potentially gauge inequivalent choice of connection $\widetilde{e}= e' + a$, where $a \in \Omega^1_\mathrm{dR}(Y)$ is an arbitrary horizontal and equivariant $1$-form, yields similarly a different decomposition with
$$
\widetilde{F}^i_{\mathrm{nw}} = F^i_{\mathrm{nw}} - a \wedge \int_{S^1} F^i  ,
$$
but now starting with  a different curvature $2$-form $\widetilde{F}_2 = F_2 + \dd a$ on $X$, further (minimally) modifying  the bijection from \eqref{NontrivialLinftyAlgebroidCyclificationHomBijection}.
\end{itemize} 
\end{remark}

In the field theoretic context, and in particular 11D supergravity \cite[Prop. 2.48]{GSS24-SuGra} with closed $\mathfrak{l}S^4$-valued flux forms, higher concordances correspond to \text{local} (higher) gauge potentials / coboundaries between gauge-equivalent fluxes. Thus, part {\bf (ii)} of the proposition may be interpreted as a rigorous and flux quantization consistent formulation of the (algorithmic) fact that $S^1$-invariant (higher) gauge potentials for the $C$-field of 11D supergravity may be identified with gauge potentials for the dimensionally reduced IIA NS/RR-field for 10D supergravity.

\begin{proposition}[\bf Coboundaries for closed cyc$(\mathfrak{l}S^4)$
-valued forms are local NS/RR potentials]\label{CycS4CoboundariesAreRRpotentials} 

\noindent Given a $(F_2, H_3, F_4, F_6, H_7) \, \in \Omega^{1}_{\mathrm{dR}}\big (X; \, \mathrm{cyc}(\mathfrak{l}S^4)\big)_{\mathrm{clsd}}$ and any local chart $\iota_U : U \hookrightarrow X$,

\smallskip 
\begin{itemize}[itemsep=1pt,leftmargin=.75cm]
\item[{\bf (i)}] there is a natural surjection 

\begin{itemize}[leftmargin=.9cm]
\item
from local null-coboundaries \cite[Def. 3.91]{FSS23Char} for $(F_2, H_3, F_4, F_6, H_7)$ in $\mathrm{cyc}(\mathfrak{l}S^4)$-valued de Rham cohomology, 
\item
to 5-tuples of ordinary differential
forms
\begin{equation}
  \label{TuplesOfPotentialForms}
  \hspace{-1cm} 
  \left.
  \def\arraystretch{1.3}
  \begin{array}{l}
    A_1  \,\in\,
    \Omega^1_{\mathrm{dR}}(U)
    \\
    B_2 \,\in\,
    \Omega^2_{\mathrm{dR}}(U)\\
    A_3 \,\in\,
    \Omega^3_{\mathrm{dR}}(U)\\
    A_5 \,\in\,
    \Omega^5_{\mathrm{dR}}(U)\\
    B_6 \,\in\,
    \Omega^6_{\mathrm{dR}}(U)
  \end{array}
\!\!  \right\}
  \hspace{.4cm}
  \mbox{\rm \small such that}
  \hspace{.4cm}
  \left\{\!\!
  \def\arraystretch{1.2}
  \begin{array}{l}
    \mathrm{d}
    \, 
    A_1 
      \;=\; 
    F_2
    \,,
    \\
    \mathrm{d}\, 
    B_2 
      \;=\;
   H_3
    \,,
    \\
    \mathrm{d}\, 
    A_3 
      \;=\;
    F_4 
    -
    F_2\, B_2
    \,,
     \\
    \mathrm{d}\, 
    A_5 
      \;=\;
    F_6 
    +\tfrac{1}{2}A_3 \, H_3 + \tfrac{1}{2} B_2 \,F_4
    \,,
     \\
    \mathrm{d}\, 
     B_6 
      \;=\;
    H_7 
    -
    \tfrac{1}{2} A_3 \, F_4 - F_2 \, A_5
    \,.
  \end{array}
  \right.
\end{equation}
\end{itemize}

\item[{\bf (ii)}] This surjection respects local equivalence classes, where
\begin{itemize}[leftmargin=.9cm]
\item equivalence of $\mathrm{cyc}(\mathfrak{l}S^4)$-coboundaries is by coboundaries of coboundaries \cite[Def. 2.46]{GSS24-SuGra},
\item (finite gauge) equivalence of the tuples \eqref{TuplesOfPotentialForms} is defined as follows:  
\begin{align}
  \label{EquivalenceOfTraditionalRRFieldGaugePotentials}
  (A_1,\,B_2,\, A_3,\, A_5,\,  B_6)
  \;\sim\; &
    (A'_1,\,B'_2,\, A'_3,\, A'_5,\,  B'_6)
  \hspace{.5cm}
  \nonumber 
  \\
  \Updownarrow
  \nonumber 
  \\
  \exists
  \,
  \left.
  \def\arraystretch{1.2}
  \begin{array}{l}
  N_0 \,\in\,\Omega^0_{\mathrm{dR}}(U)
  \\
  N_1 \,\in\,\Omega^1_{\mathrm{dR}}(U)
   \\
  M_2 \,\in\,\Omega^2_{\mathrm{dR}}(U)
   \\
  N_4 \,\in\,\Omega^4_{\mathrm{dR}}(U)
   \\
  M_5 \,\in\,\Omega^5_{\mathrm{dR}}(U)
  \end{array}
  \!\!\! \right\}
  \;\;
  \mbox{\rm \small such that} &
  \;\;
  \left\{\!\!\!
  \def\arraystretch{1.1}
  \begin{array}{l}
    \mathrm{d}\, N_0 \;=\;
    A_1' - A_1\, , 
    \\
      \mathrm{d}\, N_1 \;=\;
    B'_2 - B_2 \, ,
    \\
    \mathrm{d}\, M_2 
      \;=\; 
    A'_3 - A_3
    + F_2\, N_1 
    \,,
    \\
      \mathrm{d}\, N_4 \;=\; A'_5 - A_5 - \frac{1}{2} (B'_2 \, A_3 - A'_3 \, B_2)\,, 
     \\
      \mathrm{d}\, M_5 \;=\;
     B'_6 - B_6 + F_2 \,  N_4 - \frac{1}{2} A'_3 \,  A_3 \,  . 
  \end{array}
  \right.
\end{align}
\end{itemize}
\item[{\bf (iii)}] In the case where $F_2$ is already taken to be flux-quantized in $\mathrm{B}U(1)\cong \mathrm{B}^2 \mathbb{Z}$, hence with integral $[F_2] \in H^2_{\mathrm{dR}}(X;\mathbb{Z})$, the local gauge potentials $A_1$ arise as pullbacks of connections $e'$ on the corresponding $S^1$-bundle $Y$ over $X$. Accordingly, their gauge transformations as vertical automorphisms of the bundle $Y$.
\end{itemize}
\end{proposition}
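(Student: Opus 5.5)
The plan is to follow the pattern of the analogous statement for $\mathfrak{l}S^4$ in \cite[Prop. 1.1]{GSS24-SuGra}, using the explicit description of closed $\mathrm{cyc}(\mathfrak{l}S^4)$-valued forms from the Bianchi identities \eqref{DimReducedBianchiIdentities}. By \cite[Def. 3.91]{FSS23Char}, a local null-coboundary is a closed $\mathrm{cyc}(\mathfrak{l}S^4)$-valued form $(\widehat F_2,\widehat H_3,\widehat F_4,\widehat F_6,\widehat H_7)$ on $U\times[0,1]_t$ that restricts to $0$ at $t=0$ and to $(F_2,H_3,F_4,F_6,H_7)$ at $t=1$. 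We split each component as $\widehat F = \widehat F|_{\dd t=0} + \dd t\wedge \widehat F_{(t)}$. The Bianchi identities on $U\times[0,1]$ then become $t$-evolution equations $\partial_t(\cdot) = \dd(\cdot)_{(t)} + (\text{bilinear terms})$. The map to potentials is obtained by fiber integration $\int_0^1$.

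\textbf{Step 1 (the map).} We set $A_1:=\int_0^1 \widehat F_2$ and $B_2:=\int_0^1\widehat H_3$. Since $F_2,H_3$ satisfy linear Bianchi identities, Stokes' theorem with the boundary conditions gives $\dd A_1=F_2$ and $\dd B_2=H_3$. For the higher components, fiber integration of $\dd\widehat F_4=\widehat H_3\widehat F_2$ produces $\dd\!\int_0^1\!\widehat F_4 = F_4 - \int_0^1(\widehat H_3\widehat F_2)$. The integral on the right is not of the form $F_2 B_2$. We therefore define $A_3$ as $\int_0^1\widehat F_4$ corrected by a bilinear term in the partially integrated potentials $\int_0^t\widehat F_2$ and $\int_0^t\widehat H_3$, chosen so that the defect becomes exactly $-F_2B_2$. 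We do the same for $A_5$ and $B_6$, where the symmetric halves $\tfrac12(A_3H_3+B_2F_4)$ and $\tfrac12A_3F_4$ in \eqref{TuplesOfPotentialForms} are exactly what such a symmetrized correction produces (integration by parts). I expect this bookkeeping, with the correct signs and factors of $\tfrac12$ in degrees 5 and 6, to be the main obstacle. My first attempt would be to pass to the gauge $\widehat F|_{\dd t=0}= t\cdot(\ldots)$ described in Step 2, in which all these integrals can be evaluated directly.

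\textbf{Step 2 (surjectivity).} Given a tuple \eqref{TuplesOfPotentialForms}, we construct an explicit null-coboundary as differentials of $t$-rescaled potentials, corrected by products:
\begin{align*}
\widehat F_2 &= \dd(tA_1), \\
\widehat H_3 &= \dd(tB_2), \\
\widehat F_4 &= \dd(tA_3) + \widehat F_2\,(tB_2), \\
\widehat F_6 &= \dd(tA_5) - \tfrac12 (tA_3)\widehat H_3 - \tfrac12 (tB_2)\widehat F_4 + \ldots
\end{align*}
The remaining terms in $\widehat F_6$ and $\widehat H_7$ are fixed by demanding the $\mathrm{cyc}(\mathfrak{l}S^4)$-Bianchi identities on $U\times[0,1]$. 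For example, $\dd\widehat F_4 = \widehat F_2\widehat H_3 = \widehat H_3\widehat F_2$, since $\widehat F_2$ is even. At $t=1$ the equations \eqref{TuplesOfPotentialForms} return the given fluxes, and at $t=0$ everything vanishes. Checking that Step 1 applied to this coboundary returns the original tuple then proves surjectivity.

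\textbf{Step 3 (equivalences).} For (ii) we repeat the construction one level up. A coboundary-of-coboundaries is a closed form on $U\times[0,1]^2$ (cf. \cite[Def. 2.46]{GSS24-SuGra}). Integrating over the second interval and applying the Stokes argument of Step 1 yields forms $N_0,N_1,M_2,N_4,M_5$, and the fiber-integrated Bianchi identities give the relations \eqref{EquivalenceOfTraditionalRRFieldGaugePotentials}. Conversely, given these forms, we build the $2$-cube explicitly as in Step 2.

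For (iii) we invoke Prop. \ref{NonTrivialCircleBundleCyclification/Oxidation}. A local section of $Y|_U\cong U\times S^1$ pulls back $e'$ to a form $A_1$ with $\dd A_1=F_2$. By Remark \ref{EffectOfConnectionChoice}(i), the gauge shift $\dd N_0$ corresponds to a vertical automorphism $u=e^{iN_0}$ of $Y$.
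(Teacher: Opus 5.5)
Your direct route is sound, but it is not the one the paper carries out. The paper only remarks that (i)--(ii) could be obtained by redoing \cite[Prop. 2.48]{GSS24-SuGra} with target $\mathrm{cyc}(\mathfrak{l}S^4)$, which is your approach. What it actually does is oxidize. It fixes $e'$ with $\mathrm{d}e'=F_2$ and uses the $\infty$-groupoid isomorphism of Prop.~\ref{NonTrivialCircleBundleCyclification/Oxidation}(ii) to turn the null-coboundary into one for $(G_4=F_4-e'H_3,\ G_7=H_7-e'F_6)$ on $Y$. It then takes the 11D potentials $\mathrm{d}C_3=G_4$, $\mathrm{d}C_6=G_7-\tfrac12 C_3G_4$ and splits $C_3=A_3+e'B_2$, $C_6=B_6+e'A_5$, and likewise $L_2=M_2-e'N_1$, $L_5=M_5-e'N_4$ one level up. $N_0$ is recovered separately by changing $e'$ to $e'+\mathrm{d}N_0$.

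That route gets every $\tfrac12$ and the $F_2N_1$, $F_2N_4$ terms for free from the 11D terms $\tfrac12 C_3G_4$, $\tfrac12 C_3'C_3$ and from $\mathrm{d}e'=F_2$. Its costs are that it needs a circle bundle with connection and pins the $\omega_2$-leg of the coboundary to $tF_2+\mathrm{d}t\,e'$. Yours needs no bundle and treats $A_1$, $N_0$ like everything else; the price is the bookkeeping you defer, which does close. Write $a_1,b_2,a_3,a_5,b_6$ for the components $\widehat F_{(t)}$ of $\widehat F_2,\widehat H_3,\widehat F_4,\widehat F_6,\widehat H_7$, and set
\begin{align*}
  \widetilde b_2(t) &= \int_0^t b_2\,\mathrm{d}\tau\,,
  &
  \widetilde a_3(t) &= \int_0^t \big(a_3-\widetilde b_2\,a_1\big)\,\mathrm{d}\tau\,,
  \\
  \widetilde a_5(t) &= \int_0^t \big(a_5+\tfrac12\,a_3\,\widetilde b_2-\tfrac12\,\widetilde a_3\,b_2\big)\,\mathrm{d}\tau\,,
  &
  \widetilde b_6(t) &= \int_0^t \big(b_6-a_1\,\widetilde a_5-\tfrac12\,a_3\,\widetilde a_3\big)\,\mathrm{d}\tau\,.
\end{align*}
Differentiating in $t$ (everything vanishes at $t=0$) shows that for every $t$ the tuple $\big(\int_0^t a_1\,\mathrm{d}\tau,\,\widetilde b_2,\,\widetilde a_3,\,\widetilde a_5,\,\widetilde b_6\big)$ satisfies \eqref{TuplesOfPotentialForms} with respect to the fluxes $\widehat F|_{\mathrm{d}t=0}$ at time $t$. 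Evaluating at $t=1$ is your map. Do not define it by passing to the $t$-linear gauge of Step 2: a general null-coboundary is not of that form, and proving it is equivalent to one presupposes (ii).

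In Step 2 the omitted terms are zero. Already $\widehat F_6=\mathrm{d}(tA_5)-\tfrac12\,tA_3\,\widehat H_3-\tfrac12\,tB_2\,\widehat F_4$ and $\widehat H_7=\mathrm{d}(tB_6)+\tfrac12\,tA_3\,\widehat F_4+\widehat F_2\,tA_5$ satisfy the Bianchi identities. On this coboundary the integrands above reduce exactly to $A_1,B_2,A_3,A_5,B_6$, which gives surjectivity.

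For (ii), apply the map over the base $U\times[0,1]_s$ and integrate in $s$. This gives $\mathrm{d}M_2=A_3'-A_3+F_2N_1$ on the nose. It also gives $\mathrm{d}N_4=A_5'-A_5-\tfrac12(M_2H_3+N_1F_4)$ and $\mathrm{d}M_5=B_6'-B_6+F_2N_4+\tfrac12M_2F_4$. These turn into \eqref{EquivalenceOfTraditionalRRFieldGaugePotentials} after $N_4\mapsto N_4+\tfrac12(M_2B_2-N_1A_3)$ and $M_5\mapsto M_5-\tfrac12M_2A_3$, which is the normal form the 11D route produces automatically. Your (iii) agrees with the paper.
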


\begin{proof}
Parts \textbf{(i)} and \textbf{(ii)} follow by essentially redoing the $C$-field computation of the proof of \cite[Prop. 2.48]{GSS24-SuGra} from scratch, but with the target classifying $L_\infty$-algebra being instead $\mathrm{cyc}(\mathfrak{l}S^4)$ -- hence without even assuming $F_2$ is integral and further fixing its connection. Alternatively by assuming that $F_2$ is integral, which is indeed the case for our application to IIA supergravity,
this follows by dimensionally reducing the result of \cite[Prop. 2.48]{GSS24-SuGra}.

Indeed, let $(F_2, H_3, F_4, F_6, H_7)$ be a cyc$(\mathfrak{l}S^4)$-cocycle on $X$ with integral $[F_2]\in H^2_{\mathrm{dR}}(X,\mathbb{Z})$. Fix a connection $e'$ on the corresponding principal $S^1$-bundle classified by $[F_2]$ with $\dd e' = F_2 $. By the corresponding isomorphism of $\infty$-groupoids from Prop. \ref{NonTrivialCircleBundleCyclification/Oxidation}, a local null-coboundary $\widehat{F}: T(U\times [0,1]_t) \rightarrow \mathrm{cyc}(\mathfrak{l}S^4)$ for $(F_2= \dd e', H_3, F_4, F_6, H_7)$ is equivalently a null-coboundary $\widehat{G} : T(Y\times [0,1]_t) \rightarrow \mathfrak{l}S^4$ for 
\begin{equation}\label{DecomposedG4G7}
\big(G_4 := F_4 - e' H_3\, , \, G_7 := H_7 - e' F_6\big) \quad \in \quad \Omega^{1}_\mathrm{dR}(Y;\, \mathfrak{l}S^4)_{\mathrm{clsd}} \, ,
\end{equation}
since the coboundary for $\dd F_2 = 0 $ has already been fixed via the choice of $A_1= e'$. \footnote{Here, we abuse notation and denote the (locally defined) gauge potential on $U\hookrightarrow X$ by the same symbol as the corresponding globally defined connection on $Y$.} By \cite[Prop. 2.48]{GSS24-SuGra}, to such a null coboundary corresponds 
(an equivalence class of) pair(s) of ordinary (gauge potential) differential
forms
\begin{equation}
  \label{PairsOfPotentialForms}
  \left.
  \def\arraystretch{1.3}
  \begin{array}{l}
    C_3 \,\in\,
    \Omega^3_{\mathrm{dR}}(Y_U)
    \\
    C_6 \,\in\,
    \Omega^6_{\mathrm{dR}}(Y_U)
  \end{array}
\!\!  \right\}
  \hspace{.4cm}
  \mbox{\rm \small such that}
  \hspace{.4cm}
  \left\{\!\!
  \def\arraystretch{1.2}
  \begin{array}{l}
    \mathrm{d}
    \, 
    C_3 
      \;=\; 
    G_4
    \,,
    \\
    \mathrm{d}\, 
    C_6 
      \;=\;
    G_7 
    -
    \tfrac{1}{2} C_3 \, G_4
    \,.
  \end{array}
  \right.
\end{equation}
Decomposing these in terms of basic forms, uniquely with respect to $e'$, as
\begin{align}\label{CpotentialDecomposition}
C_3 \, = \, A_3 + e'\,  B_2  \, , \qquad C_6 \, = \, B_6 + e' \, A_5  
\end{align}
and substituting in the differential condition \eqref{PairsOfPotentialForms}, together with the decomposition of the $(G_4, G_7)$ from \eqref{DecomposedG4G7}, results precisely in a 5-tuple $(A_1= e' , B_2, A_3, A_5, B_6)$ satisfying the differential conditions \eqref{TuplesOfPotentialForms}. For instance, the first relation from \eqref{PairsOfPotentialForms} decomposes in terms of basic forms as
\begin{align*}
\dd A_3 + F_2\, B_2 - e' \, \dd B_2 = F_4 - e'\, H_3   
\end{align*}
which is equivalent to $\dd A_3 = F_4  - F_2 \, B_2 $ and $\dd B_2 = H_3 $, with the remaining two relations following by analogously decomposing $\dd C_6 = G_7 - \tfrac{1}{2}C_3 \, G_4$ in terms of basic forms. 

Next, consider a concordance $\widehat{\widehat{F}}: T\big(U\times [0,1]_t \times [0,1]_s \big)$ between two (local) null-coboundaries $\widehat{F}, \widehat{F}'$ for the original cocycle $(F_2= \dd e', H_3, F_4, F_6, H_7)$. Then, again by the isomorphism of $\infty$-groupoids corresponding to the choice of connection $e'$ from Prop. \ref{NonTrivialCircleBundleCyclification/Oxidation}, these correspond precisely to a concordance 
$$
\widehat{\widehat{G}} : T\big(Y_U\times [0,1]_t\times [0,1]_s\big) \longrightarrow \mathfrak{l}S^4
$$
between the lifted coboundaries $\widehat{G},\widehat{G}'$ corresponding to $\widehat{F}$ and $\widehat{F}'$, respectively. By \cite[Prop. 2.48]{GSS24-SuGra}, to such a concordance corresponds a pair of (gauge-of-gauge potential, or equivalently, gauge transformations) differential forms  \begin{equation}
  \label{PairsOfGaugeOfGaugePotentialForms}
  \left.
  \def\arraystretch{1.3}
  \begin{array}{l}
    L_2 \,\in\,
    \Omega^2_{\mathrm{dR}}(Y_U)
    \\
    L_5 \,\in\,
    \Omega^5_{\mathrm{dR}}(Y_U)
  \end{array}
\!\!  \right\}
  \hspace{.4cm}
  \mbox{\rm \small such that}
  \hspace{.4cm}
  \left\{\!\!
  \def\arraystretch{1.2}
  \begin{array}{l}
    \mathrm{d}
    \, 
     L_2 
      \;=\; 
    C_3' - C_3
    \,,
    \\
    \mathrm{d}\, 
    L_5 
      \;=\;
    C_6' - C_6 
    -
    \tfrac{1}{2} C_3' \, C_3
    \,.
  \end{array}
  \right.
\end{equation}
Decomposing these in terms of basic forms, uniquely with respect to $e'$, as
$$
L_2 \, = \, M_2 - e'\,  N_1 \, , \qquad L_5 \, = \, M_5 - e' \, N_4  
$$
and substituting in the differential relation \eqref{PairsOfGaugeOfGaugePotentialForms}, along with the decomposition of the correspond 11D C-field gauge potentials from \eqref{CpotentialDecomposition}, results into the following equivalent basic conditions
\begin{align*}
    \mathrm{d}\, N_1 \;&=\;
    B'_2 - B_2 \, ,
    \\
\mathrm{d}\, M_2 
      \;&=\; 
    A'_3 - A_3
    + F_2\, N_1
    \,,
    \\
 \mathrm{d}\, N_4 \;&=\; A'_5 - A_5 - \tfrac{1}{2} (B'_2 \, A_3 - A'_3 \, B_2)\,, 
     \\
\mathrm{d}\, M_5 \;&=\;
     B'_6 - B_6 + F_2 \,  N_4 - \tfrac{1}{2} A'_3 \,  A_3 \, .
\end{align*}

Unsurprisingly, these have the form of ``(finite) gauge transformations'' for the IIA gauge potentials, but with a vanishing transformation on the gauge potential $A_1$ for the $2$-flux $F_2$. Of course, this is to be expected since in deriving these, we passed through the isomorphism of Prop. \ref{NonTrivialCircleBundleCyclification/Oxidation} corresponding to a \textit{fixed} connection $A_1=e'$ for $F_2$ (i.e., global flux-quantized potential). To recover the remaining transformation, one simply repeats the above algorithm, but by choosing any gauge-equivalent connection $A_1'=e' + \dd N_0 $ for $N_0 \in C^\infty(Y)$, i.e.,
$$\dd N_0 \, = \, A_1' - A_1$$
and passing through the corresponding isomorphism of $\infty$-groupoids, which essentially amounts to a correspondingly
modified decomposition of the lifted 11D differential forms (Rem. \ref{EffectOfConnectionChoice}). The latter is easily seen not to propagate the transformation $N_0$ to the rest of the (higher) gauge fields. 
\end{proof}

\begin{remark}[\bf The Lie algebroid of $S^1$-invariant forms]\label{LieAlgebroidOfS1InvariantForms}
As an aside of mathematical interest, we note that the $S^1$-invariance condition of differential forms on the total space $Y$ may also be recast in the $L_\infty$-algebroid language, by considering instead the source to be a quotient Lie algebroid of $TY$. Indeed, by the fully faithful inclusion of $L_\infty$-algebroids into the opposite category of sCDGAs, the canonical injection of $S^1$-invariant forms 
$$
\Omega^{\bullet}_{\mathrm{dR}}(Y)_{\mathrm{inv.}} \longhookrightarrow \Omega^{\bullet}_{\mathrm{dR}}(Y)
$$
must correspond dually to a projection of Lie algebroids
$$
TY \longrightarrow  E\, . 
$$
The choice of a $U(1)$-connection allows for an explicit identification of the Lie algebroid structure on $E$ via the induced isomorphism
$$
\Omega^\bullet_{\mathrm{dR}}(X)[e']\, \cong \,  \Omega^\bullet_{\mathrm{dR}}(X)\oplus \Omega^\bullet_{\mathrm{dR}}(X)\cdot e' \xrightarrow{\quad \sim \quad } \Omega^{\bullet}_{\mathrm{dR}}(Y)_{\mathrm{inv.}} \,.
$$
Upon identifying the new generator $e'$ with the unit section  
$$
e'\equiv 1_X \quad \in \quad \Gamma_X(\mathbb{R}^*_X)
$$
of the dual to the trivial line bundle $\mathbb{R}_X = X\times \FR \rightarrow X$, the left hand side is further identified as the exterior algebra of sections
$$
\bigwedge^\bullet_{C^\infty(X)} \Gamma_X(T^*X \oplus \mathbb{R}^*_X)\, .
$$
This description makes it manifest that the sCDGA of invariant forms on a principal $S^1$-bundle (supplied with a  connection $e'$ of curvature $F_2$) is (isomorphic to) the Chevalley--Eilenberg algebra of the Lie algebroid
$$E\, \cong \, TX\oplus \mathbb{R}_X$$
with anchor map the projection
$\pi_1 \, : \, TX \oplus \mathbb{R}_X \longrightarrow TX$
and bracket
$$
[X+f,\, Y+g] \, := 
\,  [X,Y] + L_{X} g - L_Y f - F_2(X,Y) \, ,
$$
hence fitting in the short exact sequence of Lie algebroids over X
$$ \mathbb{R}_X \longhookrightarrow TX\oplus \mathbb{R}_X \longrightarrow TX \, .$$
In this explicit form, this Lie algebroid has been recently termed the ``Almeida-Molino Lie algebroid'' by \cite{Mein}, originally due to
\cite{AM85}. 

Lastly, note that any choice of gauge equivalent connection $\overline{e} = e' + \dd N_0$ yields precisely the same Lie algebroid, while in general, an inequivalent connection $\widetilde{e}_2$ with curvature $\widetilde{F}_2$ yields only a \textit{quasi-isomorphic} Lie algebroid, since both curvatures define the same (integral) de Rham cohomology class so that $\widetilde{F}_2 = F_2 + \dd C$ for \textit{exact} $2$-form $\dd C\in \Omega^{2}_{\mathrm{dR}}(X)$. This is, of course, as expected since all of the resulting Lie algebroids arise as duals of sCDGAs which are all isomorphic to the \textit{same} sCDGA $\Omega^{\bullet}_{\mathrm{dR}}(Y)_{\mathrm{inv.}}$.
\end{remark}

\section{
\texorpdfstring
{IIA SuGra via $\mathrm{cyc}(\mathfrak{l} S^4)$-valued cocycles on super-spacetime}
{IIA SuGra via cyc(lS4)-valued cocycles on super-spacetime}
}
Recall that a super-spacetime (\cite[Def. 2.74]{GSS24-SuGra}) of dimension $\big(1,d\, \vert \,  \mathbf{N}\big)$ is a supermanifold $Y$ locally isomorphic to $\mathbb{R}^{1,d\vert \mathbf{N}}$, that is supplied with a super Cartan connection (the super-gravitational field)
$
(e 
, \, \psi, \, \omega)\, 
$ with respect to the super-group inclusion $\mathrm{Spin}(1,d) \hookrightarrow  \mathrm{ISO}\big( \mathbb{R}^{1,d|\mathbf{N}}\big)$\,, which is furthermore (super-)torsion-free in the bosonic direction\footnote{Note that, in line with the discussion from Sec. \eqref{IntroductionSection}, this condition should also be thought of as a (canonical) curved global extension of the tangent-space cochain relation from Eq. \eqref{SuperMinkowskiCE}.}
\begin{align}\label{TorsionLessCondition}
T\, := \, \dd \, e + \omega \wedge e - \big(\, \overline{\psi}
    \,\Gamma\,
    \psi
    \big) \, \equiv \, 0 \, . 
\end{align}
Being a Cartan connection entails, in particular, that over local trivializations of the supermanifold $\{U_{i} \xrightarrow{\sim} \mathbb{R}^{1,d\vert \mathbf{N}}\}_{i\in I}$, the gravitational field is represented by the ``Spin connection'' gauge field 1-forms 
$$
\big\{ \, \omega_i \, \in \, \Omega^{1}_{\mathrm{dR}}\big(U_i ;\,  \mathfrak{so}(1,d)\big)  \big\}_{i\in I}
$$
and ``super vielbein'' 1-forms
$$
\big\{ (e_i, \, \psi_i) \, \in \, \Omega^{1}_\mathrm{dR}\big(U_i;\,  \mathbb{R}^{1,d|\mathbf{N}}\big)  \big\}_{i\in I} \, .
$$
These locally defined representatives are related on overlaps via $\mathrm{Spin}(1,d)$-valued transition functions $\{\gamma_{ij} : U_{ij} \longrightarrow \mathrm{Spin}(1,d) \}_{i,j\in I}$. The latter pair of $1$-forms is required to constitute a super-coframe, i.e., an isomorphism of super-tangent bundles 
$$
TU_i \xlongrightarrow{\sim} T \mathbb{R}^{1,d|\mathbf{N}}\, ,
$$
for all $i\in I$, in a compatible manner over overlaps, yielding a global isomorphism of super-tangent bundles\footnote{Here we follow the notation from \cite{Gi26}: For a background $G$-structure on $X$, presented as a principal $G$-bundle $P\rightarrow X$, the associated vector bundle corresponding to $G$-action on vector space $V$ is denoted by 
$V_G \, = \, P\times_G V \, .$ }
$$
(e,\, \psi) \, : \, TY \xrightarrow{\quad \sim \quad } \FR^{1,d\vert \mathbf{N}}_{\mathrm{Spin}(1,d)}
$$
to the corresponding associated vector bundle.
\medskip

\noindent {\bf Symmetric super-spacetimes.}
Here, we are interested in the dimensional reduction of $S^1$-symmetric super-spacetimes. The appropriate $\mathrm{Spin}(1,d)$-covariant isometry condition (hence manifestly globally well-defined) on such super-spacetimes is formulated in \cite[Def. 3.12]{Gi26}: Namely,  a super-spacetime $$
\big(Y,\,(e,\psi,\omega)\big)
$$ is (infinitesimally) $S^1$-symmetric with respect to some $S^1$-action on $Y$ if the \textit{Kosmann Lie derivative}\footnote{Explicitly, this is given by $L_\xi^K e = L_\xi e -  B_\xi\cdot e $ where $B$ is (locally) defined as the map $B_i: \CX(Y)\rightarrow \Omega^1\big(U_i; \mathfrak{so}(1,d)\big)$ given by the matrix elements $(B_\xi)^{a}{}_b := -\eta^{af} \, (L_\xi e)_{[bf]} $. The Kosmann Lie derivative on $\psi$ is given by the same formula, with $B$ acting via the corresponding $\mathrm{Spin}(1,d)$ representation, while on the gauge field $\omega$ it is given by  $L_\xi^K \omega = L_\xi \omega - \dd^\omega B_\xi$. See \cite{Gi26} for more on its geometrical origin, and for a comprehensive account of the theory of covariant Lie derivatives.} of the gravitational field, along the corresponding fundamental vector field, vanishes
$$
\big(L_\xi^K e, \, L_\xi^K \psi , \, L_\xi^K \omega \big) \, = \, 0 \, .
$$
The crucial aspect of this condition is that the vanishing $L_\xi^K e =0$ of the bosonic part is equivalent to the Killing equation $L_\xi g^{\even} := L_\xi ( \eta_{ab} \, e^a\otimes e^b) =0$ for the corresponding (bosonic) metric\footnote{Hence, in particular, lifting the traditional isometry condition on bosonic spacetimes to super-spacetimes.} \cite[Lem. 3.14]{Gi26}. Phrased via the \textit{covariant} Kosmann Lie derivatives, however, this condition makes sense in the first-order (coframe) formalism on arbitrary topologies, and in particular also on non-parallelizable super-spacetimes. Nevertheless, \cite[Cor. 3.16]{Gi26} exhibits that when the $S^1$-action is well-behaved, in that the super-spacetime Y forms a (super-) principal $S^1$-bundle, there always exists a  \textit{partial} gauge fixing of the fields, corresponding to a reduction of the structure group\footnote{Here we assume the action is spacelike. A timelike action has an analogous reduction of the corresponding signature. } $\mathrm{Spin}(1,d-1)\hookrightarrow \mathrm{Spin}(1,d)$,
$$
(e,\psi,\om) \xrightarrow{\quad r \quad } \big(\hat{e},\hat{\psi},\hat{\om}\big) : = r\cdot (e,\psi,\om)
$$
in which the Kosmann vanishing condition of the given field configuration, and of its whole  Spin$(1,d-1)$-gauge equivalence class, \textit{coincides} with the vanishing of their naive Lie derivative
\begin{align}\label{NaiveLieDerivativeVanishing}
\big(L_\xi \hat{e}, L_\xi \hat{\psi},L_\xi \hat{\om}\big)= 0\, .
\end{align}

Henceforth, we will assume the super-spacetime $Y$ has a principal $S^1$-bundle structure, and further that its $S^1$-symmetric gravitational field is already (partially) gauge-fixed so that its naive Lie derivative vanishes. Nevertheless, with that understood, we will use the un-hatted notation for brevity. For the sake of being further explicit with the physical application at hand, we will also fix the bosonic dimension to be $D=1+10=11$ and the corresponding spinorial representation to be $\mathbf{32} \, \in \mathrm{Rep}_\FR\big(\mathrm{Spin}{(1,10)}\big)$, hence branching as a direct sum of ``left and right'' representations (see, e.g., \cite[\S 3.1]{GSS24-TDuality}) $\mathbf{16}\oplus \overline{\mathbf{16}} \, \in \mathrm{Rep}_\FR\big(\mathrm{Spin}{(1,9)}\big)$ under the reduction of the structure group $\mathrm{Spin}(1,9)\hookrightarrow \mathrm{Spin}(1,10)$ induced by the partial gauge fixing from  \cite[Cor. 3.16]{Gi26}.

\medskip 
\noindent {\bf The torsion condition in the partially fixed gauge.}
Towards dimensional reduction, we are interested in the decomposition of the (bosonic) torsion-less condition \eqref{TorsionLessCondition} in the (partially) fixed gauge where the symmetry condition corresponds to the naive Lie derivative vanishing \eqref{NaiveLieDerivativeVanishing}. Recall from \cite[Cor. 3.16]{Gi26} that in this situation, the bosonic component of the coframe is decomposed as 
\begin{align}\label{11DPartialGaugeFix}
e_Y \,=\, \big(e_X, \, e^{\ten}\big) \,,
\end{align}
where $e_X$ is a basic field\footnote{We will abuse notation and denote basic (i.e., $S^1$-invariant and horizontal) form fields on $Y$ by the same notation as their counterparts on $X$, hence omitting the pullback symbol $\pi^*$.}, the pullback of a (bosonic) coframe on the base of the principal $S^1$-bundle $X^{10}:= Y^{11} / S^1$, and $e^{\ten}$ is a $1$-form that yields a global parallelism
of the vertical tangent bundle $$
e^{\ten} \, : \, VY \xlongrightarrow{\sim} Y\times \FR 
$$
where $\FR\cong i \, \FR \cong \mathfrak{u}(1)$ is the Lie algebra of the corresponding $U(1)\cong S^1$ group. Since
$$
\iota_\xi e^{\ten}\, = \, \Phi \;\; \in \;\; C^\infty_{\even}(Y)\,,
$$
for a unique (non-vanishing, even) $U(1)$-invariant function $\Phi$, where $\xi$ is the corresponding fundamental vector field, 
it follows that $e^{10}$ is proportional to a (unique) $U(1)$-connection $e' \in \mathrm{Conn}_{U(1)}(Y)$
\begin{align}\label{DilatonConnectionRelation}
e^{\ten} \, = \, \Phi \, e' \, , 
\end{align} 
whose (basic) \textit{curvature} we denote by
$$
F_2 \, = \, \dd e' \; \in \; \Omega^2_\even(Y)\, ,
$$
being the pullback of a (globally defined and integral) 2-form $F_2$ on $X$.

Crucially, extracting the conditions with respect to the base $X^{10}$ requires decomposing also the fermionic coframe in terms of basic fields. That is, even though the fermionic coframe $\psi_Y$ is $C^\infty(Y)$-linearly independent\footnote{In that their components jointly form a basis for the super-vector space ($C^\infty(Y)$-module) of differential $1$-forms on $Y$.} of the bosonic part $e_Y$, it is possible that $\psi_Y$ has a non-trivial expansion in the direction of the connection $e'$, i.e.,
\begin{align}\label{11DPartialGaugeFixFermion}
\psi_Y \, = \, \psi_X\, + \, \lambda \, e'
\end{align}
for some unique $S^1$-invariant \textit{fermionic} $1$-form $\psi_X$ and $0$-form $\lambda$, both (locally) valued in the branched representation $\mathbf{16}\oplus \overline{\mathbf{16}} \, \in \mathrm{Rep}_\FR\big(\mathrm{Spin}{(1,9)}\big)$.\footnote{One can split these according to their irreducible components, e.g., $\psi= \psi_L + \psi_R \, \in \,  \mathbf{16}\oplus \overline{\mathbf{16}}$. We do not do this explicitly to keep the formulas more concise.} Notice, in analogy to the fact that $\iota_\xi e^{\ten} = \Phi$, this means in particular that\footnote{The first version of \cite{Gi26} states the $S^1$-symmetric condition only for even vector fields with no legs along the fermionic frame (cf. ftn. of Def. 3.13 therein). However, this is a needless loss of generality, and the same definition is applicable for any even vector field. That is $\xi = \xi^a e_a + \xi^\beta \psi_\beta  \, \in \, \CX_\even(Y)$, so that $\xi^a \in C^\infty_\even(Y)$ and $\xi^\beta \in C^\infty_\odd(Y).$}
$$
\iota_\xi \psi_Y \, = \, \lambda \, .
$$
Just as with the curvature $2$-form, both $\psi_X$ and $\lambda$ are  $S^1$-invariant and \textit{horizontal}, hence descending to fields on the base. 

Overall, $S^1$-invariant super-coframes $(e_Y, \psi_Y)$ on $Y$, in the partially fixed gauge of \cite[Cor. 3.16]{Gi26}, are in $1$-$1$ correspondence with super-coframes 
\begin{align}\label{BasicSuperCoframe}
(e_X,\,  \psi_X) \, : \, TX \xrightarrow{\quad \sim \quad} \FR^{1,9\vert \mathbf{16}\oplus \mathbf{\overline{16}}}_{\mathrm{Spin}(1,9)}
\end{align}
on the base $X$, together with a pair of fields
\begin{align}\label{DilatonDilatinoPair}
(\Phi, \, \lambda) \quad \in \quad C^\infty_\even(X) \times C^\infty_\odd\big(X,\,(  \mathbf{16}\oplus \overline{\mathbf{16}})_{\mathrm{Spin(1,9)}}\big)\, ,   
\end{align}
i.e., the superized \textit{dilaton} and \textit{dilatino} ``superpartners'', and a (locally defined) $U(1)$-gauge field $A$ with curvature
$$F_2 \, \in \, \Omega^2_\even(X)\, .$$

Lastly, taking into account the (locally defined) $S^1$-invariant Spin$(1,10)$-gauge field $\omega_Y$ on $Y$, its matrix components are decomposed, similarly, as
\begin{align}\label{11DPartialGaugeFixConnection}
\omega_Y^{ab} \,  = \, \omega_X^{ab} \, + \, \widetilde{\omega}^{ab}_X \, e'
\end{align}
for some (locally defined) invariant $S^1$-invariant and \textit{horizontal} $1$-forms and $0$-forms valued in $\mathfrak{spin}(1,10)\cong \mathfrak{so}(1,10)$, respectively. Since the partial gauge fixing reduces the structure group to that of an $S^1$-invariant\footnote{In that the corresponding reduced transition functions are also $S^1$-invariant (via the integration action) due to  \eqref{NaiveLieDerivativeVanishing}.} $\mathrm{Spin}(1,9)\hookrightarrow \mathrm{Spin}(1,10)$, it follows that the subset of (locally defined) matrix
components 
\begin{align*}
\{\omega_{X}^{ab}\}_{a,b=0,\cdots,9}
\end{align*}
defines a $\mathrm{Spin}(1,9)$-gauge field $\omega_X$ on the base supermanifold $X$. Hence, the triplet 
$$
(e_X,\, \psi_X,\, \omega_X)
$$
forms a Cartan connection on $X$ with respect to the super-group inclusion $\mathrm{Spin}(1,9)\hookrightarrow \mathrm{ISO}(\FR^{1,9\vert \mathbf{16}\oplus \overline{\mathbf{16}}})$.

\smallskip 
With the above details spelled out, the 11D torsion constraint \eqref{TorsionLessCondition} decomposes as follows.
\begin{lemma}[\bf 11D Torsion constraint in terms of basic 10D fields]\label{11DTorsionContraintIn10D}
When the $(1+10)$-bosonic super-torsion constraint $\dd \, e_Y + \omega_Y\wedge e_Y - \big(\,\overline{\psi}_Y 
    \,\Gamma\,
    \psi_Y
    \big) = 0 $ \eqref{TorsionLessCondition} is expanded in the symmetric partially fixed gauge \eqref{NaiveLieDerivativeVanishing} with super-coframe and Spin-connection basic expansions \eqref{11DPartialGaugeFix} \eqref{11DPartialGaugeFixFermion} and \eqref{11DPartialGaugeFixConnection}, respectively, it may be expressed as the following list of conditions:
\begin{itemize}
\item[\bf (i)] The $(1+9)$-bosonic super-torsion constraint
\begin{align}\label{10DTorsionConstraint}
\dd \, e_X + \omega_X\wedge e_X - \big(\,\overline{\psi}_X 
    \,\Gamma\,
    \psi_X
    \big) \, = \, 0\,.
\end{align}
\item[\bf (ii)]  The super-coframe expansion of the $U(1)$-curvature $2$-form flux $F_2$ on $X$ as
\begin{align}\label{2FluxExpansion} 
F_2 \, = \, - \frac{1}{\Phi}\, \omega_{X\, [b,}
{}_{c]}{}^{\ten} \, e_X^b \,e_X^c \, + \, \frac{2}{\Phi^2} \big(\,\overline{\psi}_X \, \Gamma_{b} \, \lambda \big)\, e_X^b \, + \, \frac{1}{\Phi}\, \big(\,\overline{\psi}_X 
    \,\Gamma^{\ten}\,
    \psi_X
    \big) \,.
\end{align}

    \item[\bf (iii)]  The super-coframe expansion of the dilaton derivative $\dd \Phi$ on $X$ as
\begin{align}\label{10DDilatonDifferrential}
 \dd \Phi \, = \, \widetilde{\omega}_X{}^{\ten}{}_a\, e^a_X \, + \, 2 \big(\,\overline{\psi}_X 
    \,\Gamma^{\ten}\,
    \lambda
    \big)    \,.
\end{align}

\item[\bf (iv)]  The following prescription for the reconstruction of the remaining parts of the 11D connection, as
$$
\widetilde{\omega}_X{}^{a}{}_b \, =\, \Phi\, \omega_{X\, b,}{}^{a\ten} \;,  \quad \qquad \omega_{X\, \beta,}{}^{a\ten} \,\psi^\beta_X = \, \frac{2}{\Phi}\, \big(\,\overline{\psi}_X\, \Gamma^a\, \lambda \big)\,,
$$
for $a,b =1,\cdots\, 9$, which can be further expressed as functions of the components of $F_2$ via \eqref{2FluxExpansion}.
\end{itemize}
\end{lemma}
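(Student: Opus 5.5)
The plan is to substitute the basic decompositions \eqref{11DPartialGaugeFix}, \eqref{DilatonConnectionRelation}, \eqref{11DPartialGaugeFixFermion} and \eqref{11DPartialGaugeFixConnection} directly into the 11D torsion constraint \eqref{TorsionLessCondition}. Every form on $Y$ in the partially fixed gauge is then expanded uniquely as (basic) $+\, e'\wedge$(basic), and further along the basic super-coframe $(e_X,\psi_X)$ of \eqref{BasicSuperCoframe}. Since $(e_X^a, e', \psi_X^\beta)$ is a $C^\infty(Y)$-basis of $1$-forms, and $e'$ is independent of basic forms, the constraint is equivalent to the separate vanishing of each component in this basis. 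I would split the constraint into its $a\in\{0,\dots,9\}$ components and its $\ten$ component. In each of these, I would separate the purely basic part from the part linear in $e'$.

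\textbf{The $a=0,\dots,9$ components.} I would write $\dd e_X^a + \omega_X{}^a{}_b\, e_X^b + \widetilde\omega_X{}^a{}_b\, e'\, e_X^b + \omega_Y{}^a{}_{\ten}\,\Phi\, e' - (\overline{\psi}_Y\Gamma^a\psi_Y)$ and expand
$$(\overline{\psi}_Y\Gamma^a\psi_Y) = (\overline{\psi}_X\Gamma^a\psi_X) + 2(\overline{\psi}_X\Gamma^a\lambda)\,e' ,$$
with the $\lambda\lambda\, e'e'$ term vanishing. The basic, $e'$-free part, with $\omega_Y{}^a{}_{\ten} = \omega_X{}^a{}_{\ten} + \widetilde\omega_X{}^a{}_{\ten}\, e'$ contributing only to $e'$-terms after multiplication by $e'$, gives \textbf{(i)}. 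The $e'$-part gives $\widetilde\omega_X{}^a{}_b e_X^b$ (with sign from commuting past $e'$) $+\Phi\,\omega_X{}^a{}_{\ten} - 2(\overline{\psi}_X\Gamma^a\lambda) = 0$. Expanding $\omega_X{}^a{}_{\ten}$ along $e_X^b$ and $\psi_X^\beta$ and separating components yields \textbf{(iv)}: the bosonic leg relates $\widetilde\omega$ to $\Phi\,\omega_{X\,b,}{}^{a\ten}$, and the fermionic leg fixes $\omega_{X\,\beta,}{}^{a\ten}\psi_X^\beta$.

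\textbf{The $\ten$ component.} Here $\dd(\Phi e') = \dd\Phi\, e' + \Phi F_2$, and the connection term is $\omega_Y{}^{\ten}{}_b\, e_X^b$. The fermion bilinear expands as $(\overline{\psi}_X\Gamma^{\ten}\psi_X) + 2(\overline{\psi}_X\Gamma^{\ten}\lambda)e'$. Collecting the $e'$-coefficient gives $\dd\Phi$ in terms of $\widetilde\omega_X{}^{\ten}{}_a e^a_X$ and $2(\overline{\psi}_X\Gamma^{\ten}\lambda)$, which is \textbf{(iii)}. The basic part gives
$$\Phi F_2 = -\omega_X{}^{\ten}{}_b\, e^b_X + (\overline{\psi}_X\Gamma^{\ten}\psi_X).$$
Dividing by $\Phi$ and decomposing $\omega_X{}^{\ten}{}_b = \omega_{X\,c,}{}^{\ten}{}_b\, e^c_X + \omega_{X\,\beta,}{}^{\ten}{}_b\,\psi_X^\beta$, I would then substitute the fermionic leg from \textbf{(iv)}, antisymmetrized, to produce the $\frac{2}{\Phi^2}(\overline{\psi}_X\Gamma_b\lambda)e^b_X$ term. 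This gives \textbf{(ii)}. Each step is reversible, so the list is equivalent to the constraint.

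\textbf{Main obstacle.} The hard part is bookkeeping rather than ideas: getting signs and index placements consistent. This means fixing the convention for the $\Gamma^{\ten}$ block and tracking the super-sign when moving odd $\lambda$ and $e'$ past one another. One also has to handle raising and lowering $\ten$ with $\eta_{\ten\ten}=+1$, and use the symmetry of $(\overline{\psi}\Gamma^a\lambda)$ so that the cross term carries a factor $2$. I would also check that the relation between $\widetilde\omega$ and the bosonic leg of $\omega_X{}^{a\ten}$ is consistent with the antisymmetry of $\omega_Y$, and that the symmetric part $\omega_{X\,(b,c)}{}^{\ten}$ drops out of $F_2$. That last check explains why only the antisymmetrized combination $\omega_{X\,[b,}{}_{c]}{}^{\ten}$ appears in \textbf{(ii)}.
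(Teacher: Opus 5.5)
Your proposal is correct and takes essentially the same route as the paper: a direct component expansion of the 11D torsion constraint after substituting $e_Y=(e_X,\Phi\,e')$, $\psi_Y=\psi_X+\lambda\,e'$ and $\omega_Y=\omega_X+\widetilde{\omega}_X\,e'$. The paper works out only one leg explicitly (the $\psi_X$--$e'$ leg of the $\ten$-th component, which gives the fermionic part of $\dd\Phi$) and says the rest follows similarly, whereas your splitting of each component into its basic and $e'$-linear parts derives all four items (i)--(iv) systematically.
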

\begin{proof}
This is a standard calculation that follows by a careful component expansion 
\begin{align*}
T\,  &=  \, \tfrac{1}{2} T_{bc}\,  e^b_Y\, e^c_Y + T_{b\gamma} \, e^b_Y\, \psi^{\gamma}_Y + \tfrac{1}{2}T_{\beta \gamma} \,\psi^{\beta}_Y \, \psi^{\gamma}_Y
\end{align*} 
of the condition $T\, := \, \dd \, e_Y + \omega_Y \wedge e_Y - \big(\, \overline{\psi}_Y
    \,\Gamma\,
    \psi_Y
    \big) \, \equiv \, 0$, in the given partially fixed gauge with respect to the super-coframe and Spin-connection basic expansions \eqref{11DPartialGaugeFix}, \eqref{11DPartialGaugeFixFermion}, and \eqref{11DPartialGaugeFixConnection}. For instance, consider the $(\psi_X\mbox{-}e')$ leg of the above condition, say along the $\ten$-th direction, i.e., 
    \begin{align*}
    2\, \big(\,\overline{\psi}_X 
    \,\Gamma^{\ten}\,
    \lambda
    \big) \, e'   \, &= \, \dd e_{\beta \, \ten ,}{}^{\ten} \, \psi_X^\beta\, \Phi \, e' + \omega_{\beta,}{}^{\ten}\,_{\ten} \, \psi_X^\beta \, \Phi \, e' \, =\,  \dd (\Phi \cdot e')_{\beta \, \ten } \, \Phi \, \psi_X^\beta \, e' \, +\,  0 \\
    &= \,  \big(\dd \Phi_\beta \, (e')_{\ten} + \Phi \, F_{2 \, \beta \ten} \, \big) \, \Phi \, \psi_X^\beta \, e' \, = \, \dd \Phi_\beta \, \frac{1}{\Phi} \, \Phi \, \psi_X^\beta \, e' 
    \\
    &= \, \dd \Phi_\beta \, \psi_X^\beta \, e'  
    \end{align*}
    which yields the fermionic component of \eqref{10DDilatonDifferrential}. Notice, we used the facts that by definition $\omega^{\ten\, \ten}=0$, $F_{2\, \beta \ten} =0 $ since it is basic along $Y^{11}\rightarrow X^{10}$, and $(e')_{\ten}=\frac{1}{\Phi}$ since $e'=\tfrac{1}{\Phi}\,e^{\ten}$. The rest of the results follow via similar considerations.
\end{proof}
This decomposition is, naturally, already well-established on product super-spacetimes (see, e.g., \cite[Appendix C]{HS05} -- up to differing conventions), and now it is further \textit{justifiably} applicable on arbitrary $S^1$-principal bundle super-spacetimes.


\paragraph{Convention.}
    For ease of notation, from here onwards we will drop the $X$ subscript on the basic fields, and keep only the total space $Y$ subscript where necessary.

\medskip 
\noindent 
{\bf Supersymmetry transformations via the 11D super-torsion constraint.}
Recall that one of the positive attributes of the superspace formulation of (on-shell) supergravity is the manifest interpretation of supersymmetry transformations as the action of (infinitesimal) fermionic diffeomorphisms (see, e.g., \cite{CDF91}\cite{Gi26}), that is, the (accordingly) covariantized action of odd vector fields $\eta \in \CX^\odd(X)$. In particular, 
the decomposed bosonic torsion constraint immediately yields the expected form of the IIA 10D (on-shell) supersymmetry transformation on the bosonic coframe field:
\begin{align*}
 L_\eta^{\omega_X} \,e_X  &\equiv \, [\iota_\eta, \dd^{\omega_X}\,] e_X \, = \, \iota_\eta \dd^{\omega_X}\, e_X \, \\& = \,  2(\,\overline{\epsilon}_{\eta}\, \Gamma \, \psi_X )\, =: \, \delta^{10\mathrm{D}\mbox{-}\mathrm{susy}}_\eta e_X\, , 
\end{align*}
which indeed coincides with the expected transformation upon restriction to the underlying 10-dimensional bosonic spacetime $\bosonic{X}\hookrightarrow X$. From the remaining induced conditions \eqref{2FluxExpansion} and \eqref{10DDilatonDifferrential}, one may immediately ``re-derive'' the corresponding IIA 10D supersymmetry transformations via super-spacetime.

\begin{corollary}[\bf 2-Flux and Dilaton supersymmetry via superspace]\label{2FluxAndDilatonSupersymmetry}
Under the decomposition of the partially gauge-fixed 11D Torsion constraint of Lem. \ref{11DTorsionContraintIn10D}, the actions\footnote{Since $\Phi$ and $F_2$ are both valued in the \textit{trivial} $\mathrm{Spin}(1,9)$ representation, the covariant Lie derivative reduces to the traditional Lie derivative.} of 10D odd vector fields $\eta \in \CX^\odd(X)$ on the (superized) $2$-Flux $F_2$ and the dilaton field $\Phi$ have the form -- up to conventions\footnote{And further field redefinitions extrapolating between the ``Einstein and String frames'' of IIA 10D supergravity.} -- of the corresponding IIA 10D supersymmetry transformations:
\begin{align*}
L_\eta \, F_2 
&= \, 2 \, \dd \, \big( \tfrac{1}{\Phi^2}(\overline{\epsilon}_\eta \, \Gamma_b \, \lambda)\, e_X^b \, + \, \tfrac{1}{\Phi} (\overline{\epsilon}_\eta \, \Gamma^{\ten} \, \psi_X )  \big)
\\
L_\eta \, \Phi \, 
&= \, 2 \big( \overline{\epsilon}_\eta \, \Gamma^{\ten} \, \lambda \big) \, . 
\end{align*}

\end{corollary}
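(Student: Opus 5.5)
The plan is to use Cartan's formula for an odd vector field $\eta\in\CX^\odd(X)$, namely $L_\eta = [\iota_\eta,\dd] = \iota_\eta \dd + \dd\,\iota_\eta$ (a graded commutator, hence a sum for $\eta$ odd). We then evaluate it on the super-coframe expansions \eqref{2FluxExpansion} and \eqref{10DDilatonDifferrential} of Lem.~\ref{11DTorsionContraintIn10D}.

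\textbf{Step 1: the dilaton.} Since $\Phi$ is a function, $L_\eta\Phi = \iota_\eta \dd\Phi$. Following the convention used just above for $e_X$, write the contraction of the fermionic coframe as $\iota_\eta \psi_X =: \epsilon_\eta$, and take the odd vector field $\eta$ to have no legs along the bosonic coframe, so that $\iota_\eta e^a_X = 0$. (If a general $\eta$ with bosonic legs were allowed, it would only add an ordinary Lie derivative along a bosonic vector field. That is not part of the supersymmetry transformation, so we project it away.) Inserting \eqref{10DDilatonDifferrential}, the bosonic term $\widetilde{\omega}_X{}^{\ten}{}_a e^a_X$ drops out, and the fermionic term contributes $2(\overline{\epsilon}_\eta\Gamma^{\ten}\lambda)$. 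This holds because the bilinear $(\overline{\psi}_X\Gamma^{\ten}\lambda)$ is linear in $\psi_X$, so contracting with $\eta$ simply replaces $\psi_X$ by $\epsilon_\eta$. The only care needed is the Koszul sign from moving the odd operator $\iota_\eta$ past the odd $0$-form $\lambda$. In the present ordering $\psi$ sits first, so no sign appears.

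\textbf{Step 2: the 2-flux.} Since $\dd F_2 = 0$ (it is the curvature of $e'$, pulled back from $X$), Cartan's formula reduces to $L_\eta F_2 = \dd\,\iota_\eta F_2$. It therefore suffices to compute $\iota_\eta F_2$ from \eqref{2FluxExpansion}. The purely bosonic term $-\tfrac{1}{\Phi}\omega_{X[b,c]}{}^{\ten}e^b_Xe^c_X$ is annihilated, since $\iota_\eta e_X = 0$. The mixed term $\tfrac{2}{\Phi^2}(\overline{\psi}_X\Gamma_b\lambda)e^b_X$ contracts to $\tfrac{2}{\Phi^2}(\overline{\epsilon}_\eta\Gamma_b\lambda)e^b_X$. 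The quadratic fermionic term $\tfrac{1}{\Phi}(\overline{\psi}_X\Gamma^{\ten}\psi_X)$ contracts to $\tfrac{2}{\Phi}(\overline{\epsilon}_\eta\Gamma^{\ten}\psi_X)$, where the factor $2$ comes from the symmetry of the bilinear pairing $\overline{\psi}\,\Gamma^{\ten}\psi$ in its two odd 1-form slots. Pulling out the overall factor $2$ then gives exactly the stated formula
\[
L_\eta F_2 = 2\,\dd\big(\tfrac{1}{\Phi^2}(\overline{\epsilon}_\eta\Gamma_b\lambda)e^b_X + \tfrac{1}{\Phi}(\overline{\epsilon}_\eta\Gamma^{\ten}\psi_X)\big).
\]

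\textbf{Step 3: identification with the literature.} Finally, restrict to the bosonic body $\bosonic{X}\hookrightarrow X$. Expanding the exterior derivative (using $\dd\Phi$ from Step 1 and $\dd e_X$, $\dd\psi_X$ via the torsion constraint \eqref{10DTorsionConstraint}) produces the familiar form $\delta A_1 \sim \Phi^{-1}\overline{\epsilon}\,\Gamma^{\ten}\psi + \Phi^{-2}\overline{\epsilon}\,\Gamma\lambda$ of the IIA RR 1-form supersymmetry variation. This is correct only "up to conventions" and frame redefinitions between the Einstein and string frames, which is all the statement claims.

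\textbf{Main obstacle.} The computation itself is routine. The subtle points are the graded sign conventions for contracting odd vector fields into the spinor bilinears, which fix the factor $2$ and the relative signs, and the justification that the bosonic legs of $\eta$ may be dropped. I would settle the signs by checking them against the analogous computation for $e_X$ displayed just before the corollary, which yields $2(\overline{\epsilon}_\eta\Gamma\psi_X)$ from $(\overline{\psi}\Gamma\psi)$ by the same contraction rule.
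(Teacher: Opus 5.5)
Your proposal is correct and follows the paper's proof. The paper likewise writes $L_\eta F_2 = [\iota_\eta,\dd]F_2 = \dd\,\iota_\eta F_2$ (using $\dd F_2 = 0$) and $L_\eta \Phi = \iota_\eta \dd\Phi$, then substitutes the expansions \eqref{2FluxExpansion} and \eqref{10DDilatonDifferrential}; your explicit assumption $\iota_\eta e_X = 0$ and your accounting of the factor $2$ from the symmetric bilinear are what the paper leaves implicit, as in its preceding computation for $e_X$.
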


\begin{proof}
    This follows directly from using equations \eqref{2FluxExpansion} and \eqref{10DDilatonDifferrential}, respectively, in 
$$
L_\eta \, F_2 \, \equiv \,   [\iota_\eta, \dd]\, F_2 \, = \,  \dd \, \iota_\eta F_2 \quad {\rm and} \quad  
L_\eta \, \Phi \, \equiv \,  [\iota_\eta, \dd]\, \Phi \, = \, \iota_\eta \dd \Phi \,.
$$

\vspace{-4mm}
\end{proof}

\begin{remark}[\bf Dilaton supersymmetry via 11D superspace]
$\,$
\begin{itemize}[leftmargin=.8cm]
\item[\bf (i)] Recall that in the original formulation of IIA 10D supergravity on \textit{bosonic spacetimes} 
\cite{CW84}\cite{GP84} \cite{HN85}, the derivation of the 10D  supersymmetry transformation was not as immediate. Namely, these were derived by considering the subset of 11D infinitesimal transformations (i.e., local vector fields on 11D \textit{field space} $\mathcal{F}^{11D}$ \cite{GS25}) that descend to the 10D field space $\mathcal{F}^{10D}$, hence those that preserve the (partial) gauge fixing from \eqref{11DPartialGaugeFix}. In the original terminology, these are the ones that preserve the ``triangular'' form of the bosonic coframe. 

\item[\bf (ii)] For expositional purposes, we express this equivalent way of deriving the 10D supersymmetry transformations, but now on \textit{super-spacetime} so as to relate to the action of 11D odd vector fields. To do this, instead of expressing the 11D torsion constraint in terms of basic fields as in Lem. \ref{11DTorsionContraintIn10D}, one expands it in the gauge-fixed 11D super-coframe basis $(e_Y,\psi_Y)$ with $e_Y= (e_X, \Phi \, e')$ but without decomposing the 11D fermionic coframe $\psi_Y$ in terms of the connection $e'$. This allows us to compute directly the action of odd vector fields on the 11D spacetime $Y^{11}$, rather than the base $X^{10}$, hence encoding the (on-shell) 11D supersymmetry transformations. For example, computing the fermionic components of the derivative of the dilaton, but now instead with respect to $\psi_Y$, it follows \footnote{Explicitly, the $e^{\ten}\mbox{-}\psi_Y^{\overline{\gamma}}$ leg of the 11D Torsion constraint yields instead 
\vspace{-3mm} 
$$
    0 \, = \, \tfrac{1}{2} \dd e _{\overline{\gamma}\, \ten }{}^{\ten} + \omega_{\overline{\gamma},}{}^{\ten}{}_{\ten} \, =\, \tfrac{1}{2} \dd (\Phi \, e')_{ \overline{\gamma}\, \ten} + 0 \, = \, \tfrac{1}{2}\dd \Phi_{\overline{\gamma}} \,  e'_{\ten} \, + \, \tfrac{1}{2}\Phi \, F_2{}_{\overline{\gamma}\, \ten }  
    =  (\dd \Phi)_{\overline{\gamma}} \, \frac{1}{\Phi} \, . 
    $$}
    that 
$$
(\dd \Phi)_{\overline{\gamma}} \, \equiv \, \partial_{\overline{\gamma}} \Phi \, = \, 0 \, ,
$$
where we used overlined Greek indices $\{\overline{\alpha}, \overline{\beta}, \cdots\}$ to distinguish them from those in the further basic 

\newpage 
expansion in Lem. \ref{11DTorsionContraintIn10D}. 
This immediately implies that on $Y^{11}$, the action of \textit{any} odd vector field $\widehat{\eta}\in \CX^\odd(Y)$ on the dilaton is trivial 
$$
L_{\widehat{\eta}} \,\Phi \, = \, 0 \, ,
$$
even for $S^1$-invariant (projectable) vector fields, which says that the canonical 11D supersymmetry transformations act trivially on the (11D incarnation) of the dilaton field. 

\item[\bf (iii)] Of course, the conceptual resolution is not hard to see: Even if $\widehat{\eta}$ projects to a vector field $\eta$ on $X^{10}$, the corresponding vector field on 11D field space $\mathcal{F}^{11D}$ given by the corresponding Lie derivative does not descend to one on 10D field space $\mathcal{F}^{10D}$. In simple terms, this is due to the fact that the standard 11D susy transformation does not preserve the partially fixed gauge from \eqref{11DPartialGaugeFix}. Hence, precisely as done in the original papers for bosonic spacetimes \cite{CW84}\cite{GP84}\cite{HN85}, one should compensate any such 11D susy transformation, generated by  some $S^1$-invariant odd vector field $\widehat{\eta}\in \CX^{\odd}(Y)$ projecting to $\eta \in \CX^\odd(X)$, with a (bi-fermionic and field-dependent) 11D local Lorentz transformation so that
$$
\big(\hat{\delta}^{11\mathrm{D}\mbox{-}\mathrm{susy}}_{\widehat{\eta}}  e^{a}\big)_{\ten}\, : =  \big(\delta^{11\mathrm{D}\mbox{-}\mathrm{susy}}_{\widehat{\eta}} e^a \big)_{\ten} \, + \, \big(\delta_\Lambda^{\mathrm{Spin}(1,10)} e^a\big)_{\ten} \, \equiv \, 0 \, 
$$
 for $a=0,\cdots, 9$. It is readily seen that this is achieved by choosing the (local) rotation to have matrix elements 
 $$
 (\Lambda_{\widehat{\eta}, \xi})^{a \ten} \, := \, - 2 \big( \overline{\epsilon}_{\widehat{\eta}} \, \Gamma^a \, \iota_\xi \psi \big) \, \equiv \, -  2 \big( \overline{\epsilon}_{\widehat{\eta}} \, \Gamma^a \, \lambda\big)
 $$
 and $\Lambda_{\eta,\xi}^{ab}=0$ for $a,b =1, \cdots , 9$. 

\item[\bf (iv)] Finally, acting with this new adjusted (and still projectable) infinitesimal 11D ``supersymmetry'' transformation on $e^{10}= \Phi\, e'$,
\vspace{1mm} 
$$
\hat{\delta}^{11\mathrm{D}\mbox{-}\mathrm{susy}}_{\widehat{\eta}} e^{\ten} \, = \, \big(\hat{\delta}^{11\mathrm{D}\mbox{-}\mathrm{susy}}_{\widehat{\eta}} \Phi\big) \, e' + \, \Phi \, \big(\hat{\delta}^{11\mathrm{D}\mbox{-}\mathrm{susy}}_{\widehat{\eta}} e'\big)  
$$
 and expanding the left-hand side as above, one finds in particular a non-zero (projectable) supersymmetry transformation for the dilaton field
$$
\hat{\delta}^{\rm susy}_{\widehat{\eta}}\Phi \, = \, 2 \big(\overline{\epsilon}_{\widehat{\eta}} \, \Gamma^{\ten}\, \lambda\big) \, , 
$$
which, since $\widehat{\eta}$ projects to $\eta \in \CX^{\odd}(X)$, descends precisely to the corresponding 10D supersymmetry transformation of the dilaton field from Cor. \ref{2FluxAndDilatonSupersymmetry}. 
\end{itemize}
\end{remark}

\noindent {\bf Superspace dimensional reduction of $S^4$-flux quantizable 11D supergravity.}
Recall \cite[Thm. 3.1]{GSS24-SuGra} that the condition that a pair of super-fluxes of the form 
of the following form:
\begin{equation}
\label{SuperFluxDensitiesInIntroduction}
  \def\arraystretch{1.5}
  \begin{array}{l}
    G^s_4
    \;\;:=\;\;
    \tfrac{1}{4!} (G_4)_{a_1 \cdots a_4}
    e^{a_1}_Y \cdots e^{a_4}_Y
    \;\,+\;\,
    \tfrac{1}{2}
    \big(\,
    \overline{\psi}_Y
    \Gamma_{a_1 a_2}
    \psi_Y
    \big)
    e^{a_1}_Y \, e^{a_2}_Y
    \\
    G^s_7
    \;\;:=\;\; 
    \tfrac{1}{7!}
    (G_7)_{a_1 \cdots a_7}
    e^{a_1}_Y \cdots e^{a_7}_Y
    \;\,+\;\,
    \tfrac{1}{5!}
    \big(\,
    \overline{\psi}_Y
    \Gamma_{a_1 \cdots a_5}
    \psi_Y
    \big)
    e^{a_1}_Y \cdots e^{a_5}_Y
  \end{array}
\end{equation}
constitutes a map of $L_\infty$-algebroids 
$$
(G_4^s, \, G_7^s) \, : \, TY^{11}\longrightarrow \mathfrak{l}S^4 \, ,
$$ over a $(11 \vert \mathbf{32})$-dimensional super-spacetime $(Y^{11},e,\omega, \psi)$ modeled on $\FR^{1,10\vert \mathbf{32}}$.
This means that satisfying  the $\mathfrak{l}S^4$-cocycle condition
\begin{equation}
\label{SuperCFieldBianchiInIntro}
  \def\arraystretch{1.2}
  \begin{array}{l}
    \differential
    \, 
    G^s_4
    \;=\; 0
    \\
    \differential
    \, 
    G^s_7
    \;=\;
    \tfrac{1}{2}
    G^s_4 \wedge G^s_4
    \,,
  \end{array}
\end{equation}
is \textit{equivalent} to the tuple 
$$
\big( Y^{11},\, (e_Y,\psi_Y,\omega_Y,G_4,G_7)\big)
$$ 
being a solution of the on-shell equations of \textit{superspace} 11D supergravity, which is furthermore the \textit{unique rheonomic extension} of its restriction to the solution of 11D supergravity on the underlying bosonic manifold $\bosonic{Y}^{11}\hookrightarrow Y^{11}$ \cite[Cor. 3.12]{GSS24-SuGra}. 
\begin{equation}
  \label{11DTabledStatement}
  \hspace{-3mm} 
 \colorbox{lightgray}
  {\!\!\!\!\!
     \def\arraystretch{1.1}
    \begin{tabular}{c}
      $(11\vert\mathbf{32})$-dimensional 
      super-spacetimes
      $\big(
        Y, (e_Y, \psi_Y, \omega_Y)
      \big)$
      \\
      carrying super-flux 
      $(G_4^s, G_7^s)$
      (from \eqref{SuperFluxDensitiesInIntroduction}),
      \\
      satisfying the $\mathfrak{l}S^4$ Bianchi identity
      (from \eqref{SuperCFieldBianchiInIntro}).
    \end{tabular}
  \!\!\!}
  \hspace{.3cm}
  \Leftrightarrow
  \hspace{.3cm}
  \colorbox{lightgray}{\!\!\!
    \begin{tabular}{c}
      Solutions of 11D SuGra on $\bosonic{Y}$
      \\
        with flux source $G_4$
      \\
      and dual flux $G_7$.
    \end{tabular}
    \!\!\!}
\end{equation}

\smallskip 
Considering the subspace of $S^1$-symmetric (as in \cite[Def. 3.13]{Gi26}) 11D supergravity solutions in the above formulation, partially 
gauge-fixed as per \cite[Cor. 3.16]{Gi26} combined with Lem. \ref{11DTorsionContraintIn10D}, and applying the double dimensional reduction 
via cyclification of Prop. \ref{CyclificationOxidationOnSuperManifolds}
immediately yields the following superspace formulation of 10D IIA supergravity. 

\smallskip 
\begin{theorem}[\bf 10D IIA SuGra EoM from super-flux Bianchi identity]
\label{10dIIASugraEoMFromSuperFluxBianchiIdentity}
$\,$

\hspace{-6.5mm} 
\colorbox{lightblue}{\!\!\!
\begin{tabular}{l}
\begin{minipage}{17cm}
A $(10\vert\mathbf{16}\oplus \overline{\mathbf{16}})$-dimensional super-spacetime $\big(X^{10}, (e,\psi,\omega)  \big)$  \textup{(according to \cite[Def. 2.74]{GSS24-SuGra})}
carries super-field fluxes 
of the form 
\begin{equation}
\label{IIASuperFluxDensities}
  \def\arraystretch{1.5}
  \begin{array}{l}
    F^s_2
    \;\;:=\;\;
    \tfrac{1}{2}(F_2)_{a_1 a_2} \, e^{a_1} e^{a_2} \;\, + \, \, \frac{2}{\Phi^2} (\overline{\psi}  \Gamma_{a} \lambda)\, e^a \,\, +\;\, \frac{1}{\Phi} (\overline{\psi} \Gamma_{\ten} \psi)
    \\
        H^s_3
    \;\;:=\;\;  \tfrac{1}{3!}(H_{3})_{a_1 a_2 a_3} \, e^{a_1}e^{a_2}e^{a_3} \,\, + \, \, 
    (
    \overline{\psi}
    \Gamma_{a_1 a_2}
    \lambda )\,
    e^{a_1}  e^{a_2} \, \,  \;\,- \;\, \Phi (\overline{\psi} \Gamma_{\ten a} \psi)\, e^a 
    \\
        F^s_4
    \;\;:=\;\;
    \frac{1}{4!}(F_4)_{a_1 \cdots a_4} \, e^{a_1}\cdots e^{a_4} \;\, 
    \hspace{3.7cm} 
    +\;\, \tfrac{1}{2}(\overline{\psi} \Gamma_{a_1 a_2} \psi)\,  e^{a_1} e^{a_2} 
    \\
        F^s_6
    \;\;:=\;\; 
    \frac{1}{6!}(F_6)_{a_1 \cdots a_6}\,  e^{a_1}\cdots e^{a_6} \;\,+\;\,
    \tfrac{2}{5!}
    (
    \overline{\psi}
    \Gamma_{a_1 \cdots a_5}
    \lambda
    ) \, 
    e^{a_1} \cdots e^{a_5} \;\,-\;\, \tfrac{\Phi}{4!}(\overline{\psi} \Gamma_{\ten \, a_1 \cdots a_4} \psi) \,  e^{a_1}\cdots e^{a_4}  
    \\
    H^s_7
    \;\;:=\;\; 
    \tfrac{1}{7!}
    (H_7)_{a_1 \cdots a_7}\, 
    e^{a_1} \cdots e^{a_7}
    \;\,   
    \hspace{4.55cm} 
    + \;\,
    \tfrac{1}{5!}
    (
    \overline{\psi}
    \Gamma_{a_1 \cdots a_5}
    \psi
    ) \, 
    e^{a_1} \cdots e^{a_5} \, ,
  \end{array}
\end{equation}
where $\Phi\neq 0 \in C^\infty_\even(X)$ and $\lambda \in C^\infty_\odd\big(X,\,(  \mathbf{16}\oplus \overline{\mathbf{16}})_{\mathrm{Spin(1,9)}}\big)$ with 
\begin{equation}\label{DilatonDilatinoEquation}
 \dd \Phi \, = \, \partial_a\Phi\, e^a \, + \, 2 \big(\,\overline{\psi} 
    \,\Gamma^{\ten}\,
    \lambda
    \big)  \, 
\end{equation}
and integral 2-flux $[F_2^s] \, \in H^2_\mathrm{dR}(X,\mathbb{Z})$,  that constitute a map of $L_\infty$-algebroids 
$$
\big(F_2^s, \, H_3^s, \, F_4^s, \,  F_6^s, \, H_7^s\big)\, : \, TX^{10}\longrightarrow \mathrm{cyc}(\mathfrak{l}S^4) \, .
$$
That is, the fluxes form a closed $\mathrm{cyc}(\mathfrak{l}S^4)$-valued form
\begin{equation}
\label{Cyc-valuedForm}
\big(F_2^s, \, H_3^s, \, F_4^s, \,  F_6^s, \, H_7^s\big) 
  \,\in\, 
  \Omega^1_{\mathrm{dR}}\big(X^{10};\, \mathrm{cyc}(\mathfrak{l}S^4)\big)_{\mathrm{clsd}} \, ,
  \end{equation}
in that they   satisfy the Bianchi identities
\begin{eqnarray}
\label{CycS4Bianchis}
\dd F_2^s  &=&  0 \;,   \qquad \dd F_4^s \, = \, H_3^s \wedge F_2^s\;, \qquad \dd F_6^s \, = \, - H_3^s \wedge F_4^s \;, 
\nonumber 
\\[-2pt]
\quad \dd H_3^s &=& 0\;, 
 \qquad \dd H_7^s \, = \, \tfrac{1}{2} F_4^s \wedge F_4^s \, + \, F_2^s \wedge F_6^s 
\end{eqnarray}
if and only if
\begin{itemize}[leftmargin=.6cm]
\item[\bf (i)]
the form \eqref{Cyc-valuedForm} solves the equations of motion of IIA 10D supergravity 
with the given $(F_2,H_3,F_4)$-flux sources and their duals;
\item[\bf (ii)]
the super-fields form a unique (``rheonomic'') extension of their restriction 
to the bosonic body spacetime $\bosonic{X}$.
\end{itemize}
\end{minipage}
\end{tabular}
\!\!}
\end{theorem}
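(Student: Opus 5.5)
The plan is to reduce the theorem to the 11D statement \eqref{11DTabledStatement} (i.e.\ \cite[Thm. 3.1]{GSS24-SuGra}) by oxidation. We then transport everything back down to $X^{10}$ using Prop.~\ref{NonTrivialCircleBundleCyclification/Oxidation} and Lem.~\ref{11DTorsionContraintIn10D}.

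\textbf{Oxidation.} Given the 10D data $\big(X,(e,\psi,\omega),(\Phi,\lambda)\big)$ with integral $[F_2^s]$, let $S^1\hookrightarrow Y\xrightarrow{\pi}X$ be the principal bundle classified by $[F_2^s]$. Choose a connection $e'$ with $\dd e'=\pi^*F_2^s$. On $Y$ we define
\[
e_Y=(e,\Phi e'),\qquad \psi_Y=\psi+\lambda e'.
\]
We reconstruct $\omega_Y$ from $\omega$ via the prescriptions of Lem.~\ref{11DTorsionContraintIn10D}(iv), \eqref{11DPartialGaugeFixConnection}. Here $\widetilde\omega{}^{\ten}{}_a$ is read off from $\partial_a\Phi$, and $\omega_{[b,c]}{}^{\ten}$ from the bosonic components $(F_2)_{bc}$.

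The first step is the converse direction of Lem.~\ref{11DTorsionContraintIn10D}. The 10D torsion constraint \eqref{10DTorsionConstraint}, the expansion of $F_2^s$ in \eqref{IIASuperFluxDensities} (matching \eqref{2FluxExpansion}), and \eqref{DilatonDilatinoEquation} (matching \eqref{10DDilatonDifferrential}) together are equivalent to the 11D torsion constraint \eqref{TorsionLessCondition} on $Y$. The result is an $S^1$-symmetric super-spacetime in the partially fixed gauge of \cite[Cor. 3.16]{Gi26}. Since the lemma is an \emph{equivalence} of component equations (every leg of $T$ is accounted for), this step only requires checking that no leg was discarded. Conversely, every $S^1$-symmetric 11D super-spacetime arises this way, by \cite[Cor. 3.16]{Gi26}.

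\textbf{Matching the fluxes.} By Prop.~\ref{NonTrivialCircleBundleCyclification/Oxidation}(i) with $\mathfrak h=\mathfrak{l}S^4$, closed $\mathrm{cyc}(\mathfrak{l}S^4)$-valued forms on $X$ with 2-form component $F_2^s$ correspond bijectively to $S^1$-invariant closed $\mathfrak{l}S^4$-valued forms on $Y$, via
\[
G_4^s=F_4^s-e'H_3^s,\qquad G_7^s=H_7^s-e'F_6^s,
\]
as in \eqref{DecomposedG4G7}. The Bianchi identities \eqref{CycS4Bianchis} therefore hold iff $(G_4^s,G_7^s)$ satisfy \eqref{SuperCFieldBianchiInIntro}. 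The key computation is then to substitute $e_Y^{\ten}=\Phi e'$ and $\psi_Y=\psi+\lambda e'$ into the 11D ansatz \eqref{SuperFluxDensitiesInIntroduction}, and to expand it into non-winding and $e'$-components. This must reproduce exactly the 10D ansatz \eqref{IIASuperFluxDensities}, with bosonic components identified as
\[
(F_4)_{a_1\cdots a_4}=(G_4)_{a_1\cdots a_4},\qquad (H_3)_{a_1a_2a_3}\propto \Phi\,(G_4)_{a_1a_2a_3\ten},
\]
and similarly for $F_6$ and $H_7$.

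Concretely, the fermionic bilinear terms of the 11D ansatz generate the pieces of the 10D ansatz as follows.
\begin{itemize}
\item Terms of type $(\overline\psi\Gamma\psi)$ with an $e^{\ten}$ leg give the $-\Phi(\overline\psi\Gamma_{\ten a}\psi)e^a$ term in $H_3^s$ and the analogous term in $F_6^s$.
\item Cross terms $2(\overline\psi\Gamma\lambda)e'$ give the olive dilatino terms.
\item Terms with $(\overline\lambda\Gamma\lambda)e'e'$ vanish, since $e'e'=0$.
\end{itemize}
Signs and factors (e.g.\ the $2/5!$) come from $\Gamma_{a_1\cdots a_4\ten}$ reorderings. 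This bookkeeping is routine but is where I expect the main obstacle: sign conventions and ensuring that the flux ansatz on $X$ is \emph{exactly} the reduction of the 11D one, with no extra mixed terms, so that the correspondence holds at the level of the specific ansätze and not merely of closed forms.

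\textbf{Conclusion.} Both sides of the theorem then correspond to $S^1$-invariant 11D configurations satisfying \eqref{11DTabledStatement}. By \cite[Thm. 3.1, Cor. 3.12]{GSS24-SuGra}, these are exactly the $S^1$-invariant on-shell 11D supergravity solutions with rheonomic extension. By the classical reduction \cite{CW84}\cite{GP84}\cite{HN85} (now globally, the Kaluza--Klein ansatz being exactly the partial gauge fixing), these correspond to on-shell IIA solutions on $\bosonic{X}$ with fluxes $(F_2,H_3,F_4)$ and duals. This gives (i).

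For (ii), uniqueness of the rheonomic extension on $Y$ descends to $X$. $S^1$-invariance and the gauge are fixed by the body data, and $\theta$-expansion along odd directions of $X$ corresponds to that along odd directions of $Y$ through the basic decomposition.

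Independence of the choice of $e'$ follows from Rem.~\ref{EffectOfConnectionChoice}: a gauge-equivalent connection only reshuffles non-winding parts, compatibly with the vertical automorphism of $Y$.
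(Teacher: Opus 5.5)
Your proposal is correct and uses the same chain of bijections as the paper. The chain consists of the partial gauge fixing together with the torsion lemma, the cyclification/oxidation proposition with $G_4^s = F_4^s - e' H_3^s$ and $G_7^s = H_7^s - e' F_6^s$, the 11D rheonomy theorem, and classical Kaluza--Klein reduction; you only traverse it from the oxidation side instead of reading it upward from 10D bosonic solutions. The bookkeeping you flag as the main obstacle is exactly the explicit expansion of the 11D ansatz for $G_7^s$ and $G_4^s$ in the gauge $e_Y=(e,\Phi e')$, $\psi_Y=\psi+\lambda e'$ that the paper writes out, and that expansion confirms no extra mixed terms arise.
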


\begin{proof}
This follows by a combination of several established results, summarized in the following diagram of super smooth sets\footnote{For simplicity of exposition, we indicate these here using the set notation, but one should keep in mind that these are really arbitrary $\FR^{k|q}$-plots of the corresponding super smooth sets (cf. developments in 
\cite{Gi24}\cite{GSS24-SuGra}\cite{SS-Zagreb}).}
of field spaces
\begin{equation}
  \begin{tikzcd}[
    row sep=30pt, column sep=60pt
  ]
    \left\{
    \substack{
      \text{$S^1$-symm. $\mathfrak{l}S^4$-cocycles}
      \\
      \text{$(G^s_4, G^s_7)$ on $Y^{11} \to X^{10}$}
    }
    \right\}
    \ar[
      r,
      <-,
      shift left=4pt,
      "{ \text{oxidation} }"
    ]
    \ar[
      r,
      shift right=4pt,
      "{ \text{cyclification} }"'
    ]
    \ar[
      d,
      <->,
      "{ 
        \text{rheonomy}
      }"{description}
    ]
    &
    \left\{
    \substack{
      \text{$\mathrm{cyc}(\mathfrak{l}S^4)$-cocycles}
      \\
      \text{$(F^s_2, H^s_3, F^s_4, F^s_6, H^s_7)$ on $X^{10}$}
    }
    \right\}
    \ar[
      d,
      <->,
      "{ 
        \text{rheonomy}
      }"{description}
    ]
    \\
    \left\{
    \substack{
      \text{$S^1$-symm. 11D SuGra}
      \\
      \text{sols on $\overset{\rightsquigarrow}{Y}{}^{11} \to \overset{\rightsquigarrow}{X}{}^{10}$ }
    }
    \right\}
    \ar[
      r,
      shift right=4pt,
      "{ \text{reduction} }"'
    ]
    \ar[
      r,
      <-,
      shift left=4pt,
      "{ \text{extension} }"
    ]
    &
    \Big\{
    \substack{
      \text{10D SuGra sols on $\overset{\rightsquigarrow}{X}{}^{10}$}
    }
    \Big\}
  \end{tikzcd}
\end{equation}
This is to be read as a clockwise composition from the bottom right. First, on-shell 10D IIA supergravity on a usual bosonic spacetime $\bosonic{X}^{10}$, at the level of flux curvature forms (i.e., ``locally''), is known to correspond precisely to the $S^1$-invariant (\cite[Def. 3.13]{Gi26}) sector of on-shell 11D supergravity, with principal $S^1$-bundle spacetime topology $\bosonic{Y}^{11}\!\rightarrow \!\bosonic{X}^{10}$ classified by the Chern class of the $2$-form flux on $\bosonic{X}^{10}$. In fact, this is precisely how IIA supergravity was initially derived \cite{CW84}\cite{GP84}\cite{HN85}, with the non-trivial $S^1$-bundle extensions in the second-order formalism found in \cite{FOS03}\cite{MaS04}\cite{Jose01}. With the proper formulation of first-order (coframe) $S^1$-invariance condition at hand \cite[Def. 3.13]{Gi26}, and the existence of the appropriate gauge for any non-trivial $S^1$-principal bundle spacetime \cite[Cor. 3.16]{Gi26}, this follows similarly also in this setting. In other words, there exists an isomorphism between the full (curvature flux-form) field space of on-shell 10D IIA supergravity and the field space of $S^1$-invariant on-shell 11D supergravity. 

Next, by \cite[Thm. 3.1]{GSS24-SuGra}, the latter field space is \textit{rheonomically}\footnote{Namely, to the canonically associated supermanifold extension $Y^{11} = \bosonic{Y}^{11} \vert \mathbf{32}_{\mathrm{Spin}(1,10)}$, constructed as the (split) supermanifold dual to the exterior algebra of sections of the associated vector bundle $\mathbf{32}_{\mathrm{Spin}(1,10)} \equiv \mathbf{32} \times _{\mathrm{Spin}(1,10)} P $ with respect to the background principal Spin$(1,10)$-bundle structure \cite[Ex. 2.77]{GSS24-SuGra}.} isomorphic to the space of $(11\vert \mathbf{32})$-dimensional $S^1$-symmetric superspacetimes $\big(Y^{11},\,  (e_Y,\psi_Y,\omega_Y)\big)$ equipped with super-fluxes $(G_4^s,\, G_7^s)$ of the form \eqref{SuperFluxDensitiesInIntroduction} which constitute a map of $L_\infty$-algebroids valued in the rational $4$-sphere
$$
(G_4^s, \, G_7^s) \, : \, TY^{11}\longrightarrow \mathfrak{l}S^4  .
$$
Each of these may be partially gauge-fixed as per \cite[Cor. 3.16]{Gi26}, bringing the corresponding $(11 \vert \mathbf{32})$-dimensional super-coframe in the form
$e_Y = (e, \, \Phi \, e')$ and $\psi_Y = \psi\,  + \, \lambda \, e'$ from \eqref{11DPartialGaugeFix}\eqref{11DPartialGaugeFixFermion} for some $U(1)$-connection $e'\in \mathrm{Conn}_{U(1)}(Y^{11})$, a $(10\vert \mathbf{16}\oplus \overline{\mathbf{16}})$-dimensional supercoframe $(e,\psi)$ \eqref{BasicSuperCoframe} and a dilaton/dilatino pair $(\Phi,\lambda)$ \eqref{DilatonDilatinoPair} on the base $X^{10}$ .

Decomposing the 11D super-spacetime torsion constraint in this gauge as per Lem. \ref{11DTorsionContraintIn10D} yields the expected 10D super-spacetime torsion constraint \eqref{10DTorsionConstraint}, the form of $F_2^s$ expansion from \eqref{IIASuperFluxDensities} and the expansion of the dilaton differential \eqref{DilatonDilatinoEquation}. In turn, expanding the $(G_4^s, G_7^s)$-fluxes \eqref{SuperFluxDensitiesInIntroduction} in this gauge and applying the reduction via cyclification of Prop. \ref{NonTrivialCircleBundleCyclification/Oxidation}, with respect to the associated connection $e'$, yields precisely (and bijectively, per gauge equivalence class) the form of expansions for the remaining fluxes $(H_3^s, F_4^s, F_6^s, H_7^s)$ from \eqref{IIASuperFluxDensities}, hence automatically forming a rational cyclified $4$-sphere cocycle
$$
\big(F_2^s, \, H_3^s, \, F_4^s, \,  F_6^s, \, H_7^s\big)\, : \, TX^{10}\longrightarrow \mathrm{cyc}(\mathfrak{l}S^4) \, .
$$
Although the  expansion of the given $(G_4^s, G_7^s)$-fluxes \eqref{SuperFluxDensitiesInIntroduction} in this gauge is algorithmic in nature, we include that of $G_7^s$ for exposition: 
\begin{align*}
G^s_7
    \, &=\,  
    \tfrac{1}{7!}
    (G_7)_{a_1 \cdots a_7}
    e^{a_1}_Y \cdots e^{a_7}_Y
    \;\,+\;\,
    \tfrac{1}{5!}
    \big(\,
    \overline{\psi}_Y
    \Gamma_{a_1 \cdots a_5}
    \psi_Y
    \big)
    e^{a_1}_Y \cdots e^{a_5}_Y \\
    &=\,  \sum_{0\leq a_{i}\leq 9}\tfrac{1}{7!}
    (G_7)_{a_1 \cdots a_7}
    e^{a_1} \cdots e^{a_7} \, + \, \sum_{0\leq a_{i}\leq 9}\tfrac{\Phi}{6!}
    (G_7)_{a_1 \cdots a_6 \ten}
    e^{a_1} \cdots e^{a_6} \, e' 
    \\ 
    & 
    + \sum_{0\leq a_{i}\leq 9}
    \!\!\tfrac{1}{5!}
    \Big(\,
    \overline{(\psi +\lambda e')}\,
    \Gamma_{a_1 \cdots a_5}\, 
    \big(\psi + \lambda e'\big)
    \!\Big)
    e^{a_1} \cdots e^{a_5} 
    +  \sum_{0\leq a_{i}\leq 9} \!\!
    \tfrac{\Phi}{4!}
    \Big(\,
    \overline{(\psi +\lambda e')}\, 
    \Gamma_{a_1 \cdots a_4 \ten }\,
    \big(\psi + \lambda e'\big) \!
    \Big)
    e^{a_1} \cdots e^{a_4}\, e' 
    \\
    & =\,  \sum_{0\leq a_{i}\leq 9} \Big( \tfrac{1}{7!} 
    (G_7)_{a_1 \cdots a_7}
    e^{a_1} \cdots e^{a_7} \;\,+\;\,
    \tfrac{1}{5!}
    \big(\,
    \overline{\psi}
    \Gamma_{a_1 \cdots a_5}
    \psi
    \big)  
    e^{a_1} \cdots e^{a_5} \Big)
    \\
    & \qquad + e'\, \sum_{0\leq a_{i}\leq 9}  \Big( \tfrac{\Phi}{6!}
    (G_7)_{a_1 \cdots a_6 \ten}
    e^{a_1} \cdots e^{a_6} \, \, - \, \, \tfrac{2}{5!} \big(\,\overline{\psi} \Gamma_{a_1\cdots a_5} \lambda\big) e^{a_1}\cdots e^{a_5} \, \, + \, \, \tfrac{\Phi}{4!} \big(\,\overline{\psi} \Gamma_{\ten a_1\cdots a_4} \psi\big) e^{a_1}\cdots e^{a_4}  \Big)   .
\end{align*}
This yields the forms of $H_7^s$ and $F_6^s$ in \eqref{IIASuperFluxDensities}, respectively, together with their relation to the (bosonic) components of the 11D $G_7^s$ super-flux. Totally analogously, one sees that in this gauge 
\begin{align*}
 G_4^s \, &= \,    
    \sum_{0\leq a_{i}\leq 9} \Big( \tfrac{1}{4!}(G_4)_{a_1 \cdots a_4} \, e^{a_1}\cdots e^{a_4} \;\,  
    +\;\, \tfrac{1}{2}\big(\,\overline{\psi} \Gamma_{a_1 a_2} \psi\big)  e^{a_1} e^{a_2} \Big) 
 \\ & \qquad + \,  e' \, \sum_{0\leq a_{i}\leq 9}\Big(  \tfrac{\Phi}{3!}(G_{4})_{a_1 a_2 a_3 \ten} \, e^{a_1}e^{a_2}e^{a_3} \,\, - \, \, 
    \big(\,
    \overline{\psi}
    \Gamma_{a_1 a_2}
    \lambda \big)
    e^{a_1}  e^{a_2} \, \,  \;\, +  \;\, \Phi \big(\,\overline{\psi} \Gamma_{\ten a} \psi\big) e^a \Big) 
\end{align*}
which yields the forms of the remaining super-fluxes $F_4^s$ and $H_3^s$. 

Finally, the total collection of super-fields $(e,\psi,\omega)$, $(\Phi,\lambda)$, and $(F_2^s, H_3^s, F_4^s, F_6^s, H_7^s)$ on $X^{10}$ necessarily restricts to the corresponding original 10D IIA supergravity solution on the underlying bosonic spacetime $\bosonic{X}^{10}\hookrightarrow X^{10}$ we started with, by construction, since the diagram of the underlying supermanifolds commutes
  \vspace{-1mm} 	
  \[
	\xymatrix@C=1.6em@R=.6em{ \bosonic{Y}^{11} \ar[d] \ar@{^{(}->}[rr] &&   Y^{11} \ar[d] 
		\\ 
		\bosonic{X}^{10} \ar@{^{(}->}[rr] && X^{10}
	\mathrlap{,} }
	\]
   \vspace{-2mm} 
\noindent hence exhibiting the cyc$(\mathfrak{l}S^4)$ super-fluxes as its unique (rheonomic) extension.
\end{proof}

\smallskip 
\noindent Summarizing the content of the theorem:
\smallskip 
\begin{equation}
  \label{10DTabledStatement}
  \hspace{-3mm} 
 \colorbox{lightgray}
  {\!\!\!\!\!\!\!
     \def\arraystretch{1.2}
    \begin{tabular}{c}
      $(10\vert\mathbf{16}\oplus \overline{\mathbf{16}})$-dimensional 
      super-spacetimes
      $\big(
        X, (e, \psi, \omega)
      \big)$
      \\
      carrying super-flux 
      $(F_2^s, H_3^s, F_4^s, F_6^s, H_7^s)$
      (from \eqref{IIASuperFluxDensities}),
      \\
      satisfying the $\mathrm{cyc}(\mathfrak{l}S^4)$ Bianchi identity
      (from \eqref{CycS4Bianchis}).
    \end{tabular}
  \!\!\!\!}
  \hspace{.1cm}
  \Leftrightarrow
  \hspace{.1cm}
  \colorbox{lightgray}{\!\!\!\!\!
    \begin{tabular}{c}
      Solutions of IIA 10D SuGra on $\bosonic{X}$
      \\
        with flux sources $(F_2, H_3, F_4)$
      \\
      and their dual fluxes.
    \end{tabular}
    \!\!\!\!}
\end{equation}

\medskip 
Now, let $\mathcal{A}$ be an admissible flux quantization law for 11D supergravity with $\mathfrak{l}\mathcal{A} \cong \mathfrak{l} S^4$, such as the actual topological 4-sphere $S^4$ (Hypothesis H) or the one described recently in \cite{BSS26}, applied on super-spacetime via \cite[Thm. 3.1]{GSS24-SuGra}.  
Then, since the topological cyclification descends to the rational cyclification under rationalization \cite[Thm. A]{VPB85} (amplified in our current context in \cite[Prop. 3.2]{FSS17}\cite{BMSS19}\cite{SV3}) 
$$
\mathfrak{l}\big(\mathrm{Cyc}(\mathcal{A})\big) \; \cong \;  \mathrm{cyc}(\mathfrak{l}\mathcal{A}) \, ,
$$
we obtain the following compatibility of flux quantization between the full 11D and 10D IIA supergravities. 

\begin{corollary}[\bf Compatibility with 11D Flux Quantization]\label{Compatibilitywith11DFluxQuantization}
Globally completed $\mathrm{Cyc}(\mathcal{A})$ flux-quantized 10D IIA SuGra field configurations (cf. Claim \ref{FluxQuantizedSuperFieldsOf10dSugra}) are in exact correspondence with (globally completed)
$S^1$-symmetric 11D $\mathcal{A}$ flux-quantized SuGra field configurations (cf. \cite[Claim 1.1]{GSS24-SuGra}) via 
\vspace{-1mm} 
\begin{equation}
\label{VerticalCyclificationOxidationOfFluxQuantizedFieldConfigurations}
\hspace{-2.2cm} 
  \begin{tikzcd}[
    row sep=32pt, 
    column sep=68pt,
    ampersand replacement=\& 
  ]
    \&[-20pt] \& \& \&[+5pt]
    {
      \mathcal{A}
    }
    \ar[
      d,
      "{ \mathbf{ch}^{\mathcal{A}} }"
    ]
    \ar[dd, <->, squiggly, purple, bend left=60, "{ \mathrm{Oxi/Cyc} }"{pos=0.3}]
    \\
    \&
    \;\;
    { Y }
    \;\;\;
    \ar[
      rr,
      "{ (G_4^s,G_7^s) }",
      "{
        \scalebox{.65}{
          \begin{tabular}{c}
            \color{darkgreen} \bf $S^1$-symmetric super-C-field flux \eqref{CurvedG4G7}
          \end{tabular}
        }
      }"{swap}
    ]
    \ar[
      urrr,
      bend left=10,
      "{
        \scalebox{.65}{
          \begin{tabular}{c}
            \color{darkgreen} \bf local C-field charge
          \end{tabular}
        }
      }"{sloped},
      "{ \rchi^{M} }"{sloped, pos=.48,swap}
    ]
    \ar[dd, <->, squiggly, purple, "{ \mathrm{Ext/Red} }"]
    \& \&
    {
      \Omega^1_{\mathrm{dR}}( -;\mathfrak{l}S^4 )_{\mathrm{clsd}}
    }
    \ar[
      r,
      "{ \eta^{\,\scalebox{.5}{$\shape$}} }"
    ]
    \&
    {
      \shape \, \Omega^1_{\mathrm{dR}}( -; \mathfrak{l}S^4 )_{\mathrm{clsd}}
    }
    \ar[dd, <->, squiggly, purple, bend right=40, "{ \mathrm{oxi/cyc} }"{swap, pos=0.3}]
    \\[20pt] 
    \&[-20pt]
    \& \&
    \&[+5pt]
    {
      \mathrm{Cyc}(\mathcal{A})
    }
    \ar[
      d,
      "{ \mathbf{ch}^{\mathrm{Cyc}(\mathcal{A})} 
      }"{xshift=0pt}
    ]
    \\
    \&
    \;\;
    { X }
    \;\;\;
    \ar[
      rr,
      "{ (F_2^s, H_3^s F_4^s,F_6^s, H_7^s) }",
      "{
        \scalebox{.65}{
          \begin{tabular}{c}
            \color{darkgreen} \bf super-NS/RR-field flux \eqref{CurvedNS/RRCocycles}
          \end{tabular}
        }
      }"{swap}
    ]
    \ar[
      urrr,
      bend left=10,
      "{
        \scalebox{.65}{
          \begin{tabular}{c}
            \color{darkgreen} \bf local NS/RR-field charge
          \end{tabular}
        }
      }"{sloped},
      "{ \rchi^{IIA} }"{sloped, pos=.48,swap}
    ]
    \& \&
    {
      \Omega^1_{\mathrm{dR}}( -; \mathrm{cyc}(\mathfrak{l}S^4) )_{\mathrm{clsd}}
    }
    \ar[
      r,
      "{ \eta^{\,\scalebox{.5}{$\shape$}} }"
    ]
    \&
    {
      \shape \, \Omega^1_{\mathrm{dR}}( -; \mathrm{cyc}(\mathfrak{l}S^4) )_{\mathrm{clsd}} \, ,
    }
  \end{tikzcd}
\end{equation}
where, for brevity, we have omitted the (higher) gauge fields -- homotopies $(\widehat{C}_3^s, \widehat{C}_6^s)$ and $(\widehat{A}_1^s,\widehat{B}_2^s,\widehat{A}_3^s,\widehat{A}_5^s,\widehat{B}_6^s)$ filling each of the diagrams, respectively.  
\end{corollary}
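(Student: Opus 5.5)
The plan is to assemble the correspondence from three bijections already available, one for each layer of the diagrams: the flux densities, the charge maps into the classifying spaces, and the homotopies (gauge potentials) filling the diagrams.

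\textbf{Layer 1: flux densities.} On this layer we apply Prop. \ref{NonTrivialCircleBundleCyclification/Oxidation}(i), with the connection $e'$ supplied by the partial gauge fixing of \cite[Cor. 3.16]{Gi26}. Combined with Thm. \ref{10dIIASugraEoMFromSuperFluxBianchiIdentity}, it gives a bijection between
\begin{itemize}
\item $S^1$-symmetric closed $\mathfrak{l}S^4$-valued super-fluxes $(G_4^s,G_7^s)$ on $Y$ that satisfy the 11D SuGra constraints, and
\item closed $\mathrm{cyc}(\mathfrak{l}S^4)$-valued super-fluxes on $X$ of the form \eqref{IIASuperFluxDensities} whose 2-form component $F_2$ is fixed.
\end{itemize}

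\textbf{Layer 2: charges.} Here we use the topological adjunction between $S^1$-equivariant maps out of the total space of a principal bundle and maps over $BS^1$ into the cyclification. Concretely, $S^1$-equivariant maps $Y \to \mathcal{A}$ (with $S^1$ acting trivially on $\mathcal{A}$) correspond to maps $X \simeq Y/\!/S^1 \to \mathrm{Cyc}(\mathcal{A}) = \mathrm{Map}(S^1,\mathcal{A})/\!/S^1$ lying over the classifying map $X \to BS^1$ of $Y$. This is the standard Hom-equivalence for $\mathrm{Cyc}$, and it identifies $\rchi^{M}$ with $\rchi^{IIA}$.

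\textbf{Compatibility of the two layers.} We then check that the differential characters intertwine these adjunctions. The point is that $\mathbf{ch}^{\mathrm{Cyc}(\mathcal{A})}$ is obtained from $\mathbf{ch}^{\mathcal{A}}$ by applying $\mathrm{Cyc}$, followed by the rationalization equivalence $\mathfrak{l}\,\mathrm{Cyc}(\mathcal{A}) \cong \mathrm{cyc}(\mathfrak{l}\mathcal{A})$ of \cite[Thm. A]{VPB85}. On the de Rham side, the adjunction is exactly the $\mathrm{rdc}_{F_2}/\mathrm{oxd}_{F_2}$ bijection of Prop. \ref{NonTrivialCircleBundleCyclification/Oxidation}. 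So the right-hand vertical squiggly arrow commutes with the characters, up to a coherent homotopy.

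\textbf{Layer 3: homotopies.} The homotopies filling the diagrams are matched by Prop. \ref{NonTrivialCircleBundleCyclification/Oxidation}(ii), which gives an isomorphism of $\infty$-groupoids of concordances. Locally this is made explicit by Prop. \ref{CycS4CoboundariesAreRRpotentials}:
\[
C_3 = A_3 + e' B_2, \qquad C_6 = B_6 + e' A_5 .
\]
With this, a filling $(\widehat{C}_3^s,\widehat{C}_6^s)$ of the 11D diagram is sent to a filling $(\widehat{A}_1^s,\widehat{B}_2^s,\widehat{A}_3^s,\widehat{A}_5^s,\widehat{B}_6^s)$ of the IIA diagram, and conversely. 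The $\widehat{A}_1^s$ component is determined by $e'$ itself.

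\textbf{Main obstacle.} I expect the hardest step to be the coherence in the compatibility check. One has to verify that the model-level (simplicial) adjunction on $\Omega^1_{\mathrm{dR}}(-;\mathfrak{h})_{\mathrm{clsd}}$ agrees, after applying $\shape$, with the topological adjunction for $\mathrm{Cyc}$. That requires the identification of \cite[Prop. 3.2]{FSS17} to be natural in maps over $BS^1$, not merely an objectwise equivalence. I would first try to prove this by working fiberwise over a good cover of $X$, where $Y$ trivializes and the statement reduces to Lemma \ref{CyclificationOxidationOnSuperManifolds}, and then glue using the gauge-change formulas of Rem. \ref{EffectOfConnectionChoice}.
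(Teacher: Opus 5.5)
Your route is the paper's: there the corollary is read off, with no separate argument, from Thm.~\ref{10dIIASugraEoMFromSuperFluxBianchiIdentity}, Prop.~\ref{NonTrivialCircleBundleCyclification/Oxidation}(i)--(ii) and the rationalization equivalence $\mathfrak{l}\,\mathrm{Cyc}(\mathcal{A})\cong\mathrm{cyc}(\mathfrak{l}\mathcal{A})$ of \cite{VPB85}. Your three layers are the three squiggly arrows of the diagram made explicit, and the paper does not address your coherence worry at all.

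There is, however, a concrete error in Layer 2, and taken literally it breaks your compatibility check. If $S^1$ acts trivially on $\mathcal{A}$, then $S^1$-equivariant maps $Y\to\mathcal{A}$ (strict or homotopy-coherent) are the same as maps $Y/\!/S^1\simeq X\to\mathcal{A}$. These are maps $X\to\mathcal{A}\times BS^1$ over $BS^1$, and under the cyclification they land in the constant-loop locus $\mathcal{A}\times BS^1\hookrightarrow\mathrm{Cyc}(\mathcal{A})$. Rationally, this inclusion is dual to the quotient of $\mathrm{CE}\big(\mathrm{cyc}(\mathfrak{l}S^4)\big)$ by the dg-ideal generated by $\mathrm{s}g_4,\mathrm{s}g_7$. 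So the character of such a $\rchi^{IIA}$ has vanishing $H_3$- and $F_6$-components, and the filling homotopy forces $[H_3]=0$. In your Layer 1, by contrast, $H_3^s=-\int_{S^1}G_4^s$ and $F_6^s=-\int_{S^1}G_7^s$ are generically non-trivial. You would therefore only capture the zero-winding sector, and the two layers would not match.

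The adjunction you want is for plain maps: $\mathrm{Map}(Y,\mathcal{A})\simeq\mathrm{Map}_{/BS^1}\big(X,\mathrm{Cyc}(\mathcal{A})\big)$, with $Y$ the homotopy fiber of the classifying map $X\to BS^1$. Here $\mathrm{Cyc}$ is right adjoint to pullback along $\ast\to BS^1$. In the 11D configurations the $S^1$-symmetry constrains the super-spacetime and the flux forms, not the charge map $\rchi^M$.

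The strict $S^1$-invariance of forms used in Prop.~\ref{NonTrivialCircleBundleCyclification/Oxidation} is homotopically harmless. Since $S^1$ is compact and connected, $\Omega^\bullet_{\mathrm{dR}}(Y)^{S^1}\hookrightarrow\Omega^\bullet_{\mathrm{dR}}(Y)$ is a quasi-isomorphism. This averaging quasi-isomorphism is what your coherence step needs to compare the invariant-forms model of $\mathrm{rdc}_{F_2}/\mathrm{oxd}_{F_2}$ with plain maps out of $Y$. With Layer 2 corrected this way, your proposal is a faithful and more detailed version of the paper's argument.
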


\begin{remark}[\bf A curious automorphism of the cyclified $L_\infty$-algebra]
It turns out that the cyclified 4-sphere IIA classifying $L_\infty$-algebra enjoys a non-trivial, non-linear automorphism 
\begin{align*}
Z_{\mathrm{IIA}} \, : \, \mathrm{cyc}(\mathfrak{l}S^4) \xrightarrow{\qquad \sim \qquad}\, &  \mathrm{cyc}(\mathfrak{l}S^4) 
\\[-2pt]
\omega_2 \qquad \longmapsfrom \qquad  &  \,  \omega_2 
\\[-1pt]
sg_4 \qquad \longmapsfrom \qquad  &  \,   sg_4 
\\
 g_4 + (\omega_2)^2 \qquad \longmapsfrom \qquad  &  \,  g_4 
\\
sg_7 - \omega_2 \, g_4 -\tfrac{1}{2} (\omega_2)^3\qquad \longmapsfrom \qquad  &  \,  sg_7 
\\
g_7 + g_4 \, sg_4 g_4 + (\omega_2)^2 \, sg_4  \qquad \longmapsfrom \qquad  &  \,   g_7  \, ,
\end{align*}
\noindent
as can be easily checked by verifying that the corresponding differential relations (``Bianchi identities'') are preserved. 
It follows, then, that post-composing the IIA NS/RR-fluxes from Eq. \ref{IIASuperFluxDensities} with this automorphism  yields new cyclified 4-sphere cocycles on the same super-spacetime $\big(X^{10},(e,\psi,\omega)\big)$:
$$
Z_{\mathrm{IIA}} \, \circ \, (F_2^s, H_3^s, F_4^s, F_6^s, H_7^s) \;\; : \;\; 
TX^{10} \xrightarrow{\quad\quad }\mathrm{cyc}(\mathfrak{l}S^4) \xrightarrow{\quad} \mathrm{cyc}(\mathfrak{l}S^4)\,.
$$
We stress, however, that the resulting $\mathrm{cyc}(\mathfrak{l}S^4)$-cocycles will have a different component-form when explicitly expanded in the super-coframe basis, and hence will \text{not} correspond to new solutions of the same IIA SuGra equations of motion. That is, this is \textit{not} to be interpreted as a newly discovered \textit{bona-fide symmetry} of IIA SuGra. Instead, solving the superspace Bianchi identities for these new form of NS/RR-super-fluxes will result into a different set of field equations, but which will necessarily have an isomorphic space of solutions to  IIA supergravity (by composing with the inverse automorphism)! Hence, at best, the resulting field equations might be called a new (on-shell) ``dual'' formulation of 10D IIA supergravity.  We leave the study of the explicit form of the field equations, and  physical importance (or the contrary), of this new formulation for future works.
\end{remark}






\vspace{-6mm}
\paragraph{Comparison to other 10D IIA superspace formulations.} 
Apart from the usual range in conventions and field redefinitions of IIA supergravity\footnote{Such as extracting prefactors proportional to $\Phi$ in front of the induced 10D coframe $e$, and similarly in the corresponding splitting of the 11D gravitino $\psi_Y$. The allowed possibilities in the ``Einstein frame'' are summarized in 
\cite[(B.9)]{BNS04}, following \cite{LPSS}. In particular, scaling  $e \mapsto \Phi^{-\tfrac{1}{8}} e$ etc., should be in line with the original sources (\cite{CW84}\cite{HN85}) of the 11D SuGra $\rightarrow $ 10D IIA SuGra reduction on the underlying bosonic spacetime $\bosonic{X}\hookrightarrow X$.}, and further expansions of the formulas above\footnote{For instance, one explicitly splits $\psi$ and $\lambda$ into the corresponding Spin$(1,9)$-irreps. 
Since  $\Gamma^{\ten}$ acts as a projector on each irrep, 
 we may further expand the pairings appearing via $(\overline{\psi} \Gamma^{\ten} \psi) = - (\overline{\psi}_L \psi_R) + (\overline{\psi}_R \psi_L)  $  etc. 
 (cf. \cite[Lem. 3.1]{GSS24-TDuality}).}, we highlight the following essential differences from the IIA superspace formulations already reported in the literature, at least at the level of local coordinate charts:

\begin{itemize}[leftmargin=.8cm]
 \item[\bf (i)] The formulation of \cite{DFGT} 
seems closer to ours in spirit, as it 
may be interpreted as employing a different classifying $L_\infty$-algebra\footnote{The authors in the above source use the terminology ``free differential algebra'' instead.} defined with different generators. In particular, there appears to be no (dual) RR super 6-form flux $F_6^s$, nor a (dual) NS super 7-form flux $H_7^s$ \cite[(2.8)-(2.9)]{DFGT} -- hence it is not manifestly duality-symmetric. Curiously, the claim therein (Appendix B) is that their formulas follow by dimensional reduction of a duality-symmetric 11D cocycle similar to ours from \eqref{CurvedG4G7}. However, it seems to us that their extra imposition of a \textit{choice} of ``rheonomic parametrization'' is implemented \textit{prior to dimensional reduction}, hence eliminating some of the fluxes. In our approach, the rheonomic parametrization is already implicit in the $S^4$-cocycle condition in 11D, and properly taking its full dimensional reduction via cyclification results instead into a $\mathrm{cyc}(S^4)$-cocycle, as expected by the flat tangent-space wise case. A similar ``string frame'' IIA superspace formulation is reported in \cite{NO11}, where it is further claimed (cf. p. 5, (A.1) therein) that the original source by \cite{DFGT} had missed several terms in their corresponding rheonomic parametrization of the curvature forms.

 \item[\bf (ii)] The paper \cite[Appendix C]{HS05} explicitly presents calculations of dimensional reduction on 11D superspace, albeit only reducing the corresponding $G_4^s$. The formulas they obtain for $F_2^s, H_3^s$ and $F_4^s$ are compatible with ours from \eqref{IIASuperFluxDensities}. These seem further compatible with the super-coframe expansions of the super-fields reported earlier for IIA superspace supergravity in \cite{CGO87} where, however, the method of derivation is not explained.

 \item[\bf (iii)] Finally, the authors in \cite{CGNSW97} follow the above sources and introduce \textit{by hand all} the dual NS and RR fluxes, hence bringing their formulation to a duality-symmetric form. However, their form of the dual NS flux $H_7^s$ \cite[(3.11)]{CGNSW97} explicitly differs from the (natural) dimensionally reduced form of ours from \eqref{IIASuperFluxDensities}. Indeed, they report a non-trivial component along the ``$\psi \wedge e^6$''-leg which, as shown in the proof of Thm. \ref{10dIIASugraEoMFromSuperFluxBianchiIdentity}, simply cannot appear by the dimensional reduction $G_7^s$.\footnote{We make no claim on the validity of the overall formulation therein, but simply that it is not equivalent to ours.}
\end{itemize}


\medskip
\noindent
Let us now close this comparative discussion by stressing the advantages and structural improvements of our approach to the super-spacetime formulation of 10D IIA, and further supergravities in general:

\smallskip 
\begin{itemize}[leftmargin=.8cm]
\item[\bf (i)] Firstly, our superspace formulations for 11D and its descendant IIA have been explicitly proven to be \textit{bi-directionally equivalent} to the corresponding theories on the underlying bosonic spacetimes. This is in contrast to what is recorded in existing literature, which is only the implication of the spacetime equations. All further superspace dimensional reductions in our formalism (see \S \ref{OutlookSection}) will manifestly enjoy the same property.

\item[\bf (ii)] Secondly, compared to all existing approaches, and as described in much more detail in  \S\ref{IntroductionSection}, our viewpoint starts from an a priori (conceptual) splitting of the fields involved, namely: 
\begin{itemize}
    \item[{\bf(a)}] Those describing a bare super-spacetime $(X, e, \psi, \omega)$ via the super-coframe and the Spin connection, namely as a (super-)Cartan geometry for the (super-)subgroup inclusion $\mathrm{Spin}(1,d)\hookrightarrow \mathrm{ISO}(\FR^{1,d\vert \mathbf{N}})$, with vanishing torsion in the bosonic direction, potentially equipped with the corresponding dilaton/dilatino pair $(\Phi,\lambda)$ due to winding along the higher-dimensional theory's fiber.
\item[{\bf(b)}] The higher form-fluxes entering (initially) only through their \textit{curvatures}, whose form can be deduced by the corresponding super-invariant cocycles which exist on the flat spacetime model $L_\infty$-algebra. The latter are all determined by that of 11D, the rational 4-sphere $\mathfrak{l}S^4$, its cyclification for 10D, and further its toroidifications for lower dimensions (see \S\ref{OutlookSection}).   
\end{itemize}
\end{itemize} 

\medskip

\noindent {\bf Fate of the remaining fluxes under non-abelian flux quantization.}
We include here a brief discussion regarding the above-advocated flux quantization. 

   \begin{itemize}
        \item[\bf (i)]
Contrary to the standard twisted $K$-theory suggestion, dimensional reduction of the 11D topological flux content necessarily 
makes the $H_7$ field appear, hence contributing non-trivially to the admissible full topological flux quantization laws of the IIA supergravity field content (cf.  \cite{BMSS19}). 

 \item[\bf (ii)]
 Moreover, the $F_8$ field does not directly appear upon dimensional reduction from 11D; hence, its existence (being in fact the Hodge dual to $F_2$) does not impose conditions on the allowable flux quantization laws for the field content. It should only be defined -- a posteriori to flux quantization -- as the Hodge dual of $F_2$.

 \item[\bf (iii)]
From our super-spacetime perspective, rather than this being a surprise, it is a confirmation of the special nature of $F_2$ and $F_8$, namely that they are of ``gravitational origin'', the former arising via the $S^1$-fiber aligned coframe component of the 11D coframe, which is ``shrunk'' yielding the 10D super-spacetime, and the latter being the Hodge dual of $F_2$. Our formulation also automatically detects $F_2$ being already flux-quantized in $\mathrm{BU}(1)$, as expected by the underlying $S^1$-principal bundle topology, and hence further its Hodge dual should not receive -- a posteriori -- any additional independent flux quantization laws.
\end{itemize} 

\section{Outlook}\label{OutlookSection}

We have established, via the natural superspace formulation of Thm. \ref{10dIIASugraEoMFromSuperFluxBianchiIdentity}, that the cyclification $\mathrm{cyc}(\mathfrak{l}S^4)$ turns out to be the rational classifying $L_\infty$-algebra of 10D IIA supergravity. This suggests that for \textit{all descendant supergravity theories} arising as reductions of the parent maximal 11D supergravity theory, the corresponding classifying $L_\infty$-algebras should also be systematically determined by the \textit{toroidifications} of the parent classifying rational 4-sphere $\mathfrak{l}S^4$ (see \cite{SV1}\cite{SV2}\cite{GSS24-TDuality})
$$
\mathrm{tor}^k(\mathfrak{l}S^4) \, , 
$$
to be justified by a corresponding tree of reduced superspace formulations obtained by an appropriately modified ``toroidification'' version of Prop. \ref{NonTrivialCircleBundleCyclification/Oxidation}. 

\smallskip 
Then, since for any space $\mathcal{A}$ the rationalization of its $k$-toroidifications, $\mathrm{Tor}^k \mathcal{A} := \mathrm{Map}\big(T^k, \mathcal{A} \big) /\!/ T^k$, coincides with the (rational) $k$-toroidification of its rationalization \cite[Thm 2.6]{SV-loop}	
$$
\mathfrak{l}\, \mathrm{Tor}^k(\mathcal{A}) \, \cong \, \mathrm{tor}^{k}(\mathfrak{l} \mathcal{A}) \, ,
$$
this naturally leads to the following generalized hypothesis for the manifestly compatible tree of flux quantizations, in appropriately \textit{$k$-toroidified} non-abelian cohomology theories, for all descendant theories of 11D supergravity: 

\medskip 
\hspace{-8mm}
\adjustbox{
  margin=1pt,
  bgcolor=lightgray
}{\!\!\!\!\!
\def\arraystretch{1.1}
\begin{tabular}{l} 
\textit{Given a single choice of an admissible non-abelian flux quantization law $\mathcal{A}$ for the C-field in 11D super- }
\\
\textit{gravity, hence of rational homotopy type that of the 4-sphere, then all flux quantizations of the resulting }
\\
\textit{dimensionally-reduced higher gauge fields in each descendant supergravity theory in (11-k)D, obtained by }
\\
\textit{reduction on a $k$-torus principal bundle, are given by corresponding full toroidifications }
\\
\hspace{8cm} $
\mathrm{Tor}^k(\mathcal{A})\, .
$
\\
\textit{It follows, then, that all of these globally completed (higher gauge) supergravities are automatically com- }
\\
\textit{patible upon re-oxidation and toroidification. }
\end{tabular}
\!\!\!\!\!\!}



\medskip 
Indeed, the justification of this proposal follows by a straightforward (but computationally cumbersome) generalization of our currently established results. We briefly outline how this works in the following, without recording the explicit calculations and proofs here.

\vspace{-1mm} 
\paragraph{Toroidal reduction of flux fields via toroidification of classifying $L_\infty$-algebras.}
Firstly, it should not be surprising that an immediate generalization of our curved oxidation/cyclification result from Prop. \ref{NonTrivialCircleBundleCyclification/Oxidation} to the case of toroidal super-principal $T^k=U(1)^{\times k}$-bundles $Y\rightarrow X$ is within reach. In particular, for any choice of $U(1)^{\times k}$-connection $e'=\big(\, {e'}_1,\cdots,{e'}_k\,\big) \in \Omega^{1}_\mathrm{dR}(Y;\FR^k)$ of curvature $F_2= \dd e'$, this will yield a bijection between (cf. the tangent-space case from \cite[Prop. 2.60]{GSS24-TDuality}) 

\vspace{1mm} 
\begin{itemize} 
\item[\bf (a)] $T^k$-invariant maps of super-$L_\infty$-algebroids out of the tangent Lie algebroid $TY$ into $\mathfrak{h}$,  
  \vspace{1mm} 
\item[\bf (b)] maps of super-$L_\infty$-algebroids out of $TX\cong T\big(Y/T^k\big)$ into $\mathrm{tor}^k(\mathfrak{h})$ that preserve the 
2-form $F_2$,
  \begin{equation}
    \label{NontrivialLinftyAlgebroidToroidificationHomIsomorphism}
    \hspace{-1cm} 
    \Big\{\!\!
    \begin{tikzcd}
      TY
      \ar[
        rr,
        "{ F }"
      ]
      &&
      \mathfrak{h}
    \end{tikzcd}
   \! \!\Big\}
    \begin{tikzcd}[
      column sep=85pt
    ]
      \ar[
        r,
        shift left=5pt,
        "{  \scalebox{.7}{\color{darkgreen}
            \bf
            reduction}\;\;
          \mathrm{rdc}_{F_2}
        }",
        "{ \sim }"{swap, yshift=-2pt}
      ]
      \ar[
        r,
        <-,
        shift right=5pt,
        "{ \scalebox{.7}{
            \color{darkgreen}
            \bf
            oxidation
          }\;\;
          \mathrm{oxd}_{F_2}
        }"{swap},
      ]
      &
      {}
    \end{tikzcd}
    \bigg\{\!\!\!
    \begin{tikzcd}[row sep=-3pt, column sep=large]
      TX
      \ar[
        rr,
        "{ \widetilde F }"
      ]
      \ar[
        dr,
        "{ F_2 }"{swap}
      ]
      &&
      \mathrm{tor}^k(\mathfrak{h})
      \ar[
        dl,
        "{ \omega_2 }"
      ]
      \\
      &
      b \mathbb{R}^k
    \end{tikzcd}
    \!\!\!\bigg\}.
  \end{equation}

That is, a bijection\footnote{Which, analogously to part {\bf{(ii)}} of Prop. \ref{NonTrivialCircleBundleCyclification/Oxidation}, will extend to an isomorphism of the corresponding integrating $\infty$-groupoids of higher gauge potentials and transformations.} of $T^k$-invariant, closed $\mathfrak{h}$-valued forms on $Y$ and closed $\mathrm{cyc}(\mathfrak{h})$-valued forms on $X\cong Y/T^k$ with fixed curvature 2-form component $F_2 := \dd e'$
$$
\Omega^1_{\mathrm{dR}}\big(Y; \,\mathfrak{h}\big)_{\mathrm{clsd},\, \mathrm{inv.}} \; \cong_{e'} \; \Omega^1_{\mathrm{dR}}\big(X; \,\mathrm{tor}^k(\mathfrak{h})\big)_{ \mathrm{clsd}, \, F_2}\, .
$$
\end{itemize}

\vspace{-3mm} 
\paragraph{SuGra in $(11-k)D$ via $\mathrm{tor}^k(\mathfrak{l} S^4)$-valued cocycles on super-spacetime.}
Then, starting from any $T^k$-$(11\vert \bf{32})$ super-spacetime $(Y, (e_Y,\psi_Y,\omega_Y)\big)$ with a $T^k$-principal bundle structure, due to the abelian structure of the group, we may 
again employ the result of \cite[Cor. 3.16]{Gi26}. That is, we may again partially gauge fix along the lines of pp. 12-13, reducing the structure group to 
$$
\mathrm{Spin}(1,10-k)\longhookrightarrow\mathrm{Spin}(1,10)\, .
$$
This process will, in particular, produce the $T^k$-connection (see \cite{Gi26} for details)
$$
e' \, \in \, \Omega^{1}_\mathrm{dR}(Y; \FR^k)
$$ 
necessary to pass invariant and closed $\mathfrak{l}S^4$-valued flux-forms through the reduction-by-toroidification of the bijection of Eq. \eqref{NontrivialLinftyAlgebroidToroidificationHomIsomorphism}.

\smallskip 
Finally, then, we may employ this to reproduce a version of Thm. \ref{10dIIASugraEoMFromSuperFluxBianchiIdentity}. Starting from any $T^k$-symmetric on-shell 11D supergravity configuration, (rheonomically) expressed as a $\mathfrak{l}S^4$-cocycle $(G_4^s, G_7^s)$ of the form \eqref{CurvedG4G7}, we pass through the isomorphism  \eqref{NontrivialLinftyAlgebroidToroidificationHomIsomorphism} producing a $\mathrm{tor}^k(\mathfrak{l}S^4)$-cocycle on the base super-spacetime $X$, of a particular form of super-coframe component expansion (cf. \eqref{IIASuperFluxDensities}). We do not attempt to write these down explicitly in this outlook, but we stress that they will follow the same principles as noted for the 11D (see \eqref{CurvedG4G7}) and IIA (see \eqref{CurvedNS/RRCocycles}) curved and global super-cocycles. Namely, they will form a (uniquely determined) combination of 

\begin{itemize} 
\item[\bf{(i)}] a collection of flux densities on {\color{darkblue} bosonic but non-trivial} underlying manifolds $\bosonic{X}\hookrightarrow X$, which satisfy the corresponding duality-symmetric tor$^k(\mathfrak{l}S^4)$-Bianchi identities, 

\item[\bf(ii)] which globally extend the canonical supersymmetric contributions arising on the corresponding {\color{darkorange} flat but super-}tangent spaces\footnote{Here the superscript on ${\bf{32}}^{\scriptscriptstyle k\hookrightarrow 11}$ denotes each of the corresponding $\mathrm{Spin}(1,10-k)$-representations, into which the original Majorana $\bf{32}$ in 11D branches to for each $k$.}
$$
\FR^{(11-k)\vert {\bf{32}}^{\scriptscriptstyle k\hookrightarrow 11}}, 
$$ 
satisfying the analogous identities; 

\item[\bf{(iii)}] 
along with certain additional subtle ``{\color{olive} mixed component extensions}'', reflecting the fact that the total global extensions are really reductions of 11D fluxes, whose super-coframe expansions {\color{olive}wind non-trivially} along the toroidal $T^k$-fibers.
\end{itemize} 

In any case, the main point here is that the space of such closed $\mathrm{tor}^k(\mathfrak{l}S^4)$-valued forms on $X$ will be automatically  \textit{equivalent} (bi-directionally!) to the space of (full) on-shell backgrounds of the corresponding $(11-k)D$ supergravity theory on $\bosonic{X}\hookrightarrow X$, thus suggesting and allowing for the choice of flux quantization in the topological toroidification $\mathrm{Tor}^k(\mathcal{A})$ of the 11D non-abelian cohomology theory $\mathcal{A}$.   

\vspace{.5cm} 
\noindent {\bf Acknowledgements.} \,  
We thank Urs Schreiber for useful discussions.

\medskip


\end{document}